\documentclass[aps,prl,10pt,superscriptaddress,twocolumn,preprintnumbers,floatfix,nofootinbib, nobalancelastpage]{revtex4-2}
\usepackage{mathtools}
\usepackage{amsthm}
\usepackage{amsfonts,amssymb }
\usepackage{graphicx} 
\usepackage{physics}
\usepackage{xcolor}
\usepackage{dsfont}
\usepackage{bm}
\usepackage{comment}
\usepackage{svg}
\usepackage{dsfont}
\newtheorem{theorem}{Theorem}
\newtheorem{observation}{Observation}

\usepackage[hidelinks]{hyperref}

\theoremstyle{definition}

\renewcommand{\selectlanguage}[1]{}
\begin{document}
\title{Quantum thermalization achieves optimal approximate quantum error correction}
\author{Aditi Venkatesh}
\thanks{Electronic address: avenkatesh@g.harvard.edu}
\affiliation{Department of Physics, Harvard University, Cambridge, MA, 02138}
\affiliation{Center for Theoretical Physics --- a Leinweber Institute, Massachusetts Institute of Technology, Cambridge, MA, 02139}

\author{Richard R. Allen}
\affiliation{Center for Theoretical Physics --- a Leinweber Institute, Massachusetts Institute of Technology, Cambridge, MA, 02139}

\author{Saúl Pilatowsky-Cameo}
\affiliation{Center for Theoretical Physics --- a Leinweber Institute, Massachusetts Institute of Technology, Cambridge, MA, 02139}

\author{Bingtian Ye}
\affiliation{Center for Theoretical Physics --- a Leinweber Institute, Massachusetts Institute of Technology, Cambridge, MA, 02139}

\author{Soonwon Choi}
\affiliation{Center for Theoretical Physics --- a Leinweber Institute, Massachusetts Institute of Technology, Cambridge, MA, 02139}
\preprint{MIT-CTP/6104}

\begin{abstract}
    Quantum thermalization explains how an isolated many-body system naturally evolves towards a thermal state, rendering information about the initial conditions inaccessible to local measurements. This is precisely the mechanism utilized in quantum error correction, where information is protected by design through a nonlocal encoding. In this work, we leverage this connection to port the rigorous framework of (approximate) quantum error correction to the study of quantum thermalization. Treating typical late-time states as codewords, we characterize the error-correcting properties of generic thermalizing dynamics. We numerically uncover a universal relationship between the encoding rate, distance, and thermal entropy density of the emergent code. At infinite temperature, this universal curve saturates the quantum Singleton bound, achieving the same optimal limit as Haar-random codes. At finite temperature, we introduce a code family based on the Scrooge ensemble, the natural thermal analogue of the Haar ensemble, and prove it saturates the entropic quantum Singleton bound, establishing this family as optimal within entropic constraints. Our extracted universal curve independently saturates this same bound, revealing that finite-temperature thermalization is itself optimal. Finally, we show how conserved quantities limit the error-correcting behavior of thermalization: codewords with differing energies, or other conserved charges, leak only classical information, and correctability persists until the difference reaches the scale of thermal fluctuations. Our results reveal a universal optimal coding structure in thermalizing dynamics, while introducing new optimal codes that achieve fundamental limits of approximate quantum error correction. 
\end{abstract}

\maketitle

Quantum thermalization---the process by which an isolated quantum system effectively reaches quantum equilibrium by acting as its own bath \cite{polkovnikov_colloquium_2011, gogolin_equilibration_2016, nandkishore_many-body_2015, abanin_many-body_2019, alhambra_quantum_2023}---has been characterized using a variety of tools and frameworks, including quantum chaos and random matrix theory \cite{dalessio_quantum_2016, borgonovi_quantum_2016, linden_quantum_2009, srednicki_chaos_1994}, information scrambling \cite{maldacena_bound_2016}, and free probability \cite{pappalardi_eigenstate_2022}. While these approaches provide complementary perspectives to describe the dynamics of interacting many-body systems, a quantitative and operational understanding of thermalization remains incomplete.

Rapid advances in quantum information science provide us with new avenues for studying nonequilibrium many-body dynamics. State and unitary designs serve as the quantitative language for novel notions of quantum ergodicity, from Hilbert-space ergodicity and deep thermalization of state ensembles to pseudorandomness of the dynamics~\cite{dankert_exact_2009,brandao_local_2016,roberts_chaos_2017,mok_nature_2026,pilatowsky-cameo_complete_2023,pilatowsky-cameo_hilbert-space_2024,ho_exact_2022,ippoliti_dynamical_2023,cotler_emergent_2023,mark_maximum_2024}. In parallel, quantum error correction (QEC) provides a rigorous theory of the protection and recovery of quantum information in the presence of noise~\cite{leung_approximate_1997,kretschmann_information-disturbance_2006,Beny2010,crepeau2005approximate,bergamaschi_approaching_2022}. Despite these developments, the role of QEC as an operational framework for probing nonequilibrium many-body dynamics remains comparatively underexplored.

\begin{figure}[t!]
\includegraphics[width=\columnwidth]{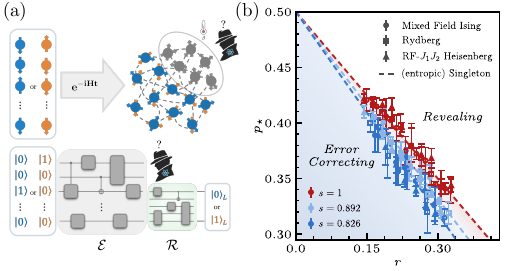}
  \caption{\textit{Thermalization as error correction.} (a) Under chaotic dynamics, initially distinct many-body states become locally indistinguishable at late times, forming codewords encoded by $U(t)=e^{-iHt}$. Erasure of a subsystem then acts as noise, while the encoded information can be recovered from the complementary subsystem. (b) Universal rate-distance, $r$-$p_\star$, tradeoff for thermalization codes generated by various ergodic Hamiltonians (parameters in the SM~\cite{SupplementalMaterial}). Thresholds from all Hamiltonians collapse onto a universal curve determined by entropy density $s$, which we identify as the (entropic) quantum Singleton bound -- showing that thermalization forms an asymptotically optimal approximate quantum error correcting code.}
    \label{fig:01}
\end{figure}

The connection between QEC and quantum thermalization initially seems counterintuitive: thermalization arises from generic, unstructured dynamics, while QEC relies on engineered, structured codes. Yet, both share an underlying information-theoretic principle. QEC hides logical information from local noise by making code states locally indistinguishable; thermalization does the same as subsystems equilibrate \cite{hayden_black_2007, ziman_quantum_2001, scarani_thermalizing_2002}.

In this work, we analyze quantum thermalization through the lens of QEC. We begin with a simple but rigorous observation: the late-time states of a thermalizing quantum system form an approximate quantum error-correcting code. Motivated by this connection, we numerically investigate when this correctability emerges in thermalizing many-body dynamics. We find that error correction is robust to differences in conserved quantities between codewords, failing only once these exceed thermal fluctuations. Within this regime, we uncover a universal relation between the encoding rate and distance of the emergent quantum code and the thermal entropy density. In the infinite-temperature limit, this relation reduces to the quantum Singleton bound \cite{knill_theory_1997}, achieved also by Haar-random codes \cite{ma_haar_2025}. At finite temperature, our identified universal relation deviates from the quantum Singleton bound. To obtain an analytic benchmark for this regime, we introduce \textit{Scrooge codes} and rigorously prove that these codes saturate the \textit{entropic quantum Singleton bound} \cite{grassl_entropic_2022}, establishing the optimal rate–distance tradeoff at fixed entropy density. The agreement between this rigorous bound and our numerical results shows that thermalization realizes the same optimal tradeoff. Together, these results identify thermalization as a natural mechanism for optimal approximate quantum error correction.

Our approach takes a distinct perspective from previous works that relate QEC to thermal physics. Prior works have taken the eigenstates of Hamiltonians satisfying the eigenstate thermalization hypothesis (ETH)~\cite{rigol_thermalization_2008} as codewords: ETH implies local indistinguishability within an energy window, hence approximate correctability of the corresponding code~\cite{ bao_eigenstate_2019,zheng_near-optimal_2024}. Other works have bounded correctability via covariance under continuous symmetries~\cite{faist_continuous_2020,liu_quantum_2023}, or via out-of-time-order correlators and the chaos bound~\cite{Qasim2025}. In our setting, however, the encoder is the physical time-evolution operator generated by an ergodic Hamiltonian, so the code states are dynamically produced from simple initial states, rather than selected from mid-spectrum eigenstates, which are not efficiently preparable. This allows us to ask whether approximate correctability can arise from thermalization itself, rather than engineered codewords.

\textit{Quantum thermalization as approximate quantum error correction.---} 
We consider a system of $N$ qubits, which we denote by $S$, interacting via a local Hamiltonian $H$, which generates unitary evolution $U(t)=e^{-iHt}$. Quantum thermalization states that simple initial states $\ket{\psi}$ (e.g., product states) evolve into states $\ket{\psi(t)}=U(t)\ket{\psi}$ which are locally indistinguishable from the Gibbs state
\begin{equation}
    g_\beta=\frac{e^{-\beta H}}{\tr(e^{-\beta H})},
\end{equation}
where $\beta$ is the effective inverse temperature, obtained by matching the initial state energy to the thermal energy, $\tr(Hg_\beta)=\expval{H}{\psi}$. Local indistinguishability means that, for any local region $E$ with complement $B$, for sufficiently late times,
\begin{equation}
\label{eq:thermalization}
    \norm{\tr_B(\dyad{\psi(t)})-g_{\beta,E}}_1\leq \epsilon,
\end{equation}
where $\norm{\, \cdot \,}_1$ is the trace norm and $\epsilon$ bounds a small, finite-size fluctuation. We say a state is \emph{$\epsilon$-thermalized} on $E$ if it satisfies Eq.~\eqref{eq:thermalization}.

A quantum error-correcting code encodes a logical Hilbert space into a physical codespace $\mathcal C$ via an encoding channel $\mathcal E$. Information is protected against a noise channel $\mathcal N$ if there exists a recovery channel $\mathcal R$ such that $\mathcal R\circ\mathcal N\circ\mathcal E$ acts as the identity on the logical space \cite{knill_theory_1997}.
 For approximate QEC, perfect recovery is relaxed to a small diamond-norm error   \cite{Beny2010}
 \begin{equation}
 \label{eq:approxerrorcorr}
  \exists~\mathcal{R\colon} \ \   \|\mathcal R \circ \mathcal N \circ \mathcal E - \mathrm{id}\|_\diamond \leq \delta.
 \end{equation} Here, $\|\cdot\|_\diamond$ characterizes the maximum distinguishability between two channels and implies that any state can be corrected to at most $\delta$ error. We take the noise channel to be erasure of a region $E$: $\mathcal{N}_B[\psi]\coloneqq \tr_E(\psi)$, with $\psi \coloneqq \dyad{\psi}$. We focus on correcting erasure errors because they reveal the most information to the environment. We consider time evolution itself as our encoding, $\mathcal{E}_t[\psi]=U(t)\psi U(t)^\dagger$, applied to $K$  orthogonal initial product states $\{\ket{\psi_k}\}_{k=1}^K$. These states evolve into the codespace $\mathcal C_t = \mathrm{span}\{U(t)\ket{\psi_k}\}_{k=1}^K$, with encoding rate $r = \log_2(K)/N$. Our first observation is that quantum thermalization implies approximate quantum error correction.
\begin{observation}
\label{obs:1} Suppose that every state in $\mathcal C_t$ is $\epsilon$-thermalized on $E$. Then $\mathcal C_t$ is approximately correctable against erasure of $E$ with recovery error
$\delta=2\sqrt{\epsilon \min\{K,2^{|E|}\}}$.
\end{observation}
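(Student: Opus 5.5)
The plan is to reduce the statement to the standard complementary-channel criterion for approximate erasure correction. By the information--disturbance tradeoff of Kretschmann, Schlingemann and Werner (equivalently, the B\'eny--Oreshkov characterization), erasure of $E$ is approximately correctable on $\mathcal C_t$ exactly when the complementary channel is close to a constant channel. The complementary channel sends a logical state $\rho$ to its reduced state on $E$, $\mathcal N^c[\rho]=\tr_B(\mathcal E_t[\rho])$, and the natural constant channel to compare against is $\rho\mapsto g_{\beta,E}\,\tr\rho$. The continuity bound then reads
\begin{equation}
\inf_{\mathcal R}\|\mathcal R\circ\mathcal N\circ\mathcal E_t-\mathrm{id}\|_\diamond \le 2\sqrt{\|\mathcal N^c\circ\mathcal E_t-\mathcal T\|_\diamond},
\end{equation}
where $\mathcal T$ is the constant channel. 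This bound comes from Uhlmann's theorem applied to the Stinespring dilations, with the square root arising when fidelity is converted to trace distance. The whole task is therefore to show that $\|\mathcal N^c\circ\mathcal E_t-\mathcal T\|_\diamond\le \epsilon\min\{K,2^{|E|}\}$.

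The hypothesis only controls pure codewords: for each normalized $\ket{\phi}\in\mathcal C_t$ we have $\|\tr_B\phi-g_{\beta,E}\|_1\le\epsilon$. By convexity this extends to all mixed code states, so the difference map $\Delta=\mathcal N^c\circ\mathcal E_t-\mathcal T$ has induced trace norm at most $\epsilon$ on density operators. The remaining gap is between this induced norm and the diamond norm, which allows entanglement with a reference system. Two standard dimension bounds close it.
\begin{itemize}
\item[(i)] A reference system of dimension equal to the input dimension $K$ suffices for the diamond norm. Writing a general input as $\sum_{ij}\ket{i}\bra{j}\otimes X_{ij}$ and bounding termwise (or using the usual estimate $\|\Delta\|_\diamond\le K\|\Delta\|_{1\to1}$) gives $\|\Delta\|_\diamond\le K\epsilon$.
\item[(ii)] Alternatively, one can bound via the output dimension $2^{|E|}$. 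Using the duality $\|\Delta\|_\diamond=\|\Delta^\dagger\|_{\mathrm{cb},\infty\to\infty}$ and the fact that the completely bounded norm of a map into $M_d$ is at most $d$ times its operator norm (Smith's lemma flavour), we get $\|\Delta\|_\diamond\le 2^{|E|}\epsilon$.
\end{itemize}
Taking the minimum and inserting it into the complementary-channel criterion gives $\delta=2\sqrt{\epsilon\min\{K,2^{|E|}\}}$.

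The main obstacle is step (ii), together with keeping the constants clean. The factor for the output dimension requires the dual characterization, and Hermiticity-preserving maps need care so that the constant stays $d$ rather than $2d$. If that gets messy, I would try a simpler route for the $2^{|E|}$ case. Take a purification $\ket{\Psi}_{RS}$ of an arbitrary code state, compare $\Psi_{RE}$ to $\Psi_R\otimes g_{\beta,E}$, and expand in an operator basis on $E$. This reduces the problem to pure code states in a Schmidt-type argument, and each term is controlled by the single-state hypothesis.
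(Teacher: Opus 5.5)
Your proposal is essentially the paper's proof. The paper also applies Theorem 3 of Kretschmann--Schlingemann--Werner with the constant channel $\mathcal G[\rho]=\tr(\rho)\,g_{\beta,E}$, extends $\epsilon$-thermalization from pure to mixed code states (it says ``by triangle inequality'', i.e.\ your convexity step), and obtains $\|\mathcal N_E\circ\mathcal E_t-\mathcal G\|_\diamond\le\epsilon\min\{K,2^{|E|}\}$ by citing standard diamond-versus-trace-norm relations. Two small points in your justification of that bound, both of which the paper also leaves to the citation: the textbook dimension bounds use the induced trace norm over all inputs, including non-Hermitian ones, which for a Hermiticity-preserving $\Delta$ can exceed the supremum over density operators (by at most a factor $2$); and in step (ii) $\Delta^\dagger$ maps $M_{2^{|E|}}$ into $M_K$, so the ``maps into $M_d$'' fact you quote only gives the factor $K$ again, and the $2^{|E|}$ factor needs the corresponding bound in terms of the output dimension of $\Delta$.
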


A complete proof is provided in
the End Matter. Observation~\ref{obs:1} establishes a rigorous connection between quantum thermalization and QEC, but its assumptions also identify key limitations. First, the result requires all initial product states and their superpositions to thermalize locally to the same Gibbs state. However, codewords with different energies generally equilibrate to different Gibbs states and can therefore be partially distinguished by local energy-density measurements. This suggests a quantitative relation between correctability and the energy separation of the initial states, which we explore shortly. Second, the theorem assumes local indistinguishability on the erased region $E$. Since rigorous statements of quantum thermalization generally only apply for sufficiently small subsystems, it is important to determine how large $E$ can become before it ceases to appear thermal, leading to a breakdown of error correction. To explore these limitations, we turn to numerical analysis.

Directly evaluating the diamond-norm recovery error in Eq.~\eqref{eq:approxerrorcorr} is generally intractable for the system sizes needed for our analysis. We therefore diagnose correctability through the \emph{decoupling} of the erased region from an external reference. Consider the state $| \Psi(t) \rangle=\frac{1}{\sqrt K}\sum_{k=1}^K
\ket{\psi_k(t)}_{S} \ket{k}_R$,
where the auxiliary Hilbert space $R$ of dimension $K$ purifies the maximally mixed logical input. The decoupling principle states that exact recovery from $B$ is possible if and only if, after encoding, the quantum mutual information between the erased region and the reference is zero, $I(E:R) = 0$ \cite{schumacher_quantum_1996, hayden_decoupling_2008}. In the Supplementary Material (SM)~\cite{SupplementalMaterial}, building on Refs.~\cite{schumacher_approximate_2001,barnum_quantum_2000,ahlswede_quantum_2013}, we prove that small quantum mutual information implies correctability in diamond norm on a single subcode of the same asymptotic rate, simultaneously for all possible erasures of fixed size. This lets us certify approximate correctability in diamond norm directly from $I(E:R)$, without explicitly constructing a recovery channel, providing a numerically accessible proxy for correctability which we use throughout.

\begin{figure}[t!]
\includegraphics[width=\columnwidth]{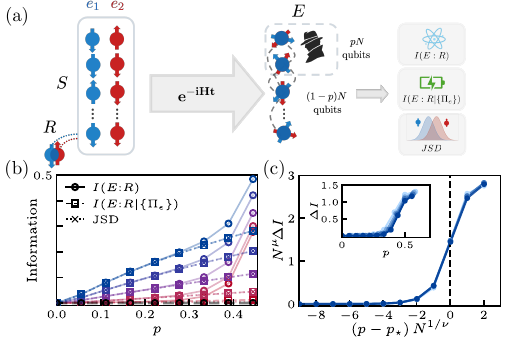}
  \caption{\textit{Quantifying information leakage due to energy.} (a) A logical qubit is encoded in two energy-separated codewords maximally entangled with a reference $R$, then evolved. After erasing a fraction $p$ to $E$, we compare total mutual information, $I(E:R)$ (quantum), energy measurement-restricted mutual information, $I(E:R|\{\Pi_e\})$ (classical-quantum), and the corresponding Jensen--Shannon divergence, JSD, (classical) between the local-energy outcomes. (b) For $N=18$ mixed-field Ising evolution, we plot $I(E:R)$, $I(E:R|\{\Pi_e\})$, and the local-energy JSD versus erased fraction $p$ for $p< 1/2$, which agree up to a finite-size threshold before deviating. The codeword energy-density separation is centered at zero and increases from red to blue: $\Delta e=0.09,0.27,0.45,0.63,0.78$. The black dashed line shows $I(E:R)$ for 
$\Delta e=0$. (c) Finite-size scaling of $\Delta I \equiv I(E:R)-\mathrm{JSD}$ extracts the threshold $p_\star$ at which agreement holds in the thermodynamic limit. For $N=12,14,16,18$ and $\Delta e=0.9045$, $\Delta I$ collapses as $N^\mu\Delta I$ versus $(p-p_\star)N^{1/\nu}$, with numerically extracted $p_\star\simeq0.5$, $\mu\simeq0.3$, $\nu\simeq1$. Inset: unscaled $\Delta I$ versus $p$.}
    \label{fig:02}
\end{figure}

\textit{Codewords with different effective temperature.---} 
To begin our numerical exploration, we relax the assumption that all codewords equilibrate to the same thermal state by considering initial states with different energies, focusing on the simplest case $K=2$. We take two orthogonal initial product states $\ket{\psi_1}$ and $\ket{\psi_2}$ with energy separation $\Delta E_N$, which we take to be centered about zero, i.e., $\langle H\rangle_{\psi_{1,2}} = \pm \Delta E_N/2$. Concretely, we choose $H$ to be the nonintegrable mixed-field Ising Hamiltonian. Here, zero energy corresponds to the infinite temperature, $\beta = 0$, energy. We evolve the states under $U(t) = e^{-iHt}$ to late times $t \sim N^2/J$, where $J$ is the characteristic energy scale of the Hamiltonian. See the SM~\cite{SupplementalMaterial} for details on the Hamiltonian parameters, initial state construction, and numerical protocol. In Fig.~\ref{fig:02}(b), we show $I(E:R)$ as a function of the fraction of erased qubits $p =  |E|/N$ and the energy density gap $\Delta e= \Delta E_N/N$. When $\Delta e = 0$,  $I(E:R) = 0$ and error correction is achieved, as predicted by Observation~\ref{obs:1}. For $\Delta e\neq 0$, by contrast, $I(E:R)$ grows with both $\Delta e$ and $p$, showing that the environment can distinguish codewords of different energies. This motivates a more refined analysis of what information is actually revealed to the environment.

We find numerically that the information revealed to the environment is contained entirely in measurements of the local energy density. To make this precise, we define the measurement-restricted mutual information $I(E:R|\{\Pi_e\})$ as the information obtained by an environment restricted to measuring the subsystem energy, with ${\Pi_e}$ the eigenspace projectors of the terms in $H$ fully supported within $E$. We show this simply reduces to the Holevo information~\cite{SupplementalMaterial}. We also consider a fully classical version obtained by further dephasing the reference $R$ in the codeword-label basis. For $K=2$, this quantity is the Jensen--Shannon divergence (JSD) between the two probability distributions of local energy outcomes on $E$ for states with differing total  energy. Figs.~\ref{fig:02}(b) and \ref{fig:02}(c) show that the three quantities numerically coincide up to $p = 1/2$ in the thermodynamic limit for constant $\Delta e$, using a finite-size scaling analysis. The agreement of $I(E:R|\{\Pi_e\})$ and $I(E:R)$ for $p\leq 1/2$ (beyond $p>1/2$, error correction is impossible due to the no-cloning theorem) reveals the mechanism by which information is revealed to the environment --- the only information available is the classical energy difference.

Since the JSD agrees with the mutual information for $p \leq 1/2$, we use it to determine the largest possible energy separation $\Delta E_N$ such that error correction is preserved in the thermodynamic limit. To quantify the information contained in the subsystem energy, we construct a coarse-grained high-temperature model in which the codeword-dependent energy distributions are Gaussian with variance $O(N)$ and mean separation proportional to $\Delta E_N$. Within this model, the JSD can be evaluated analytically, yielding the asymptotic scaling: for $\Delta E_N \sim N^\kappa$, the JSD vanishes for $\kappa<1/2$, remains finite at $\kappa = 1/2$, and approaches one bit for $\kappa >1/2$ \cite{SupplementalMaterial}. Intuitively, thermal energy fluctuations scale as $\sqrt{N}$, so an $o(\sqrt{N})$ separation between codeword energies cannot be resolved against these fluctuations, and error correction persists. This helps explain why ETH codes \cite{brandao_quantum_2019} succeed: restricting to eigenstates within a narrow energy window gives exponentially small energy differences, far below the $\sqrt{N}$ fluctuation scale, placing their codewords deep within the error-correcting regime. In the SM~\cite{SupplementalMaterial}, we extend the analysis beyond energy to a finite set of commuting extensive conserved charges, replacing the subsystem-energy distributions by the corresponding joint subsystem-charge distributions.

\begin{figure}[t!]
\includegraphics[width=\columnwidth]{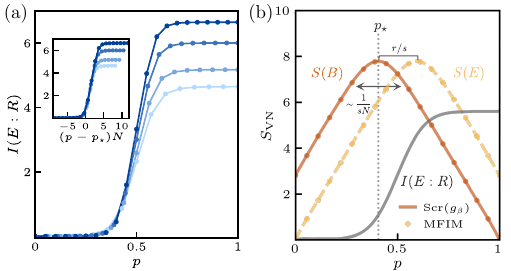}
  \caption{(a) Infinite temperature information transition. For $N = 14, 16, 18, 20$ (from the lightest to the darkest curve), we construct codespaces from orthonormal product states such that the energy is nearly independent of the encoded logical state and evolve them under the ergodic mixed-field Ising Hamiltonian at $\beta =0$ and $r = 0.1667$. We plot $I(E:R)$ as a function of $p$. Inset: finite-size scaling collapse as a function of $(p-p_\star)N^{1/\nu}$, with numerically extracted values $p_\star \simeq 0.417$, $\nu \simeq 1$. (b) Generalized Page-curve picture of the information transition. For a rate-$r$ $g_\beta$-Scrooge code, we plot $S(E)$, $S(B)$, and the resulting $I(E:R)$ versus erased fraction $p$; dots schematically represent mixed-field Ising entropies. The two entropy curves have slopes set by the thermal entropy density $s$, and the finite code dimension $\log_2 K=rN$ provides the offset that shifts the transition to the entropic Singleton bound $p_\star=(s-r)/2s$. At $s=1$, Scrooge reduces to Haar and the bound recovers $p_\star = (1-r)/2$.}
    \label{fig:03}
\end{figure}

\textit{Universal error-correcting transition.---} 
Having established the regime in which correctability survives, we now ask how well thermalization performs as an error-correcting code. Specifically, for codespaces in which the energy is approximately independent of the encoded logical state, what are the largest encoding rate $r$ and erased fraction $p$ for which the thermalizing dynamics remain error correcting? In the SM~\cite{SupplementalMaterial}, we explicitly construct orthogonal initial product states contained within a narrow energy window, such that any coherent superposition of time-evolved codewords has nearly the same energy expectation value.

 We find that, for constant rate $r$ and below a critical fraction $p< p_\star$, the mutual information $I(E:R)$ vanishes, and thus the code $\mathcal{C}_t$ is error correcting. As $p$ increases past $p_{\star}$,  $I(E:R)$ grows until it saturates at its maximum value $2 Nr$, as depicted in Fig.~\ref{fig:03}(a) for fixed $r$ and $\beta=0$, where finite-size analysis yields the information transition point $p_\star$. Repeating this analysis across a range of encoding rates $r$ and for three different ergodic Hamiltonians \cite{SupplementalMaterial}, we empirically find the erasure threshold $p_\star$ is described by
 \begin{equation}
 \label{eq:univ-behavior}
     p_\star(r)=\frac{s-r}{2s},
 \end{equation}
 where $s = S(g_\beta)/N$ is the von Neumann entropy density of the Gibbs state, as shown in  Fig.~\ref{fig:01}(b). Despite differences in microscopic dynamics, the extracted thresholds for all three ergodic Hamiltonians collapse onto the same curve, determined solely by $r$ and $s$.

At infinite temperature, where $s = 1$, Eq.~\eqref{eq:univ-behavior} is in fact the theoretical upper bound for error correction, as dictated by the quantum Singleton bound $p_\star=(1-r)/2$ \cite{knill_theory_1997}. Haar-random codes have been shown to saturate this bound \cite{ma_haar_2025}. Importantly, however, Haar-random codes are not physically feasible, due to the exponential circuit complexity of Haar-random states. Here, we show that the same level of error correction is achieved by states that a physical system reaches on its own, under local Hamiltonian evolution.

Beyond establishing optimality, we can use Haar-random codes to understand the threshold directly. Since Haar-random entropies are known exactly, $I(E:R)$ can be understood analytically through $S(E)$ and $S(B)$. For a Haar-random orthonormal code subspace $\{| \bar{\psi}_k \rangle\}_{k=1}^K$ with $K\ll 2^N$, the mixed reduced state
$
\rho_E^{\mathrm{mix}}=\tr_B\!\left(\frac{1}{K}\sum_k | \bar{\psi}_k \rangle \hspace{-.08cm} \langle \bar{\psi}_k |\right)
$
is well approximated by the reduced state of a single Haar-random state on an enlarged Hilbert space of dimension $2^{N+\log_2 K}$ \cite{SupplementalMaterial}. The code subspace therefore adds $\log_2 K$ qubits to the complement, shifting the Page curve \cite{page_average_1993} and hence the transition point. At infinite temperature, $s=1$, these curves give $S(E) \simeq N \min \{p, 1-p+r\}$ and $S(B) \simeq N \min \{1-p, p+r\}$, as illustrated in Fig.~\ref{fig:03}(b). Since the global state is pure, $I(E:R) = S(E) + rN - S(B)$. For $p < (1-r)/2$, $S(E) \simeq pN$ and $S(B) \simeq (p+r)N$, so these contributions cancel and $I(E:R) \simeq 0$. The transition occurs when $S(B)$ reaches its Page maximum, $(1-p)N = (p+r)N$, giving $p_\star = (1-r)/2$. Above this point, $I(E:R)$ grows linearly until $S(E)$ reaches its own Page maximum at $p = (1+r)/2$, after which it saturates at $2 \log_2 K$. The $O(1)$ Page correction rounds this transition over a width of $O(N^{-1})$, consistent with the finite-size scaling in Fig.~3(a). At infinite temperature, $s=1$, the thermalization entropies closely follow these shifted Haar Page curves, explaining the observed information transition.

At finite temperature, the decrease in thermal entropy density $s<1$ reduces the effective Hilbert space and fundamentally alters the landscape of achievable codes. Nevertheless, repeating the numerical analysis for finite-temperature codewords, we find that all ergodic Hamiltonians we consider again collapse onto a universal curve determined only by $s$, Fig.~\ref{fig:01}(b). To understand the origin of this bound and whether it is optimal, we introduce Scrooge codes, an analytically tractable construction based on the Scrooge ensemble.

 {\it Scrooge codes.---}
 The Scrooge ensemble~\cite{jozsa_lower_1994,goldstein_distribution_2006,reimann_typicality_2008} $\mathrm{Scr}(\rho)$ is the unique ensemble over pure states $|\phi\rangle\in\mathrm{supp}(\rho)$ with first moment $\rho$ that minimizes accessible information. Writing $d_\rho=\rank(\rho)$ and $\mathrm d\psi$ for normalized Haar measure on the unit sphere of $\mathrm{supp}(\rho)$,
\begin{equation}
\mathrm{Scr}(\rho) = \left\{d_\rho\bra{\psi}\rho \ket{\psi}\;\mathrm{d}\psi, \ket{\phi}=\frac{\sqrt{\rho}\ket{\psi}}{\|\sqrt{\rho}\ket{\psi}\|} \right\}
\end{equation}
so that $\mathbb E_{|\phi\rangle\sim\mathrm{Scr}(\rho)}
\left[\ketbra{\phi}\right]=\rho$. Typical samples are locally indistinguishable from $\rho$~\cite{reimann_typicality_2008, mcginley_scrooge_2025}, so $\mathrm{Scr}(g_\beta)$ provides a natural random-matrix model for finite-temperature late-time states~\cite{mark_maximum_2024, mok_nature_2026}.

Using this ensemble, we define a family of \emph{Scrooge codes}. A rate-$r$ $\rho$-Scrooge code is obtained by drawing $K=2^{rN}$ independent samples $|\phi_k\rangle\sim\mathrm{Scr}(\rho)$ and replacing them by the closest orthonormal set $\{|\widehat\phi_k\rangle\}_{k=1}^K$, defined as the minimizer of $\sum_{k=1}^K \big\| |\widehat\phi_k\rangle-|\phi_k\rangle \big\|^2$. This produces a valid coding isometry. We show that a $g_\beta$-Scrooge code saturates the entropic quantum Singleton bound, provided a simple, physically motivated condition on the Gibbs state holds. Specifically, we say that $\rho$ has an \textit{entropy-concentrated spectrum} if all but vanishing spectral weight lies on eigenvectors whose eigenvalues $\lambda_i$ satisfy $- \log \lambda_i = S(\rho) + o(N)$. We assume this holds for the reduced states $\tr_B(g_\beta)$ and $\tr_E(g_\beta)$ appearing in the erasure, and that both have entropy density $s(\beta)>0$~\cite{SupplementalMaterial}.
\begin{theorem}\label{thm:scrooge_achievability_mt}
Let $g_{\beta,N}$ satisfy the assumptions above, and let $\mathcal C_N$ be a $g_{\beta,N}$-Scrooge code of rate $r<s_\infty(\beta)$, where
$S_\infty(g_{\beta,N})/N\to s_\infty(\beta)$. For every
$p<p_\star(r)$, where $p_\star(r)$ is given by Eq.~\eqref{eq:univ-behavior}, with probability
tending to one, $\mathcal C_N$ simultaneously decouples the logical
reference from every erasure of at most $pN$ qubits. Consequently, there exists a single subcode of
$\mathcal C_N$ with the same asymptotic rate
that is recoverable from every such erasure with vanishing
diamond-norm error.
\end{theorem}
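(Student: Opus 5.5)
The plan has two stages. First, show that for a fixed erasure $E$ with $|E|\le pN$ the mutual information $I(E:R)$ is $o(1)$ with overwhelming probability. Second, union-bound over all $\binom{N}{pN}\le 2^{N h(p)}$ erasures and invoke the decoupling-to-diamond-norm result proved in the SM to extract a single subcode. The union bound forces the failure probability for a fixed $E$ to be doubly-exponentially or at least $2^{-\Omega(N)}$ small, with exponent beating $h(p)$. I therefore aim for a concentration bound rather than an expectation bound.

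\textbf{Reduction to a random-state problem.} I will first replace the orthonormalized code $\{|\widehat\phi_k\rangle\}$ by the raw Scrooge samples $\{|\phi_k\rangle\}$. Writing $\Phi=\sum_k|\phi_k\rangle\langle k|$, the closest orthonormal set is the polar part $\Phi(\Phi^\dagger\Phi)^{-1/2}$. Since $K=2^{rN}$ and $r<s_\infty$, each Gram entry $\langle\phi_j|\phi_k\rangle$ is of order $\tr(\rho^2)^{1/2}\le 2^{-S_\infty N/2}$ times fluctuations. A matrix Bernstein bound should then give $\|\Phi^\dagger\Phi-\mathds{1}\|\le 2^{-\Omega(N)}$ with high probability, so the two codes are exponentially close and Fannes--Audenaert continuity transfers the mutual-information bound with $O(N\cdot 2^{-\Omega(N)})$ error. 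This is where $r<s_\infty(\beta)$ enters.

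\textbf{The entropy estimate.} With $|\Psi\rangle=K^{-1/2}\sum_k|\phi_k\rangle|k\rangle$ we have $I(E:R)=S(E)+S(R)-S(B)$, where $S(R)=rN$ up to the Gram correction. Write $|\phi\rangle\propto\sqrt{\rho}|\psi\rangle$ with $|\psi\rangle$ Haar-distributed on $\mathrm{supp}(\rho)$, reweighted. The state on $SR$ is then $\sqrt{\rho}\otimes\mathds{1}$ applied to a Haar-random state on $\mathrm{supp}(\rho)\otimes\mathbb C^K$, up to normalization and the Scrooge reweighting $d_\rho\langle\psi|\rho|\psi\rangle$, which concentrates at $1$. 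This is the ``enlarged Page curve'' picture from Fig.~\ref{fig:03}(b). The goal is to show $S(E)\ge S(\tr_B g_\beta)-o(N)=psN-o(N)$ and $S(B)\le S(E)+rN+o(N)$ whenever $p<(s-r)/2s$. More precisely, I need $I(E:R)=o(1)$, not $o(N)$, so I will use the decoupling inequality directly: bound
\[
\mathbb E\,\big\|\rho_{ER}-\rho_E\otimes\tfrac{\mathds 1}{K}\big\|_1 \le \sqrt{d_E K\,\tr\big(\widetilde\rho_{ER}^2\big)}
\]
with conditional collision entropies smoothed over the entropy-concentrated subspaces. The standard one-shot decoupling theorem for a Haar unitary twisted by $\sqrt\rho$ gives an error of order $2^{-\frac12[H_{\min}(\tr_E g_\beta)-H_{\max}(\tr_B g_\beta)-rN]}$. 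The smoothed $H_{\min}$ and $H_{\max}$ of the two marginals equal $(1-p)sN$ and $psN$ up to $o(N)$; this is exactly the role of the entropy-concentration assumption. The exponent is positive iff $(1-p)s-ps-r>0$, i.e.\ $p<(s-r)/2s$. Levy's lemma on the sphere (the map is Lipschitz in $|\psi\rangle$ after smoothing) then upgrades the expectation to a tail bound $\exp(-c\,2^{\Omega(N)}\epsilon^2)$, which easily beats the $2^{Nh(p)}$ union bound. Alekseev--Fannes continuity converts the trace-distance bound into $I(E:R)=o(1)$.

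\textbf{Main obstacle.} The hard step is the decoupling computation for the non-unitary, reweighted Scrooge map. The second moment of $\sqrt\rho\,|\psi\rangle\langle\psi|\sqrt\rho$ is not a simple swap/identity combination after the $d_\rho\langle\psi|\rho|\psi\rangle$ reweighting, and $\rho$ does not factor as $\tr_B g\otimes\tr_E g$. My first attempt will drop the reweighting by absorbing it into a bounded Radon--Nikodym factor, which is legitimate on the entropy-concentrated typical subspace. I will then compute $\mathbb E[\tr\rho_{ER}^2]$ via the Haar second moment, obtaining terms $\tr(\rho_E^2)$ and $\tr(\rho_B^2)/K$ times $2^{-S_\infty N}$-sized corrections. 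Finally I will project onto typical subspaces of $\tr_B g_\beta$ and $\tr_E g_\beta$, using gentle measurement to control the discarded weight. If these typical projections do not commute well with $\sqrt{g_\beta}$, I would fall back to a smoothed min-entropy decoupling theorem, as in Dupuis's framework, applied with the Haar measure on $\mathrm{supp}(g_\beta)$.
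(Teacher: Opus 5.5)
Your architecture is the paper's. It has Gram-matrix concentration from $K\|g_\beta\|_{\rm op}\to0$ (this is where $r<s_\infty$ enters). It has a collision-entropy decoupling bound on the entropy-typical projection $L=\Pi_E\otimes\Pi_B$, with gentle measurement for the discarded weight; its leading term $\sqrt{Kd_Eq_B}\le 2^{\frac12[r-(1-2p)s]N+o(N)}$ vanishes exactly when $p<(s-r)/2s$. It then uses Lipschitz concentration plus a union bound over at most $2^N$ erasures, and finally decoupling $\Rightarrow$ average-case recovery $\Rightarrow$ a common subcode via the worst-to-average reduction.

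One step fails as stated: the target $I(E:R)=o(1)$. The entropy-concentration assumption gives leakages $\epsilon_{E,N},\epsilon_{B,N}\to0$ with no rate. So the projection/smoothing error, and hence the trace-distance decoupling error $\Delta_E$, is only $o(1)$, and Alicki--Fannes gives only $I(E:R)\le\Delta_E\,rN+O(h(\Delta_E))=o(N)$. You don't need more: Fuchs--van de Graaf and Uhlmann turn trace-distance decoupling directly into a recovery map with $1-F_e\le\Delta_E$. This is what the paper feeds into the worst-to-average theorem.

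The core decoupling estimate is where you diverge, and your first attempt there would fail. Cauchy--Schwarz must be applied to the difference, $\|\Omega_{RE}-\tfrac{I_R}{K}\otimes\Omega_E\|_2^2=\tr\Omega_{RE}^2-\tr\Omega_E^2/K$. We have $\mathbb E\tr\Omega_{RE}^2=\mathbb E\tr\Omega_B^2\approx q_B+q_E/K$; your expansion swaps the $E$ and $B$ roles. After multiplying by $Kd_E$, the $q_E/K$ pieces become $d_Eq_E\ge(1-\ell)^2$, which can be as large as $2^{2|E|w_E}$. They must therefore cancel to relative accuracy $o(1/(d_Eq_E))$. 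Replacing the Scrooge second moment by the Haar one times a bounded Radon--Nikodym factor destroys this, and $\prod_i d_\rho\langle\psi_i|g_\beta|\psi_i\rangle$ is not uniformly bounded anyway. The paper handles this with the exact Scrooge second moment ($K_{ab}\le(1-\lambda)^{-1}$) and an exponentially small Gram tolerance $\eta_N=2^{-aN}$, leaving the residual $\sqrt{d_E(\lambda+\eta)q_E}$.

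Your Dupuis fallback is a valid and cleaner alternative once set up precisely. Under the untilted Gaussian law the code space $\operatorname{span}\{\sqrt{g_\beta}\,g_i\}$ equals $\operatorname{im}(\sqrt{g_\beta}W)$ with $W$ a Haar isometry into $\operatorname{supp}(g_\beta)$, and $\Delta_E$ depends only on the code space. Apply the one-shot decoupling theorem to the CP map $X\mapsto d_\rho\tr_B(L\sqrt{g_\beta}X\sqrt{g_\beta}L)$. Its Choi state is a marginal of a pure state with $EB$-marginal $Lg_\beta L$, so with $\sigma_E=\Pi_E/d_E$ the bound is exactly $\sqrt{Kd_Eq_B}$, and Haar invariance performs the cancellation exactly. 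You then pay $O(\eta)$ to pass from $\sqrt{d_\rho g_\beta}\,W$ to its polar part (now only $\eta\to0$ is needed) and $O(\sqrt{\ell})$ for the projection. Finally, transfer to the Scrooge law at the level of the bounded quantity $\Delta_E\in[0,2]$, using $\mathbb E_0\big[(\prod_i\langle g_i|g_\beta|g_i\rangle)^2\big]=(1+\tr g_\beta^2)^K\le e^{K\lambda}\to1$.

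This route buys a shorter argument with no exact Scrooge moment formula and no $q_E$ term. The paper's route is self-contained and gives an explicit finite-$N$ inequality directly in the Scrooge law for arbitrary parent states.

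Also run the concentration step in Gaussian coordinates on the good-Gram set, with Lipschitz constant $O(\sqrt{\lambda/K})$, as the paper does. Do not use Levy's lemma on a single enlarged sphere: the $K$ codewords are independent, not a Haar state on $\operatorname{supp}(g_\beta)\otimes\mathbb C^K$.
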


To establish optimality, we use the entropic quantum Singleton bound, which sets the optimal rate--distance limit for codes subject to a bounded-entropy constraint~\cite{grassl_entropic_2022}. We assume that two disjoint erased regions of size $pN$ leave a complement of size $(1-2p)N$ with entropy density $s(\beta)$ and an entropy-concentrated spectrum, allowing the entropic Singleton bound to yield $p \le p_\star(r)$~\cite{SupplementalMaterial}. Since Theorem~\ref{thm:scrooge_achievability_mt} achieves every $p<p_\star(r)$, $g_\beta$-Scrooge codes are asymptotically optimal.

Repeating the entropy analysis for $g_\beta$-Scrooge codes gives a simple analytic approach to understand the finite-temperature transition, as follows. When $K \operatorname{tr} (g_\beta^2)\ll 1$, the reduced state of the mixture of $K$ $g_\beta$-Scrooge codewords is well approximated by that of a single enlarged Scrooge-random state with parent $g_\beta\otimes I_K/K$ \cite{SupplementalMaterial}. The code therefore adds $\log_2 K= rN$ qubits to the complement, shifting the subsystem entropy exactly as in the infinite-temperature case~\cite{SupplementalMaterial, nakagawa_universality_2018,fujita_page_2018}. With slopes set by $s < 1$, this yields the entropy curves shown in Fig.~\ref{fig:03}(b), which agree with the finite-temperature thermalization entropies. These curves reproduce the mutual information transition: the turnover of $S(B)$ occurs at $p_\star=(s-r)/(2s)$, giving the observed threshold. The finite-temperature correction further gives a crossover width $O(1/(sN))$, consistent with the observed finite-size scaling of the transition.

\textit{Discussion and Outlook.---} There are several natural directions for further investigation. First, it remains to understand the timescales governing the onset of quantum error correction. Since local thermalization makes erased subsystems insensitive to the encoded state, the time at which $I(E:R)$ becomes small should be closely related to the thermalization time. However, finite-rate recoverability may also depend on the buildup of nonlocal correlations across the complement. Determining whether correctability emerges with local thermalization, with scrambling, or on a distinct timescale would sharpen the connection between dynamics and quantum error correction \cite{sahu_phase_2024}. Second, additional conservation laws, including non-Abelian symmetries and hydrodynamic slow modes \cite{doyon_generalized_2025}, should be incorporated into this error-correction framework. While commuting conserved quantities introduce only classical leakage through their joint distributions, more general symmetry structures may impose richer constraints on error correction. Finally, extending this framework to integrable, many-body localized \cite{nandkishore_many-body_2015, abanin_many-body_2019}, and quantum-scarred systems \cite{serbyn_quantum_2021} may provide an operational diagnostic of nonergodicity through the failure, weakening, or delay of emergent error correction.

\textit{Acknowledgments.---} We thank Sarang Gopalakrishnan, Yu-Jie Liu, Daniel Mark, Daniel Ranard, and Xinyu Tan for helpful discussions. AV acknowledges support from the National Science Foundation Graduate Research Fellowship Program under Grant No. DGE-2140743. RRA was supported by a grant from the Simons Foundation (MP-SIP-00001553, AWH) and by the U.S. Department of Energy, Office of Science, National Quantum Information Science Research Centers, Co-design Center for Quantum Advantage (C2QA) under contract number DE-SC0012704. This work was also supported by the Center for Ultracold Atoms (an NSF Physics Frontiers Center; PHY-2317134), the Heising-Simons Foundation (Grant No. 2024-4851), and the Alfred P. Sloan Foundation through a Sloan Research Fellowship.

\bibliography{references}
\nocite{Ng_2010, vershynin_high-dimensional_2018, ekert_error_1996, lowdin_nonorthogonality_1970, tropp_user-friendly_2012, winter_coding_1999, hoeffding_probability_1963, meckes_spectral_2013, anshu_concentration_2016, Kliesch2014, bjelakovic_shannon-mcmillan_2002, sugiura_canonical_2013, holevo_bounds_1973, Watrous2018TQI, mele_introduction_2024, de_maesschalck_mahalanobis_2000, shannon1948mathematical}

\appendix
\section{End Matter}
\label{app:observation-one}

Here we give the complete proof of Observation~\ref{obs:1}. 

\begin{proof} Due to thermalization, the identity of a codeword $\ket{\psi_k(t)}$ cannot be revealed through measurements of $E$; therefore, it must be recoverable solely from the reduced state in $B$. This intuition is formalized by the {\it information-disturbance tradeoff} for quantum channels~\cite{kretschmann_information-disturbance_2006} as follows. Define the {\it complementary channel} $\mathcal{N}_E[\psi]\coloneqq \tr_B[\psi]$ by tracing out $B$ instead of $E$. Because every state in the codespace thermalizes, the channel $\mathcal{N}_E\circ \mathcal{E}_t$ is close in diamond distance to the constant channel $\mathcal{G}[\rho]:= \tr(\rho)g_{\beta,E}$:
\begin{align*} \norm{\mathcal{N}_E\circ \mathcal{E}_t-\mathcal{G}}_{\diamond}  &\leq \min\{K,2^{|E|}\}\,{\sup_{\rho}} \norm{\mathcal{N}_E\circ \mathcal{E}_t[\rho]-\,g_{\beta,E}}_1 \\&\leq \epsilon \min\{K,2^{|E|}\}, \end{align*}
where the first inequality follows from standard relations between the diamond and trace norms~\cite{Watrous2018TQI}, and the supremum is taken over all density matrices $\rho$ on $\mathcal{C}_0$, which also thermalize by triangle inequality. Under these conditions, Theorem 3 of Ref.~\cite{kretschmann_information-disturbance_2006} guarantees that there is a recovery channel $\mathcal R$ for the channel $\mathcal{N}_B\circ \mathcal{E}_t$ such that \begin{align*}
   \left\|
\mathcal R\circ\mathcal N_B\circ\mathcal E_t-\mathrm{id}
\right\|_\diamond
&\le 2
\norm{\mathcal{N}_E\circ \mathcal{E}_t-\mathcal{G}}^{1/2}_{\diamond}\\
&\le 2\sqrt{\epsilon \min\{K,2^{|E|}\}}.\qedhere
\end{align*} 
\end{proof}

\end{document}


\title{Supplementary Material for ``Quantum thermalization achieves optimal approximate quantum error correction''}
\author{Aditi Venkatesh}
\thanks{Electronic address: avenkatesh@g.harvard.edu}
\affiliation{Department of Physics, Harvard University, Cambridge, MA, 02138}
\affiliation{Center for Theoretical Physics --- a Leinweber Institute, Massachusetts Institute of Technology, Cambridge, MA, 02139}

\author{Richard R. Allen}
\affiliation{Center for Theoretical Physics --- a Leinweber Institute, Massachusetts Institute of Technology, Cambridge, MA, 02139}

\author{Saúl Pilatowsky-Cameo}
\affiliation{Center for Theoretical Physics --- a Leinweber Institute, Massachusetts Institute of Technology, Cambridge, MA, 02139}

\author{Bingtian Ye}
\affiliation{Center for Theoretical Physics --- a Leinweber Institute, Massachusetts Institute of Technology, Cambridge, MA, 02139}

\author{Soonwon Choi}
\affiliation{Center for Theoretical Physics --- a Leinweber Institute, Massachusetts Institute of Technology, Cambridge, MA, 02139}
\maketitle

In this Supplemental Material, we provide the technical details supporting the claims and results presented in the main text. In Sec.~\ref{sec:approximate_qec}, we review average- and worst-case approximate quantum error correction, establish their asymptotic equivalence at positive rate, formulate the decoupling criterion used throughout our analysis, and develop the entropic quantum Singleton converse needed for codes with bounded local entropy. In Sec.~\ref{sec:scrooge_achieve}, we introduce Scrooge random codes and derive achievability bounds from their second moments and a finite-size decoupling inequality. In Sec.~\ref{sec:tightness}, we prove Theorem~1 of the main text, establishing saturation of the entropic quantum Singleton bound by bounded-entropy and thermal Scrooge codes. In Sec.~\ref{sec:thermalization_codes}, we describe the numerical construction of thermalization codes, the chaotic Hamiltonians used in our simulations, the finite-size scaling analysis, and comparisons with Haar-random and Scrooge-random codes. Finally, in Sec.~\ref{sec:conservation_constraints}, we analyze how conserved quantities constrain error correction, characterize the resulting information leakage through local measurements, and extend the analysis to multiple conserved quantities and basis-dependent dephasing.

\tableofcontents
\clearpage
\section{Approximate Quantum Error Correction}
\label{sec:approximate_qec}
We begin by formally defining approximate quantum error correction (AQEC). An $[[n,k]]$ approximate quantum error correcting code is specified by an isometric \emph{encoding} channel $\mathcal E: \cL(\C^{K}) \to \cL(\C^{D})$ together with a \emph{recovery}, or \emph{decoding}, channel $\mathcal R : \mathcal L(\mathbb C^D) \to \mathcal L(\mathbb C^K)$ where $K = 2^k$, and $D = 2^n$. The \emph{codespace} $\cC$ of the code is defined as the image of the \emph{encoding isometry} $V$ such that $\cE(\rho) = V \rho V^\dagger$. In what follows, we will consider error correcting codes for a single, fixed noise channel $\cN: \cL(\C^D) \to \cL(\C^D)$, e.g., the erasure channel on some subset of qubits.

We say that $(\cE,\cR)$ is an \emph{exact} QECC for $\cN$ if $\mathcal R \circ \mathcal N \circ \mathcal E = \mathrm{id}$, for $\mathrm{id}$ the identity channel on $\cL(\C^K)$. In other words, any state in the codespace $\cC$ (which may additionally be entangled with an ancillary system) can be exactly recovered even after the noise channel $\cN$ is applied. For AQEC, we relax this notion, to say that states in the codespace need only be approximately recoverable after $\cN$ is applied. There are two notions of approximate recoverability: \emph{average-case} recoverability and \emph{worst-case} recoverability. The former has historically been used to define AQEC~\cite{schumacher_approximate_2001} and has the benefit of being more readily computable, while the latter has since become more standard~\cite{crepeau2005approximate, Beny2010, Ng_2010}. For exact QEC these notions are evidently equivalent, since the average error is zero if and only if the worst-case error is zero, but this is not obviously the case for AQEC.

In the following section, we precisely define average-case and worst-case AQEC and establish their asymptotic equivalence at positive rate. More specifically, we show that a positive-rate sequence of codes that is uniformly average-case correctable against at most exponentially many noise channels contains a single subcode of the same asymptotic rate that is uniformly worst-case correctable against all of them. The reduction incurs only a subextensive loss in the number of logical qubits, while the worst-case error still vanishes asymptotically. Thus, for the purposes of asymptotic code construction, it suffices to establish average-case AQEC.

\subsection{Worst to Average-Case Reduction for Approximate Quantum Error Correction}

We begin by defining average- and worst-case AQEC. We will characterize both notions using the \emph{entanglement fidelity}. Entanglement fidelity is a property of a quantum state $\rho$ and channel $\Lambda$ acting on the Hilbert space of $\rho$, defined as
\begin{align}
    F_e(\rho, \Lambda) = \bra{\phi} \Lambda \otimes \mathrm{id}(\ketbra{\phi})\ket{\phi},
\end{align}
for $\ket{\phi}$ a purification of $\rho$ using an ancillary Hilbert space, on which the identity channel acts above (the definition is independent of purification). We first define average-case AQEC:

\begin{defn}[Average-Case AQEC]\label{def: avg_case} $(\cE, \cR)$ is an $\eps$-approximate average-case quantum error correcting code for noise channel $\cN$ if
\begin{align}
    \left(1 - F_e\left(\id/K, \mathcal R\circ\mathcal N\circ\mathcal E \right) \right)^{1/2} \le \eps.
\end{align}
\end{defn}
\noindent
Note that the canonical purification of the maximally mixed state $\id/K$ is simply the maximally entangled state across two copies of $\C^K$; the average-case error correcting performance of a code therefore characterizes its ability to share and preserve entanglement. A high entanglement fidelity for the maximally mixed state ensures high average-case code performance because the entanglement fidelity lower bounds the ensemble average fidelity for any pure state decomposition of $\rho$~\cite{Watrous2018TQI}. In other words, if $(\cE,\cR)$ is an $\eps$-approximate code, then we have
\begin{align}
    \frac{1}{K} \sum_{k = 0}^{K-1} \bra{k}  \mathcal R\circ\mathcal N\circ\mathcal E(\ketbra{k}) \ket{k} \ge 1 - \eps^2,
\end{align}
i.e., the average fidelity of a randomly transmitted basis codeword is at least $1 - \eps^2$. 

By contrast, we define worst-case AQEC in terms of the entanglement fidelity optimized over all states:
\begin{defn}[Worst-Case AQEC]\label{def: worst_case} $(\cE,\cR)$ is an $\eps$-approximate worst-case quantum error correcting code for noise channel $\cN$ if
\begin{align}
    \max_{\rho \in \cD(\C^K)} \left(1 - F_e\left(\rho, \mathcal R\circ\mathcal N\circ\mathcal E\right) \right)^{1/2} \le \eps.
\end{align}
\end{defn}
\noindent
We note that, by the Fuchs--van de Graaf inequalities relating trace distance and fidelity~\cite{Watrous2018TQI}, this definition of worst-case AQEC implies that $\Lambda = \mathcal R\circ\mathcal N\circ\mathcal E$ is close to the identity channel in diamond distance:
\begin{align}
     \| \Lambda - \mathrm{id} \|_\diamond &= \max_{\substack{\ket{\psi} \in \C^K \otimes \C^K \\ \langle\psi|\psi\rangle = 1}} \|  (\Lambda\otimes \mathrm{id})(\ketbra{\psi})-\ketbra{\psi} \|_1
    \\
    &\le 2\max_{\substack{\ket{\psi} \in \C^K \otimes \C^K \\ \langle\psi|\psi\rangle = 1}}
    \left(
    1-\bra{\psi}(\Lambda\otimes \mathrm{id})(\ketbra{\psi})\ket{\psi}
    \right)^{1/2} 
    \\
    &=
    2\max_{\rho\in\mathcal D(\mathbb C^K)}
    \left(1-F_e(\rho,\Lambda)\right)^{1/2} \\
    &\le 2\epsilon. 
\end{align}

We now show that, for asymptotic code families, uniform average-case correctability against exponentially many noise channels implies uniform worst-case correctability on a single subcode of the same asymptotic rate.
\begin{theorem}[Worst to Average-Case Reduction for AQEC]\label{thm: worst_to_avg}
Let $\mathcal E_N: L(\mathbb C^{K_N}) \rightarrow L(\mathcal H_N)$ be a sequence of $K_N$-dimensional encodings with asymptotic rate $r$, and let $\mathcal N_{N,a}$ for $a \in \mathcal A_N$ be families of at most exponentially many noise channels. Suppose that each $\mathcal N_{N,a}$ admits a recovery $\mathcal R_{N,a}$ such that, uniformly in $a$,
\begin{equation}
    F_e\left(\frac{\mathbf \id_{K_N}}{K_N},
\mathcal R_{N,a}\circ\mathcal N_{N,a}\circ\mathcal E_N\right)\longrightarrow 1
\end{equation}
Then, there exists a sequence of subcodes $\mathcal C'_N$ with encoding channels $\cE_N'$, of the same asymptotic rate $r$, that is uniformly worst-case correctable against every $\mathcal N_{N,a}$. In particular, appropriate recovery channels satisfy, 
\begin{equation}
    \max_{a\in\mathcal A_N}
\left\|
\mathcal R'_{N,a}\circ\mathcal N_{N,a}\circ\mathcal E'_N
-\operatorname{id}
\right\|_\diamond
\longrightarrow0.
\end{equation}
\end{theorem}
We prove the above theorem over the course of two lemmas. 
\begin{lemma}[Adapted from Lemma 21 in \cite{ahlswede_quantum_2013}]\label{lemm:haar_subcode_space} Let ${\Lambda_{N,a}}$, ${a\in\mathcal A_N}$ be a family of channels on $\mathbb C^{K_N}$, where $\log |\mathcal A_N|=O(N)$ and $N^{-1}\log_2 K_N\to r>0$. Suppose
\begin{equation}
\epsilon_N
:=
\max_{a\in\mathcal A_N}
\left(
1-F_e\left(\frac{\id_{K_N}}{K_N},\Lambda_{N,a}\right)
\right)^{1/2}
\longrightarrow 0.
\end{equation}
Then, for all sufficiently large $N$, there exists a single subspace $\mathcal C'_N\subset\mathbb C^{K_N}$ of dimension $K'_N=\left\lfloor\frac{K_N}{N^3}\right\rfloor$ such that 
\begin{equation}
\min_{a\in\mathcal A_N}
\min_{\substack{|\psi\rangle\in\mathcal{C}'_N \\ \langle\psi|\psi\rangle = 1}}
\langle\psi|\Lambda_{N,a}(\dyad{\psi})|\psi\rangle
\geq
1-\epsilon_N^2-\frac{1}{N}.
\end{equation}
\end{lemma}
\begin{proof}
    We adapt the strong-subspace argument of~\cite{ahlswede_quantum_2013} to an arbitrary exponentially sized family of channels, using a direct net argument. For each $a\in\mathcal A_N$, define the pure-state channel infidelity $g_{N,a}(\psi) := 1- \langle\psi| \Lambda_{N,a}(\dyad{\psi}) |\psi\rangle$. The relation between the Haar-averaged pure-state fidelity and entanglement fidelity gives, \begin{equation}
        \mathbb E_{\psi} \left[ \langle\psi| \Lambda_{N,a}(\dyad{\psi}) |\psi\rangle \right] = \frac{ K_N F_e\left( \id_{K_N}/K_N, \Lambda_{N,a} \right) +1 }{K_N+1}
    \end{equation}
    where the expectation is over Haar-random unit vectors in $\mathbb C^{K_N}$ \cite{mele_introduction_2024}. Therefore, 
    \begin{equation}
        \mathbb E_{\psi} \left[ g_{N,a}(\psi) \right] = \frac{K_N}{K_N+1} \left( 1- F_e\left( \frac{\id_{K_N}}{K_N}, \Lambda_{N,a} \right) \right) \leq \epsilon_N^2.
    \end{equation}
    We next show that $g_{N,a}$ is uniformly Lipschitz. For unit vectors $\ket{\psi}$ and $\ket{\phi}$, \begin{align}
        \left|g_{N,a}(\psi)-g_{N,a}(\phi) \right| &\leq \left|\tr\left[(\dyad{\psi} - \dyad{\phi}) \Lambda_{N,a}(\dyad{\psi})\right] \right| + \left| \operatorname{tr} \left[ \dyad{\phi} \Lambda_{N,a}(\dyad{\psi}-\dyad{\phi}) \right] \right|.
    \end{align}
    By Hölder's inequality, positivity of $\Lambda_{N,a}(\dyad{\psi})$, and trace-norm contractivity of quantum channels on Hermitian operators,
    \begin{align} \left| g_{N,a}(\psi)-g_{N,a}(\phi) \right| &\leq 2\|\dyad{\psi}-\dyad{\phi}\|_1\\
    &=4 \sqrt{1 - |\langle \psi|\phi\rangle|^2}\\
    &\leq 4\|\dyad{\psi}-\dyad{\phi}\|_2.
    \end{align}
    Thus, $g_{N,a}$ is $4$-Lipschitz, uniformly in $a$ and $N$.
    Let $W_N:\mathbb C^{K'_N}\longrightarrow\mathbb C^{K_N}$ be a Haar random isometry. For every fixed unit vector $\ket{x} \in \mathbb C^{K'_N}$, the vector $W_N \ket{x}$ is Haar-random on the unit sphere of $\mathbb C^{K_N}$. Concentration of measure on the unit sphere \cite{vershynin_high-dimensional_2018} therefore implies that, for some universal constant $c>0$,
    \begin{equation}
        \Pr_{W_N} \left\{ g_{N,a}(W_Nx) > \mathbb E_\psi[g_{N,a}(\psi)] + \frac{1}{2N} \right\} \leq 2\exp\left( -c\frac{K_N}{N^2} \right).
    \end{equation}
    Using $\mathbb E_\psi[g_{N,a}(\psi)]\leq\epsilon_N^2$, we obtain \begin{equation} \Pr_{W_N} \left\{ g_{N,a}(W_Nx) > \epsilon_N^2+\frac{1}{2N} \right\} \leq 2\exp\left( -c\frac{K_N}{N^2} \right). \end{equation}
    Let $\mathcal M_N$ be a $1/(8N)$-net of the unit sphere of $\mathbb C^{K'_N}$. By the standard volumetric bound, the unit sphere in $\mathbb R^d$ admits a $\delta$-net containing at most $(1+2/\delta)^d$ points~\cite{vershynin_high-dimensional_2018}. Since $\mathbb C^{K'_N}\cong\mathbb R^{2K'_N}$ as a real vector space, taking $\delta=1/(8N)$ allows us to choose $\mathcal M_N$ such that $|\mathcal M_N|\leq (1+16N)^{2K'_N}$. Taking a union bound over all $a\in\mathcal A_N$ and $x\in\mathcal M_N$ gives
    \begin{align} &\Pr_{W_N} \left\{ \exists\,a\in\mathcal A_N,\, x\in\mathcal M_N: g_{N,a}(W_Nx) > \epsilon_N^2+\frac{1}{2N} \right\} \leq 2|\mathcal A_N| (1+16N)^{2K'_N} \exp\left( -c\frac{K_N}{N^2} \right). \end{align}
    The logarithm of the right hand side is $O(N) + 2K'_N\log(1+16N) - c\frac{K_N}{N^2}$. Since $K'_N\leq K_N/N^3$, this is bounded by
    \begin{equation} O(N) + O\left( \frac{K_N\log N}{N^3} \right) - c\frac{K_N}{N^2}, \end{equation} which tends to $-\infty$ because $K_N=2^{rN+o(N)}$ with $r>0$. Hence, for all sufficiently large $N$, there exists an isometry $W_N$ such that \begin{equation} g_{N,a}(W_Nx) \leq \epsilon_N^2+\frac{1}{2N} 
    \end{equation} simultaneously for every $a\in\mathcal A_N$ and $x\in\mathcal M_N$.
    Fix such an isometry and define $\mathcal C'_N := \operatorname{im}(W_N)$. For any unit vector $\ket{y} \in \mathbb C^{K'_N}$, choose $x \in \mathcal M_N$ such that $\|x-y\|_2 \leq 1/(8N)$. Using the Lipschitz bound and the fact that $W_N$ is an isometry,
    \begin{align} g_{N,a}(W_Ny) &\leq g_{N,a}(W_Nx) + 4\|W_N(x-y)\|_2 \\ &\leq \epsilon_N^2 + \frac{1}{2N} + \frac{1}{2N} \\ &= \epsilon_N^2+\frac{1}{N}. \end{align}
    This holds simultaneously for every $a\in\mathcal A_N$ and every unit vector in $\mathcal C'_N$. Equivalently, \begin{equation} \min_{a\in\mathcal A_N} \min_{\substack{|\psi\rangle\in\mathcal C'_N\\ \langle\psi|\psi\rangle=1}} \langle\psi| \Lambda_{N,a}(|\psi\rangle\langle\psi|) |\psi\rangle \geq 1-\epsilon_N^2-\frac{1}{N}, \end{equation} which proves the lemma.
\end{proof}
To connect the pure state channel fidelity to entanglement fidelity, we use the following result of~\cite{barnum_quantum_2000}:
\begin{lemma}[Theorem 2 in~\cite{barnum_quantum_2000}]\label{lem: barnum}
    Let $\cC \subset \C^K$ be a subspace and let $\Lambda: \cL(\C^K) \to \cL(\C^K)$ be a quantum channel. Suppose that
    \begin{align}
        \min_{\substack{\ket{\psi} \in \cC \\ \langle\psi|\psi\rangle = 1}} \bra{\psi} \Lambda(\ketbra{\psi}) \ket{\psi} \ge 1 - \eps^2.
    \end{align}
    Then
    \begin{align}
        \min_{\rho \in \cD(\cC)} F_e(\rho, \Lambda) \ge 1 - \frac{3}{2} \eps^2.
    \end{align}
\end{lemma}
\textit{Proof of Theorem S1.} For each $a \in \mathcal A_N$, define the recovered logical channel
\begin{equation}
    \Lambda_{N,a}
:=
\mathcal R_{N,a}
\circ
\mathcal N_{N,a}
\circ
\mathcal E_N,
\end{equation}
and let
\begin{equation}
    \epsilon_N
:=
\max_{a\in\mathcal A_N}
\left(
1-
F_e\left(
\frac{\id_{K_N}}{K_N},
\Lambda_{N,a}
\right)
\right)^{1/2}.
\end{equation}
By hypothesis, $\epsilon_N \to 0$. The case $r= 0$ is trivial; assume $r>0$. Henceforth suppose $r > 0$. By Lemma~\ref{lemm:haar_subcode_space}, there exists a single subspace $\mathcal C'_N\subset\mathbb C^{K_N}$, independent of $a\in\mathcal A_N$, with dimension $K'_N
=
\left\lfloor
\frac{K_N}{N^3}
\right\rfloor$, such that \begin{equation}
\min_{a\in\mathcal A_N}
\min_{\substack{|\psi\rangle\in\mathcal C'_N\\
\langle\psi|\psi\rangle=1}}
\langle\psi|
\Lambda_{N,a}(|\psi\rangle\langle\psi|)
|\psi\rangle
\geq
1-\epsilon_N^2-\frac{1}{N}.
\end{equation}
Applying Lemma~\ref{lem: barnum} to each channel $\Lambda_{N,a}$ on the same subspace $\mathcal C'_N$ gives 
\begin{equation}
\min_{a\in\mathcal A_N}
\min_{\rho\in\mathcal D(\mathcal C'_N)}
F_e(\rho,\Lambda_{N,a})
\geq
1-
\frac{3}{2}
\left(
\epsilon_N^2+\frac{1}{N}
\right).
\end{equation}
Identifying $\mathcal C'_N$ isometrically with $\mathbb C^{K'_N}$ and
restricting the encoding and recovery maps accordingly, with the
inverse isometry extended arbitrarily to a CPTP map outside
$\mathcal C'_N$, cannot decrease the above entanglement fidelities.
Hence,
\begin{equation}
\max_{a\in\mathcal A_N}
\left\|
\mathcal R'_{N,a}
\circ
\mathcal N_{N,a}
\circ
\mathcal E'_N
-
\operatorname{id}_{K'_N}
\right\|_\diamond
\leq
2\left(
\frac{3}{2}
\left(
\epsilon_N^2+\frac{1}{N}
\right)
\right)^{1/2} = \sqrt{6}\left(\epsilon^2_N + \frac{1}{N}\right)^{1/2}
\longrightarrow0.
\end{equation}  
Finally,
\begin{equation}
\frac{1}{N}\log_2K'_N
=
\frac{1}{N}\log_2K_N
-
\frac{3\log_2N}{N}
+
o(1)
\longrightarrow r,
\end{equation}
so $\mathcal C'_N$ has the same asymptotic rate as the original code. 

\subsection{Average-Case Approximate Quantum Error Correction from Decoupling} \label{sec:aqec_decoupling}

Theorem~\ref{thm: worst_to_avg} justifies the study of quantum thermalization through the lens of average-case AQEC, rather than worst-case. As noted, this leads to a more computationally simple approach to analyze and prove error correction. In this section, we connect average-case AQEC to the \emph{decoupling} condition studied in the main text~\cite{hayden_decoupling_2008}. Specifically, we demonstrate how to derive a lower bound on the entanglement fidelity of the maximally mixed state based on a decoupling analysis of the purified state after the noise channel. Notably, this enables us to prove average-case AQEC without ever explicitly constructing a decoding channel.

In what follows, we let $Q \cong \C^K$ and $S \cong \C^D$ denote the input and output spaces to the encoding map $\cE$, we let $R \cong \C^K$ denote the purification space for the input density matrix $\id/K$, and we let $E$ denote the dilation space for the noise channel $\cN$. 
We denote the maximally entangled state between $R$ and $Q$ as $\ket{\Psi_{RQ}}$; this is the purification of the input density matrix $\id/K$.
We let $U_{SE}$ be the Stinespring dilation of $\cN$. We define the state $| \Psi_{RSE}^{\text{noisy}} \rangle$ 
as the final state on the full system $RSE$.
We denote the partial traces of $\ket{\Psi}$ and $| \Psi^{\text{noisy}} \rangle$ by $\sigma$ and $\sigma^{\text{noisy}}$, respectively, with subscript indicating the reduced space on which the state is defined.

\begin{figure}
    \centering
    \includegraphics[width=.85\linewidth
    ]{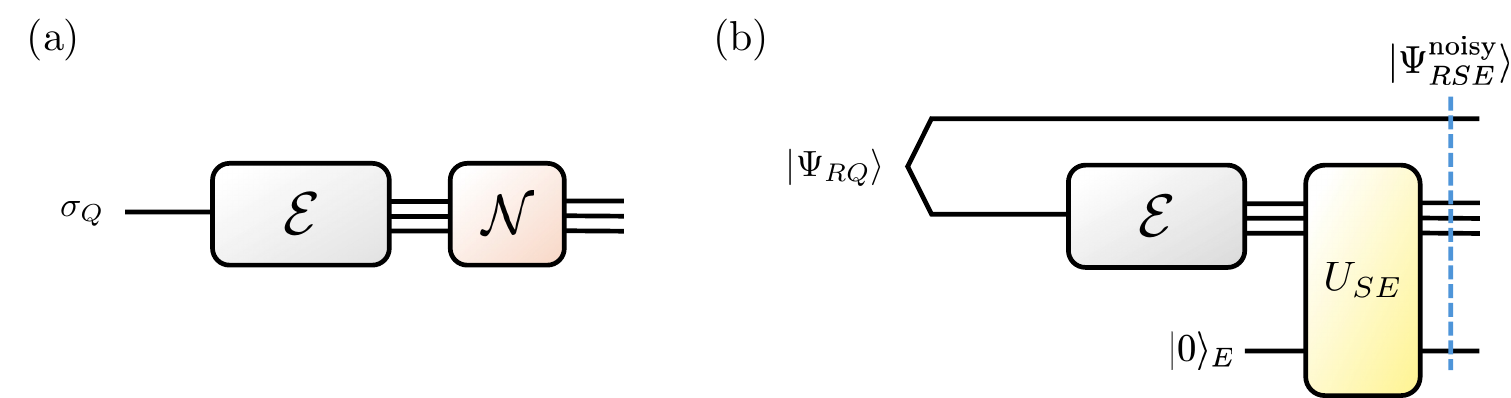}
    \caption{Setup for deriving average-case approximate quantum error correction from decoupling. (a) An encoding of the maximally mixed state $\sigma_Q = \mathbbm{1} / K$ (i.e., a uniform mixture of all codewords) subjected to a noise channel $\cN$. (b) Extending panel (a) to a larger Hilbert space by introducing a purification $| \Psi_{RQ} \rangle$ of $\sigma_Q$ and Stinespring dilation $U_{SE}$ of the noise channel $\cN$.}
    \label{fig:decoupling_diagram}
\end{figure}

We demonstrate the following connection between decoupling and average-case AQEC:

\begin{lemma}[Adapted from~\cite{schumacher_approximate_2001}]
    Suppose that the quantum mutual information of $| \Psi_{RSE}^{\mathrm{noisy}} \rangle$ satisfies
    \begin{align} 
        I(R: E) \le \eps^4/2.
    \end{align}
    Then there exists a decoding channel $\cR: \cL(\C^D) \to \cL(\C^K)$ such that $(\cE, \cR)$ is an $\eps$-approximate average-case quantum error correcting code for noise channel $\cN$. 
\end{lemma}
\begin{proof}
    First, by the quantum Pinsker inequality, we may bound the trace distance between $\sigma_{RE}^{\text{noisy}}$ and $\sigma_{R}^{\text{noisy}} \otimes \sigma_{E}^{\text{noisy}}$ as follows:
    \begin{align}
         \frac{1}{2} \| \sigma_{RE}^{\text{noisy}} - \sigma_{R}^{\text{noisy}} \otimes \sigma_{E}^{\text{noisy}} \|_1 &\le \left( \frac{\ln(2)}{2} S\big(\sigma_{RE}^{\text{noisy}} \| \sigma_{R}^{\text{noisy}} \otimes \sigma_{E}^{\text{noisy}}\big) \right)^{1/2}
         \\
         &= \left( \frac{\ln(2)}{2} I(R:E) \right)^{1/2}
         \\
         &\le \eps^2/2.
    \end{align}
    By the Fuchs--van de Graaf inequalities~\cite{Watrous2018TQI}, it follows that the fidelity (defined with a ``square'') between $\sigma_{RE}^{\text{noisy}}$ and $\sigma_{R}^{\text{noisy}} \otimes \sigma_{E}^{\text{noisy}}$ satisfies
    \begin{align}\label{eq:fvdg}
        F(\sigma_{RE}^{\text{noisy}}, \sigma_{R}^{\text{noisy}} \otimes \sigma_{E}^{\text{noisy}}) \ge (1 - \eps^2/2)^2.
    \end{align}
    By Uhlmann's theorem, there exists a purification $| \Psi_{RSE}^{\mathrm{decoup}} \rangle$ of $\sigma_{R}^{\text{noisy}} \otimes \sigma_{E}^{\text{noisy}}$ such that
    \begin{align}\label{eq:uhlmann}
         F(\sigma_{RE}^{\text{noisy}}, \sigma_{R}^{\text{noisy}} \otimes \sigma_{E}^{\text{noisy}}) = | \langle \Psi_{RSE}^{\mathrm{decoup}} | \Psi_{RSE}^{\mathrm{noisy}} \rangle|^2.
    \end{align}
    Since $\sigma_{R}^{\text{noisy}} \otimes \sigma_{E}^{\text{noisy}}$ is perfectly decoupled, there exists a channel $\cR: \cL(\C^D) \to \cL(\C^K)$ acting on $S$ alone such that
    \begin{align}
        \cR \otimes \mathrm{id}_R(\sigma_{RS}^{\mathrm{decoup}}) = \sigma_{RQ}.
    \end{align}
    Now consider the entanglement fidelity of $\sigma_Q$ under $\cR \circ \cN \circ \cE$. We have
    \begin{align}
        F_e(\sigma_Q, \cR \circ \cN \circ \cE) &= | \langle \Psi_{RQ}| \cR \circ \cN \circ \cE \otimes \mathrm{id}_R (\sigma_{RQ}) | \Psi_{RQ} \rangle|
        \\
        &= F(\sigma_{RQ}, \cR \circ \cN \circ \cE \otimes \mathrm{id}_R (\sigma_{RQ}))
        \\
        &= F(\cR \otimes \mathrm{id}_R(\sigma_{RS}^{\mathrm{decoup}}), \cR \circ \cN \circ \cE \otimes \mathrm{id}_R (\sigma_{RQ}))
        \\
        &\ge F(\sigma_{RS}^{\mathrm{decoup}}, \cN \circ \cE \otimes \mathrm{id}_R (\sigma_{RQ}))
        \\
        &\ge F(\sigma_{RSE}^{\mathrm{decoup}}, \sigma^{\mathrm{noisy}}_{RSE}))
        \\
        &\ge (1-\eps^2/2)^2,
    \end{align}
    where the first and second inequalities follow from monotonicity of fidelity under $\cR$ and $\tr_E$, respectively, and the final line by Eq.~\eqref{eq:fvdg} and Eq.~\eqref{eq:uhlmann}. Rearranging, we obtain
    \begin{align}
        \left(1 - F_e(\sigma_Q, \cR \circ \cN \circ \cE) \right)^{1/2} \le (1-(1-\eps^2/2)^2)^{1/2} \le \eps,
    \end{align}
    as desired.
\end{proof}

\subsection{Optimality for Bounded Entropy Codes}\label{sec: entropic_singleton}
Having established approximate correctability from decoupling, we next ask whether the resulting rate-distance tradeoff is optimal. For arbitrary quantum codes, the relevant converse is the quantum Singleton bound. When the relevant reduced states of the maximally mixed code state have entropy below the maximum allowed by their Hilbert-space dimension, the dimension-based Singleton bound can be loose. The entropic quantum Singleton bound incorporates these actual reduced-state entropies and can therefore provide a tighter converse. We begin by recalling the standard quantum Singleton bound.
\begin{theorem}[Quantum Singleton bound \cite{knill_theory_1997}] Let $\mathcal C$ be an $[[N,k,d]]$ quantum error-correcting code where $N$ is the number of physical qubits, $k = \log_2 \dim \mathcal C$ is the number of logical qubits, and $d$ is the code distance. Then \begin{equation}
    k \le N-2(d-1).
\end{equation}
Equivalently, if the code corrects arbitrary erasures of $d-1$ qubits, writing the encoding rate as $r = k/N$ and correctable erasure fraction as $p = (d-1)/N$, 
\begin{equation}
    p \leq \frac{1-r}{2}.
\end{equation}
\end{theorem}

For standard quantum codes, optimality is typically determined through approximate saturation of this bound, together with other constraints such as the quantum Hamming bound \cite{ekert_error_1996}. However, when restricting to families of codes with bounded entropy, these conventional bounds generally become loose and are no longer the relevant limits. The usual Singleton bound only uses the
local Hilbert-space dimension and does not account for the fact that
the relevant reduced states may occupy a much smaller entropic
subspace. The entropic Singleton bound of
Grassl--Huber--Winter~\cite{grassl_entropic_2022} refines this
statement by replacing the maximal entropy of erased regions by their
actual entropy.

\begin{theorem}[Entropic Quantum Singleton Bound (Adapted from \cite{grassl_entropic_2022})]\label{thm:entropic_singleton} Let $\mathcal C$ be an $[[N,k,d]]$ quantum error-correcting code with projector $P_{\mathcal C}$ and let $\bar \rho_S = \frac{1}{K}P_{\mathcal C}$ 
    be the maximally mixed state on the code. Let $\ket{\Omega_{RS}}$ be a purification of $\bar \rho_S$, so that $S(\Omega_R)=\log_2 K=k$. For any subsystem $A$, write $S(A)\coloneqq S(\Omega_A)$. Suppose that $\mathcal C$ exactly corrects every erasure of $d-1 < N/2$ qubits. Then, 
    \begin{equation} \label{eq:entropic_singleton_avg}
\log_2 K
\le
\frac{N-2(d-1)}{d-1}
\overline S_{d-1},
\qquad
\overline S_{d-1}
\coloneqq
\mathbb E_{|A|=d-1} S(A),
\end{equation}
where the average is over all regions $A$ of size $d-1$ \cite{grassl_entropic_2022}.

Moreover, for any two disjoint correctable erasure regions $A,C$ with $|A|=|C|=d-1$, defining $B=[N]\setminus(A\cup C)$,
one has the pairwise bound
\begin{equation}\label{eq:entropic_singleton_pairwise}
\log_2 K
\le
\min_{\substack{A\cap C=\emptyset\\ |A|=|C|=d-1}}
S\left([N]\setminus(A\cup C)\right).
\end{equation}
For $p=(d-1)/N$, this becomes a bound on the achievable rate $r=\log_2K/N$.
\end{theorem}

The inequality Eq.~\eqref{eq:entropic_singleton_avg} is the averaged entropic Singleton bound of Ref.~\cite{grassl_entropic_2022}. This averaged bound provides the natural benchmark when the entropies of erasure regions of a fixed size are spatially uniform, or asymptotically comparable. For general bounded-entropy codes, however, correctability is a worst-case requirement: every erasure of size $d-1$ must be correctable. Consequently, the Singleton argument may be applied separately to every pair of disjoint correctable regions $A$ and $C$. Exact correctability implies $I(E:R) = 0$, and purity together with subadditivity and taking the minimum amongst such pairs gives Eq.~\eqref{eq:entropic_singleton_pairwise}. This pairwise form gives the operative constraint when the reduced-state entropies are spatially nonuniform.

For the approximate codes studied in this work, we use the entropic Singleton bound as an asymptotic converse. Namely, for a family of codes indexed by system size $N$, vanishing decoupling error implies that the mutual-information corrections to the exact Singleton argument are $o(N)$. Hence the bound holds at the level of entropy densities and rates in the thermodynamic limit, up to terms that vanish as $N \rightarrow \infty$. We will use the pairwise form of the entropic Singleton bound, which involves a minimum over disjoint correctable erasure regions. Thus, the converse is controlled by the least-entropic complementary region $B=[N]\setminus(A\cup C)$ of the relevant size; possible boundary- or geometry-dependent entropy increases for other choices of $A,C$ do not weaken this converse. 

It will be convenient to fix a canonical purification of the maximally
mixed code state, written explicitly in terms of the encoding isometry. Let $V:\mathbb C^{K}\to\mathcal H_S$ be an encoding isometry, let
$Q\cong\mathbb C^{K}$ denote the logical space, and let $R\cong\mathbb C^{K}$ be a reference system.  Writing
\begin{equation}
  |\omega_K\rangle_{RQ}
  :=
  \frac{1}{\sqrt{K}}\sum_{i=1}^{K}|i\rangle_R\otimes|i\rangle_Q
\end{equation}
for the maximally entangled state on $RQ$, we define the encoded state
\begin{equation}
  |\Omega(V)\rangle_{RS}:=(\mathbbm 1_R\otimes V)\,|\omega_K\rangle_{RQ},
  \qquad
  \Omega(V):=|\Omega(V)\rangle\!\langle\Omega(V)| .
  \label{eq:encoded-purification}
\end{equation}
Reduced states of $\Omega(V)$ are denoted by the same symbol with a
subscript indicating the retained subsystem, so that
\begin{equation}
  \Omega_S(V)=\frac{VV^\dagger}{K}=\overline\rho,
  \qquad
  \Omega_R(V)=\frac{\mathbbm 1_R}{K},
  \qquad
  S(R)_{\Omega}=\log_2 K ,
\end{equation}
and $|\Omega(V)\rangle_{RS}$ is therefore a purification of
$\overline\rho$ of the form appearing in
Theorem~\ref{thm:entropic_singleton}.  For an erased region
$E\subseteq[N]$ with complement $B=[N]\setminus E$, we write
$\Omega_{RE}(V)=\operatorname{tr}_B\Omega(V)$ and define the decoupling
error
\begin{equation}
  \Delta_E(V):=
  \left\|
    \Omega_{RE}(V)-\frac{\mathbbm 1_R}{K}\otimes\Omega_E(V)
  \right\|_1 .
  \label{eq:decoupling-error}
\end{equation}
Entropies and mutual informations carrying a subscript $\Omega$, such as
$S(A)_\Omega$ and $I(R:A)_\Omega$, are evaluated in $\Omega(V)$. We now specialize this converse to uniform volume-law codes, for which the pairwise entropic Singleton bound reduces to a direct constraint on the rate and correctable erasure fraction. 

\begin{corollary}[Volume-law entropic Singleton converse]
\label{corr:entropic_singleton_volume_law}
Let $V_N:\mathbb C^{K_N}\to(\mathbb C^2)^{\otimes N}$ be a sequence
of encoding isometries, and define $\overline\rho_N
:=
\frac{V_NV_N^\dagger}{K_N}$ and $r_N
:=
\frac{\log_2K_N}{N}$. Let $e_N<N/2$ and $p_N:=e_N/N$. Suppose that
\begin{equation}
\max_{|E|=e_N}
\left\|
\Omega_{RE}(V_N)
-
\frac{\mathbbm 1_R}{K_N}\otimes\Omega_E(V_N)
\right\|_1
\longrightarrow0,
\end{equation}
and that, for some $s_N>0$,
\begin{equation}
\min_{\substack{A\cap C=\emptyset\\
|A|=|C|=e_N}}
S\left(
(\overline\rho_N)_{[N]\setminus(A\cup C)}
\right)
\leq
s_N(N-2e_N)+o(N).
\label{eq:volume-law-entropy-upper-bound}
\end{equation}
If $r_N\longrightarrow r$, $p_N\longrightarrow p$, $s_N\longrightarrow s>0$, then
\begin{equation}
p
\leq
\frac{s-r}{2s}.
\label{eq:vol_entropic_Singleton}
\end{equation}
\end{corollary}

\begin{proof}
Let $A$ and $C$ be disjoint regions of size $e_N$, and set
$D=[N]\setminus(A\cup C)$. Since $\Omega_{RACD}$ is pure,
\begin{equation}
\begin{aligned}
I(R:A)_\Omega+I(R:C)_\Omega
&=
2S(R)_\Omega+S(A)_\Omega+S(C)_\Omega -
S(CD)_\Omega-S(AD)_\Omega \\
&\geq
2S(R)_\Omega-2S(D)_\Omega,
\end{aligned}
\end{equation}
where the inequality follows from subadditivity. Therefore,
\begin{equation}
\log_2K_N
=
S(R)_\Omega
\leq
S\left((\overline\rho_N)_D\right)
+
\frac{1}{2}I(R:A)_\Omega
+
\frac{1}{2}I(R:C)_\Omega.
\label{eq:approximate-pairwise-singleton}
\end{equation}
The assumed decoupling condition and continuity of entropy imply, uniformly over all $E$ of size $e_N$, $I(R:E)_\Omega=o(N)$. Taking the minimum of Eq.~\eqref{eq:approximate-pairwise-singleton} over disjoint $A,C$ and applying Eq.~\eqref{eq:volume-law-entropy-upper-bound} gives \begin{equation}
r_N N
\leq
s_N(N-2e_N)+o(N)
=
s_NN(1-2p_N)+o(N).
\end{equation}
Dividing by $N$ and taking $N\to\infty$ yields $r\leq s(1-2p)$, which is equivalent to
Eq.~\eqref{eq:vol_entropic_Singleton}.
\end{proof}

\section{Scrooge Achievability Bounds}\label{sec:scrooge_achieve}

In the previous section, we showed that, for codes families whose maximally mixed code states have bounded entropy, the entropic quantum Singleton bound can be tighter than the usual dimension-based quantum Singleton bound. A natural question is whether this tighter bound can be achieved. We prove that it can, by constructing codes which naturally account for constraints, such as energy or charge conservation, which lead to bounded entropy. These codes are built by drawing codewords randomly from the \emph{Scrooge ensemble} of a density matrix $\rho$ encoding the constraints, then orthogonalizing the states to form a valid coding isometry. The Scrooge ensemble is a natural choice for this code construction, since it has the smallest accessible information among all state ensembles whose first moment is $\rho$. Furthermore, for maximally mixed $\rho$, the Scrooge ensemble is simply the Haar ensemble, which is known to yield an asymptotically optimal code construction in the constraint-free setting~\cite{ma_haar_2025}.

We prove that these so-called \emph{Scrooge random codes} satisfy a decoupling condition; by the analysis of Sec.~\ref{sec:aqec_decoupling}, this implies approximate recoverability. After proving a decoupling bound which holds for any choice of average state $\rho$, we prove that two further assumptions imply a random Scrooge code saturates the entropic quantum Singleton bound. The first is that the rate $r$ of the code is at most the min entropy-density of $\rho$; otherwise, the support of $\rho$ is not large enough to accommodate the codespace. The second is a physically-motivated spectral concentration assumption on the reduced states of $\rho$. In the following Sec.~\ref{sec:tightness}, we provide explicit examples of $\rho$ which satisfy these conditions as well as bounded entropy, proving that the entropic quantum Singleton bound is tight for all values of the rate, erasure fraction, and entropy density. Finally, we discuss the case that $\rho$ is the Gibbs state of a local Hamiltonian, providing an analytically tractable model for the thermalization codes discussed in the main text. In all cases, we prove that the decoupling condition holds simultaneously for all erasures of a given size, thus proving worst-case error correction.

\subsection{Scrooge Random Codes}\label{sec:scrooge_codes_def}

We begin by formally defining the Scrooge ensemble:

\begin{defn}[Scrooge ensemble]
\label{def:scrooge}
Let $\rho$ be a density matrix of rank $d_\rho := \mathrm{rank}(\rho)$ on $\cL(\cH)$ for some Hilbert space $\cH$.  Draw a unit vector
$|\psi\rangle\in\mathrm{supp}(\rho)$ from the probability density
$d_\rho\,\langle\psi|\rho|\psi\rangle$
relative to normalized Haar measure, and set
\begin{equation}
  |\phi\rangle
  =\frac{\sqrt\rho\,|\psi\rangle}
  {\sqrt{\langle\psi|\rho|\psi\rangle}}.
  \label{eq:scrooge-definition}
\end{equation}
The resulting distribution of pure states is denoted $\Scr(\rho)$
\cite{jozsa_lower_1994}.
\end{defn}
\noindent
The Scrooge ensemble is also known as the \emph{Gaussian adjusted projected }(GAP) ensemble \cite{goldstein_distribution_2006, reimann_typicality_2008}, with an equivalent construction as follows. Let
$|g\rangle$ be a standard complex Gaussian vector in $\mathrm{supp}(\rho)$, put
$|y\rangle=\sqrt\rho|g\rangle$ and $T=\langle y|y\rangle$. Reweight the Gaussian measure on $\ket{g}$ by $T$,
and project $|y\rangle$ to the unit sphere.  In this form it is immediate that
\begin{equation}
  \underset{|\phi\rangle\sim\Scr(\rho)}{\mathbb E}\dyad{\phi}
  =\mathbb E_g\,|y\rangle\!\langle y|=\rho.
  \label{eq:scrooge-first-moment}
\end{equation}
Draw $K$ independent samples $|\phi_1\rangle,\ldots,|\phi_K\rangle\sim\Scr(\rho)$ and define
\begin{equation}
  \Phi=\begin{pmatrix}|\phi_1\rangle&\cdots&|\phi_K\rangle\end{pmatrix},
  \qquad
  G=\Phi^\dagger\Phi.
\end{equation}
For $K\le\rank\rho$ -- which we will always assume -- linear independence holds almost surely. In this case, $G$ is invertible, and symmetric orthogonalization~\cite{lowdin_nonorthogonality_1970} defines an encoding isometry $V$ according to
\begin{equation}
  V=\Phi G^{-1/2},
  \qquad V^\dagger V=\id_K.
  \label{eq:lowdin}
\end{equation}
The corresponding code projector and unorthogonalized operator are
\begin{equation}
  P=VV^\dagger=\Phi G^{-1}\Phi^\dagger,
  \qquad
  \widetilde P=\Phi\Phi^\dagger
  =\sum_{i=1}^K\ketbra{\phi_i}.
  \label{eq:P-and-Ptilde}
\end{equation}

We first show that the Gram matrix $G$ is close to identity with high probability, so that the normalization can be essentially ignored. For fixed $\eta >0$, the failure probability is superexponentially small whenever $K\|\rho\|_{\mathrm{op}}$ is exponentially small, where $\|\rho\|_{\mathrm{op}}$ denotes the largest eigenvalue of $\rho$. More generally, the relevant concentration parameter is $\eta^2/(K\|\rho\|_{\mathrm{op}})$. We will use the matrix Bernstein inequality:

\begin{fact}[Matrix Bernstein inequality~\cite{tropp_user-friendly_2012}]
\label{fact:matrix-bernstein}
Let $X_1,\ldots,X_m$ be independent mean-zero Hermitian random matrices of size $K\times K$.  Suppose
that for all integers $p\ge2$ and both signs,
\begin{equation}
  \mathbb E[(\pm X_j)^p]
  \preceq\frac{p!}{2}R^{p-2}A_j^2
\end{equation}
for some $R>0$ and positive semidefinite $A_j^2$.  Letting
$\sigma^2=\opnorm{\sum_jA_j^2}$, for all $t \ge 0$
\begin{equation}
  \mathrm{Pr}\!\left\{ \Big\| \sum_jX_j \Big\|\ge t\right\}
  \le2K\exp\!\left(-\frac{t^2/2}{\sigma^2+Rt}\right).
  \label{eq:matrix-bernstein}
\end{equation}
\end{fact}

We now prove a concentration inequality for the Gram matrix of independent Scrooge random vectors, thereby lower-bounding the probability of the ``good Gram event'' $\| G - \id_K \| \le \eta$.

\begin{lemma}[Gram-matrix concentration]
\label{lem:gram}
Let $\lambda=\opnorm{\rho}$ and let $G$ be the Gram matrix of $K$ independent
$\Scr(\rho)$ samples.  There are universal constants $c,C>0$ such that, for
$0<\eta\le1/2$,
\begin{equation}
  p_\eta \coloneqq \mathrm{Pr}\{\opnorm{G-\id_K}>\eta\}
  \le C\sqrt K\,
  \exp\!\left(\frac{K\lambda}{2}
  -c\frac{\eta^2}{K\lambda}\right).
  \label{eq:gram-tail}
\end{equation}
\end{lemma}

\begin{proof}
Diagonalize $\rho=\sum_{a=1}^{d_\rho} p_a\ketbra{a}$ and let
$|g_1\rangle,\ldots,|g_K\rangle$ be independent standard complex Gaussian vectors.  Define
\begin{equation}
  \widetilde G_{ij}=\langle g_i|\rho|g_j\rangle.
\end{equation}
Writing $v_a=((g_1)_a,\ldots,(g_K)_a)^\top\in\mathbb C^K$ for $(g_i)_a = \braket{g_i}{a}$, one has
\begin{equation}
  \widetilde G-\id_K
  =\sum_{a=1}^{d_\rho} p_a(v_av_a^\dagger-\id_K).
  \label{eq:weighted-wishart}
\end{equation}
The summands are independent and mean zero.  By unitary invariance,
$\mathbb E(vv^\dagger-\id)^p$ is a scalar multiple of the identity. Taking the trace, if
$R_v=\|v\|_2^2\sim\mathrm{Gamma}(K,1)$, then we see that
\begin{equation}
  \mathbb E(vv^\dagger-\id)^p
  =\frac{\mathbb E(R_v-1)^p+(K-1)(-1)^p}{K}\,\id_K.
\end{equation}
We can bound these moments directly. Since
$\mathbb E R_v^p=p!\binom{K+p-1}{p}$, one has
\begin{equation}
  \mathbb E R_v^p\le p!(4K)^p.
  \label{eq:gamma-moment-bound}
\end{equation}
Indeed, if $p<K$, then $p!\binom{K+p-1}{p}\le(K+p)^p\le(2K)^p$; if $p\ge K$, then
$\binom{K+p-1}{p}\le2^{K+p-1}\le4^p$.  Using
$(R_v+1)^p\le2^{p-1}(R_v^p+1)$ (by Jensen's inequality), it follows that, for a universal $C_0$,
\begin{equation}
  \frac{1}{K}\left(\mathbb E(R_v+1)^p+K-1\right)
  \le p!\,C_0^pK^{p-1}.
  \label{eq:centered-gamma-moment}
\end{equation}
Since $|R_v-1|^p\le(R_v+1)^p$, the absolute value of the scalar coefficient
of $\id_K$ is bounded by the left-hand side of Eq.~\eqref{eq:centered-gamma-moment}.  Because $p_a^{p-2}\le\lambda^{p-2}$, we can choose sufficiently large universal
constants $C_1,C_2$ to satisfy the Bernstein moment condition with
$R=C_1K\lambda$ and $A_a^2=C_2Kp_a^2\id_K$.  Thus,
$\sigma^2\le C_2K\tr\rho^2\le C_2K\lambda$, and
Fact~\ref{fact:matrix-bernstein} gives, for $0<t\le1$,
\begin{equation}
  \Pr_0\{\| \widetilde G-\id_K \|>t\}
  \le2K\exp\!\left(-c_0\frac{t^2}{K\lambda}\right).
  \label{eq:gaussian-gram-tail}
\end{equation}
Here $\Pr_0$ denotes the usual, unweighted Gaussian law.

Let $D=\mathrm{diag}(\widetilde G)$ and define
\begin{equation}
  |\varphi_i\rangle
  =\frac{\sqrt\rho|g_i\rangle}
  {\sqrt{\langle g_i|\rho|g_i\rangle}}.
\end{equation}
The Gram matrix of these normalized vectors is
$G=D^{-1/2}\widetilde GD^{-1/2}$.  If
$\| \widetilde G-\id \| \le\eta/4$, then
$\opnorm{D-\id}\le\eta/4$ and
\begin{equation}
  \opnorm{G-\id}
  \le\frac{\eta/2}{1-\eta/4}\le\eta,
  \label{eq:normalize-gram}
\end{equation}
by submultiplicativity of operator norm.

The estimate Eq.~\eqref{eq:gaussian-gram-tail} was derived under the unweighted Gaussian law, in which the directions $|\psi_i \rangle = |g_i\rangle /\||g_i\rangle\|_2$ are Haar random. The Scrooge samples instead have directions drawn from the tilted density of Definition~\ref{def:scrooge}, so we must transfer the bound between the two measures. Writing $q_i = \langle \psi_i|\rho |\psi_i\rangle$, the joint likelihood ratio of the Scrooge directions relative to independent Haar directions is $L = \prod_{i=1}^K d_\rho q_i$. For a Haar-random unit vector,
\begin{equation}
  \mathbb E_0[q_i^2]
  =\frac{1+\tr\rho^2}{d_\rho(d_\rho+1)},
\end{equation}
so
\begin{equation}
  \mathbb E_0[L^2]
  =\left(\frac{d_\rho(1+\tr\rho^2)}{d_\rho+1}\right)^K
  \le\exp(K\lambda).
\end{equation}
Cauchy--Schwarz, Eq.~\eqref{eq:gaussian-gram-tail}, and
 Eq.~\eqref{eq:normalize-gram} now prove Eq.~\eqref{eq:gram-tail}, after changing universal constants.
\end{proof}

The following comparison will be used repeatedly. On the event
$\opnorm{G-\id_K}\le\eta$,
\begin{equation}
  \frac{1}{1+\eta}\widetilde P
  \preceq P
  \preceq
  \frac{1}{1-\eta}\widetilde P.
  \label{eq:loewner-comparison}
\end{equation}
Consequently, for any positive linear map $\mathcal T$,
\begin{equation}
  \frac{1}{1+\eta}\mathcal T(\widetilde P)
  \preceq
  \mathcal T(P)
  \preceq
  \frac{1}{1-\eta}\mathcal T(\widetilde P).
  \label{eq:loewner-comparison-positive-map}
\end{equation}
Since all three operators are positive semidefinite, Weyl monotonicity implies
the corresponding comparison of their eigenvalues. Therefore,
\begin{equation}
  \frac{1}{(1+\eta)^2}
  \tr\!\left[\mathcal T(\widetilde P)^2\right]
  \le
  \tr\!\left[\mathcal T(P)^2\right]
  \le
  \frac{1}{(1-\eta)^2}
  \tr\!\left[\mathcal T(\widetilde P)^2\right].
  \label{eq:loewner-square-trace-comparison}
\end{equation}

\subsection{Scrooge Second Moment}

We now derive the exact second moment of the Scrooge ensemble in a form suited to the decoupling calculation. This result was previously derived in componentwise form for the GAP ensemble in~\cite{reimann_typicality_2008} and follows as a special case of the general moment formulas in~\cite{mark_maximum_2024}.

\begin{lemma}[Exact Scrooge second moment]
\label{lem:exact-scrooge-second}
Let
$\rho=\sum_{a=1}^{d_\rho} p_a\ketbra{a}$ with $p_a>0$, and let
$\lambda=\max_a p_a$.  Then
\begin{align}
  \chi^{(2)}_{\Scr(\rho)}
  &:={\mathbb E}_{\ket{\phi} \sim\Scr(\rho)}\big[\ketbra\phi^{\otimes2}\big] =\sum_{a,b=1}^{d_\rho} p_ap_bK_{ab}
  \left(
    \ketbra a\otimes\ketbra b
    +\ketbra{a}{b}\otimes\ketbra{b}{a}
  \right),
  \label{eq:exact-scrooge-second}
\end{align}
where
\begin{equation}
  K_{ab}=\int_0^\infty
  \frac{\mathrm{d} x}
  {(1+xp_a)(1+xp_b)\prod_{c=1}^{d_\rho}(1+xp_c)}.
  \label{eq:Kab}
\end{equation}
Moreover, if $\lambda<1$, then
\begin{equation}
  0<K_{ab}\le\frac{1}{1-\lambda}.
  \label{eq:Kab-bound}
\end{equation}
\end{lemma}

\begin{proof}
We use the adjusted Gaussian representation of the Scrooge ensemble.  Let $z_a$ be independent standard complex Gaussian
variables normalized by $\mathbb E_0 [z_a\bar{z}_a] = \delta_{ab}$ and let $\mathbb E_0$ denote expectation under this unweighted Gaussian law. Let $|y\rangle=\sum_a\sqrt{p_a}z_a|a\rangle$, and 
$T=\sum_ap_a|z_a|^2$.  Then
\begin{equation}
  \chi^{(2)}_{\Scr(\rho)}
  =\mathbb E_0\left[\frac{\ketbra y^{\otimes2}}{T}\right].
\end{equation}
Insert $T^{-1}=\int_0^\infty e^{-xT}\mathrm{d} x$.  If
$G(x)=\prod_c(1+xp_c)^{-1}$, independence of the Gaussian coordinates and Wick's rule give
\begin{align}
  &\mathbb E\!
  \left[z_i\overline z_jz_k\overline z_l\,e^{-xT}\right] =G(x)\left(
  \frac{\delta_{ij}\delta_{kl}}
       {(1+xp_i)(1+xp_k)}
  +\frac{\delta_{il}\delta_{kj}}
       {(1+xp_i)(1+xp_k)}
  \right).
  \label{eq:tilted-wick}
\end{align}
The first pairing produces
$\ketbra i\otimes\ketbra k$, while the second produces
$\ketbra{i}{k}\otimes\ketbra{k}{i}$.  Integrating over $x$ therefore gives
 Eq.~\eqref{eq:exact-scrooge-second}, Eq.~\eqref{eq:Kab}.

For the bound, again set
$G(x)=\prod_c(1+xp_c)^{-1}$.  Since $K_{ab}\le\int_0^\infty G(x)\mathrm{d} x$, it is enough to
bound the latter integral.  For fixed $x\ge0$, the function
$t\mapsto t^{-1}\log (1+xt)$ is decreasing on $(0,\infty)$.  Since $p_c\le\lambda$ and
$\sum_cp_c=1$,
\begin{equation}
  \sum_c\log (1+xp_c)
  \ge\frac{1}{\lambda}\log (1+x\lambda).
\end{equation}
Hence
\begin{equation}
  G(x)\le(1+x\lambda)^{-1/\lambda},
\end{equation}
and therefore
\begin{equation}
  \int_0^\infty G(x)\mathrm{d} x
  \le\int_0^\infty(1+x\lambda)^{-1/\lambda}\mathrm{d} x
  =\frac{1}{1-\lambda}.
\end{equation}
\end{proof}

Let $\cH=\cH_E\otimes\cH_B$ and let
$L=\Pi_E\otimes\Pi_B$ be a product projector.  Define
\begin{equation}
  \rho'=L\rho L,
  \qquad
  q_E=\tr((\rho'_E)^2),
  \qquad
  q_B=\tr((\rho'_B)^2).
  \label{eq:qE-qB}
\end{equation}

\begin{lemma}[Purity of a projected codeword]
\label{lem:projected-purity}
Let $|\phi\rangle\sim\Scr(\rho)$ and assume
$\lambda=\opnorm{\rho}<1$.  Then
\begin{align}
  M_L
  &:=\mathbb E\left[\tr\!\left(
    \tr_E(L\ketbra\phi L)^2
  \right)\right] =\mathbb E\left[\tr\!\left(
    \tr_B(L\ketbra\phi L)^2
  \right)\right]
  \le\frac{q_E+q_B}{1-\lambda}.
  \label{eq:projected-purity-bound}
\end{align}
\end{lemma}

\begin{proof}
For each projected vector $|x\rangle=L|\phi\rangle$, the two (unnormalized) reduced states $\tr_E(\ketbra x)$ and $\tr_B(\ketbra x)$ have the same spectrum (by the Schmidt decomposition).  Their purities are therefore
identical, proving the equality in Eq.~\eqref{eq:projected-purity-bound}.

To prove the upper bound, let $|x_a\rangle=L|a\rangle$ and
$X_a=\ketbra{x_a}$.  Contracting Eq.~\eqref{eq:exact-scrooge-second} with the partial swap on $B$
gives, after some rearrangement,
\begin{align}
  M_L
  =\sum_{a,b}p_ap_bK_{ab}\left(
  \tr((X_a)_B(X_b)_B) + \tr((X_a)_E(X_b)_E)\right).
  \label{eq:ML-expansion}
\end{align}
Since all $X_a$ are PSD, so too are their partial traces. Thus, the summands in parentheses are all nonnegative. Applying Eq.~\eqref{eq:Kab-bound} and summing over $a,b$ yields
\begin{equation}
  M_L\le\frac{1}{1-\lambda}
  \left(\tr((\rho'_B)^2)+\tr((\rho'_E)^2)\right).
\end{equation}
\end{proof}
\subsection{Scrooge Decoupling Inequality}
In this section, we prove a non-asymptotic, average decoupling inequality for Scrooge codes corresponding to a general density matrix $\rho$.

Introduce an auxiliary system $R$ and let $\ket{\omega_K}_{RQ} = \frac{1}{\sqrt{K}} \sum_{i=1}^K \ket{i}_R \ket{i}_Q$ denote the maximally entangled state on $RQ$, for $\cH_Q \cong \C^K$. Let $V: \cH_Q \to \cH$ be the Scrooge coding isometry in Eq.~\eqref{eq:lowdin}, and define
\begin{equation}
  |\Omega\rangle_{RS}
  =(\id_R\otimes V)|\omega_K\rangle.
\end{equation}
We decompose $\cH = \cH_E \otimes \cH_B$ for fixed subsystems $E$ and $B$. Let $\Omega = \ketbra{\Omega}$. Write the decoupling error as
\begin{equation}
  \Delta(V)
  := \Big\| \Omega_{RE}- \frac{\id_R}{K} \otimes \Omega_E \Big\|_1,
  \label{eq:Delta-definition}
\end{equation}
which we use throughout. Again, let $L = \Pi_E \otimes \Pi_B$ be a product projector. Further, let
\begin{equation}
  p_\eta=\mathrm{Pr}\{\opnorm{G-\id_K}>\eta\},
  \qquad
  \ell=1 - \tr(L \rho),
  \qquad
  d_E=\rank\Pi_E.
\end{equation}

\begin{theorem}[Scrooge decoupling inequality]
\label{thm:finite-scrooge-decoupling}
There is a universal constant $C>0$ such that, whenever
$0<\eta\le1/4$ and $\lambda=\opnorm{\rho}\le1/4$, 
\begin{align}
  \mathbb E\Big\| \Omega_{RE}- \frac{\id_R}{K} \otimes \Omega_E \Big\|_1
  \le C\bigg(&
    \sqrt{K d_E q_B}
    +\sqrt{d_E(\lambda+\eta)q_E}
    +\sqrt{K d_Ep_\eta}+\sqrt{\frac{\ell}{1-\eta}+p_\eta}
  ~\bigg),
  \label{eq:finite-decoupling}
\end{align}
for any product projector $L = \Pi_E \otimes \Pi_B$. Here, $q_E$ and $q_B$ are defined in Eq.~\eqref{eq:qE-qB}.
\end{theorem}

\begin{proof}
Let $\Omega'=(\id_R \otimes L)\Omega(\id_R \otimes L)$. By triangle inequality and contractivity of trace norm under partial trace, we have
\begin{align}
    \Delta(V) &\le \Big\| \Omega'_{RE} - \frac{\id_R}{K} \otimes \Omega'_E \Big\|_1 + \| \Omega_{RE} - \Omega'_{RE} \|_1 + \| \Omega_{E} - \Omega'_{E} \|_1 
    \\
    &\le \Big\| \Omega'_{RE} - \frac{\id_R}{K} \otimes \Omega'_E \Big\|_1 + 2 \| \Omega_{RE} - \Omega'_{RE} \|_1.
\end{align}
Let $P'=LPL$ and $\widetilde P'=L\widetilde PL$, for $P$ and $\widetilde{P}$ the normalized and unorthogonalized operator in Eq.~\eqref{eq:P-and-Ptilde}. We denote their reduced
operators by $P'_E:=\tr_B P'$, $P'_B:=\tr_E P'$, $\widetilde P'_E:=\tr_B\widetilde P'$, and $\widetilde P'_B:=\tr_E\widetilde P'$. Again using contractivity of trace norm, along with the gentle measurement lemma~\cite{winter_coding_1999}, we have $\| \Omega_{RE} - \Omega'_{RE} \|_1 \le \| \Omega - \Omega' \|_1 \le 3 \sqrt{1 - \tr((\id_R \otimes L)\Omega)} = 3 \sqrt{\delta_L}$ for $\delta_L = 1 - \tr(LP)/K$; here we have used $\Omega_S = P/K$. Thus, by Jensen's inequality for the square root
\begin{align}
    \E[\Delta(V)] \le \E\Big\| \Omega'_{RE} - \frac{\id_R}{K} \otimes \Omega'_E \Big\|_1 + 6 \sqrt{\E[\delta_L]}.
\end{align}
We bound $\E[\delta_L]$ by conditioning on the good and bad Gram events. On the bad Gram event, we simply use $\delta_L \le 1$. On the good Gram event, Eq.~\eqref{eq:loewner-comparison} gives
$P\preceq(1-\eta)^{-1}\widetilde P$, so $\delta_L \le \mathrm{tr}((\id-L) \widetilde{P})/(K (1-\eta))$. Using Eq.~\eqref{eq:scrooge-first-moment}, we obtain
\begin{equation}
  \mathbb E[\delta_L]
  \le\frac{\ell}{1-\eta}+p_\eta.
  \label{eq:leakage-bound}
\end{equation}
We bound the first term by switching to Frobenius norm; after some algebra, we have the exact identity
\begin{equation}
  \Big\|\Omega'_{RE}
  -\frac{\id_R}{K}\otimes\Omega'_E\Big\|_2^2
  =\frac{1}{K^2}\tr((P'_B)^2)
  -\frac{1}{K^3}\tr((P'_E)^2).
  \label{eq:H2-code-projector}
\end{equation}
Indeed, the first term is $\tr((\Omega')_{RE}^2)$, while the second is 
$\tr((\id_R/K \otimes \Omega'_E)^2)$. On the good Gram event,
$(1+\eta)^{-1}\widetilde P'\preceq P'\preceq (1-\eta)^{-1}\widetilde P'$, while on the bad Gram event we simply use that the traces of all of $(P'_E)^2, (P'_B)^2, (\widetilde{P}'_E)^2, (\widetilde{P}'_B)^2$ are at most $K^2$. Consequently,
\begin{align}
  \E \Big\|\Omega'_{RE}
  -\frac{\id_R}{K}\otimes\Omega'_E\Big\|_2^2
  \le \frac{(1-\eta)^{-2}}{K^2}\E[\mathrm{tr}((\widetilde{P}'_B)^2)]
  -\frac{(1+\eta)^{-2}}{K^3}\E[\mathrm{tr}((\widetilde{P}'_E)^2)] + 2 p_\eta.
  \label{eq:H2-loewner}
\end{align}
We have
\begin{align}
    \E[\mathrm{tr}((\widetilde{P}'_B)^2)] &= \E\Big[\tr\Big(\tr_E\Big(\sum_{i=1}^K L \dyad{\phi_i}L\Big)^2\Big)\Big]
    \\
    &= K(K-1) \tr(\tr_E(L\rho L)^2) + K \E[\tr(\tr_E(L \dyad{\phi} L)^2)] \label{eq: independence}
    \\
    &= K(K-1)q_B + K M_L
\end{align}
where in Eq.~\eqref{eq: independence} we have used independence of the $\ket{\phi_i}$ and the Scrooge first-moment condition. Similarly, we have $\E[\mathrm{tr}((\widetilde{P}'_E)^2)] = K(K-1) q_E + K M_L$. After some algebra, we find that
\begin{align}
     \E \Big\|\Omega'_{RE}
  -\frac{\id_R}{K}\otimes\Omega'_E\Big\|_2^2
  \le (1-\eta)^{-2} \frac{2 - \lambda}{1-\lambda} q_B + \frac{1}{K} \left( \frac{(1-\eta)^{-2}}{1-\lambda} - (1 + \eta)^{-2} \right) q_E + 2 p_\eta .
\end{align}
Finally, using $\lambda, \eta \le 1/4$, we have
\begin{align}
    \E \Big\|\Omega'_{RE}
  -\frac{\id_R}{K}\otimes\Omega'_E\Big\|_2^2
  \le 5 \left( q_B + \frac{\lambda + \eta}{K} q_E \right) + 2 p_\eta. 
\end{align}
$\Omega_{RE}'$ and $(\id_R/K) \otimes \Omega_E'$ are both supported on a space of dimension at most $K d_E$, so by Cauchy-Schwarz we have $\E\| \Omega'_{RE} - (\id_R / K) \otimes \Omega'_E \|_1 \le \sqrt{K d_E} (\E \| \Omega'_{RE} - (\id_R / K) \otimes \Omega'_E \|_2^2)^{1/2}$. Applying $\sqrt{x+y} \le \sqrt{x} + \sqrt{y}$ for $x,y \ge 0$, the result follows.
\end{proof}

\subsection{Scrooge Rate for Entropy-concentrated Spectrum}

Having proven a decoupling inequality for Scrooge codes associated with a general state $\rho$, we now specialize to sequences of states $\rho_N$ for which most of the negative log-eigenvalues concentrate around their expected value $S(\rho_N)$. For a sequence with entropy-concentration parameters (defined below) $w_N, \epsilon_N \rightarrow 0$, all but vanishing spectral weight lies on a subspace with dimension $2^{S(\rho_N) + o(N)}$, within which the eigenvalues are $2^{-S(\rho) + o(N)}$. Thus, the state is spectrally flat at leading exponential order. For such states, we prove an asymptotic average decoupling inequality for the corresponding random Scrooge code, which in turn gives a lower bound on the achievable rate of that Scrooge code. In this section and what follows, we assume for simplicity that values like $r N$, $p N$, etc., which define numbers of qubits, are integers. We could generally use floor functions to treat the general case; this does not change our asymptotic conclusions. 

\begin{defn}[Entropy-concentrated spectrum]
\label{def:entropy-concentrated}
Let $\tau=\sum_j\lambda_j\ketbra j$ be a state on $m$ qubits.  For $w\ge0$, define
\begin{equation}
  \Pi_\tau(w)
  =\sum_{j:\,| -\log_2 \lambda_j-S(\tau)|\le mw}\ketbra j.
  \label{eq:entropy-window}
\end{equation}
We say that $\tau$ has a $(w,\epsilon)$-entropy-concentrated spectrum if
\begin{equation}
  \tr(\Pi_\tau(w)\tau)\ge1-\epsilon.
\end{equation}
\end{defn}
We note that the entropy-concentration projector $\Pi_\tau(w)$ satisfies 
\begin{equation}
  \rank\Pi_\tau(w)\le2^{S(\tau)+mw},
  \qquad
  \Pi_\tau(w)\tau\Pi_\tau(w)
  \preceq2^{-S(\tau)+mw}\Pi_\tau(w).
  \label{eq:window-properties}
\end{equation}
To derive these properties, note that, by construction, $2^{-S(\tau) - mw}\le \lambda_j \le 2^{-S(\tau) + mw}$ for any $j$ such that $\ket{j}$ is in the support of $\Pi_\tau(w)$. The first result in Eq.~\eqref{eq:window-properties} follows from the lower bound on $\lambda_j$, while the second result in Eq.~\eqref{eq:window-properties} follows from the upper bound.

We choose the projectors $\Pi_E=\Pi_{\rho_E}(w_E)$ and $\Pi_B=\Pi_{\rho_B}(w_B)$ and let
$L=\Pi_E\otimes\Pi_B$. We again let $\rho' = L \rho L$. We first observe that
\begin{align}
    d_E&\le2^{S(\rho_E)+|E|w_E},\label{eq:dE-window}
\end{align}
which follows from the first inequality in Eq.~\eqref{eq:window-properties}. We next claim that
\begin{align}
    q_B&\le2^{-S(\rho_B)+|B|w_B} , \quad q_E\le2^{-S(\rho_E)+|E|w_E} . \label{eq:qB-window}
\end{align}
To see this, we first observe the generic fact that $\rho_B' \preceq \Pi_B \rho_B \Pi_B$, with the same true for $B$ replaced by $E$. It suffices to show that $\tr_E(\Pi_E \rho \Pi_E) \preceq \tr_E(\rho)$, since conjugating by $\Pi_B$ preserves PSD order; for any $\ket{b}$ on $\cH_B$ we have $\bra{b} \tr_E(\Pi_E \rho \Pi_E) \ket{b} \le \tr(\rho ( \id_E \otimes \ketbra{b})) = \bra{b} \tr_E(\rho) \ket{b}$, so the result follows. Now, applying the second result in Eq.~\eqref{eq:window-properties}, we may conclude that $(\rho_B')^2 \preceq 2^{-S(\rho_B)+|B| w_B} \rho_B'$; taking the trace of both sides and using $\tr(\rho_B') \le 1$ implies the bound on $q_B$, with the bound on $q_E$ following identically with $B$ replaced by $E$. Finally, we claim that
\begin{align}
  \ell&\le\epsilon_E+\epsilon_B.\label{eq:ell-window}
\end{align}
To see this, observe that $\id_S - \Pi_E \otimes \Pi_B \preceq (\id_E - \Pi_E) \otimes \id_B + \id_E  \otimes (\id_B - \Pi_B)$. Multiplying both sides by $\rho$ and taking the trace gives $\ell \le \tr((\id_E - \Pi_E) \rho_E) + \tr((\id_B - \Pi_B) \rho_B) \le \epsilon_E + \epsilon_B$ by definition. We write $S_\infty(\rho):= - \log_2 \|\rho\|_{\mathrm{op}}$ for the min-entropy of $\rho$. 

We now use these results to bound the decoupling error for the Scrooge random code corresponding to a state with entropy-concentrated spectrum:

\begin{theorem}[Scrooge code rate for entropy-concentrated states]\label{thm:general-scrooge}
Let $\rho_N$ be a family of $N$-qubit states on a system $[N]=E_N \cup B_N$, where $|E_N|/N\to p$. Suppose that $\rho_{E_N}$ and
$\rho_{B_N}$ have $(w_{E_N},\epsilon_{E_N})$- and
$(w_{B_N},\epsilon_{B_N})$-entropy-concentrated spectra, where $w_{E_N},w_{B_N}, \epsilon_{E_N}, \epsilon_{B_N} \to 0$. Further, suppose that $S(\rho_{E_N})/|E_N| \to s_E$, $S(\rho_{B_N})/|B_N| \to s_B$, and $S_\infty(\rho_N)/N \to s_\infty$. Let $K_N = 2^{rN}$ for $r \ge 0$; so long as
\begin{align}
    r < \min \{s_\infty, (1-p) s_B - p s_E\}
\end{align}
we have $\E[\Delta(V_N)] \to 0$ for $V_N$ the coding isometry of the Scrooge random code corresponding to $\rho_N$.
\end{theorem}
\noindent
By Markov's inequality, a Scrooge random code satisfying the above conditions has error decaying to zero with high probability.

As an immediate corollary, if $\rho_N$ are uniform volume-law states, so that $S(\rho_{E_N})/|E_N|$ and $S(\rho_{B_N})/|B_N|$ both converge to $s$, then every rate
\begin{align}
    r < \min\{s_\infty, s(1-2p)\}
\end{align}
is achievable with high probability against the specified sequence of erasures $E_N$. 

\begin{proof}
We first note that $r < s_\infty$ implies that, for large $N$, $K_N \le \mathrm{rank}(\rho_N)$, so the Scrooge random vectors are linearly independent almost surely. We next establish some minor notation. By assumption, there exist constants $a, \gamma > 0$ such that $2a + \gamma < s_\infty - r$ (constant because $r$ and $s_\infty$ are independent of $N$). Let $\eta_N = 2^{-aN}$ be the bound on $\|G - \id_K\|$ in the good Gram event.

We bound each of the terms in Eq.~\eqref{eq:finite-decoupling} using the results of Eq.~\eqref{eq:dE-window}, Eq.~\eqref{eq:qB-window}, and Eq.~\eqref{eq:ell-window}. For the first term, we have
\begin{align}
    K_N d_{E_N} q_{B_N} &\le 2^{rN} 2^{S(\rho_{E_N}) + |E_N| w_{E_N}}2^{-S(\rho_{B_N}) + |B_N| w_{B_N}}  
    \\
    &\le 2^{(r + s_E p + p w_{E_N} - s_B(1-p) + (1-p) w_{B_N} + o(1))N}.
\end{align}
The exponent is $r + s_E p + p w_{E_N} - s_B(1-p) + (1-p) w_{B_N} \to r - ((1-p) s_B - p s_E) < 0$, so this term decays exponentially to zero. Set $\lambda_N:= \|\rho_N\|_\mathrm{op} = 2^{-s_\infty N+o(N)}$. For the second term, we have
\begin{align}
    d_{E_N} (\lambda_N + \eta_N) q_{E_N} &\le 2^{2 |E_N| w_{E_N}} (\lambda_N + \eta_N) 
    \\
    &\le 2^{(2p w_{E_N} - s_\infty + o(1))N} + 2^{(2p w_{E_N} - a + o(1))N}.
\end{align}
The exponents are $2p w_{E_N} - s_\infty + o(1) \to -s_\infty < 0$ and $2p w_{E_N} - a + o(1) \to -a < 0$, so this term also decays exponentially to zero. For the third term, Lemma~\ref{lem:gram} gives, 
\begin{equation}
    p_{\eta_N} \le C \sqrt{K_N} \exp\left(\frac{K_N \lambda_N}{2} - c \frac{\eta_N^2}{K_N \lambda_N}\right)
\end{equation}
Since $K_N \lambda_N = 2^{-(s_\infty -r)N + o(N)}$ and $\eta_N = 2^{-aN}$, we have, \begin{equation}
    \frac{\eta_N^2}{K_N\lambda_N} = 2^{(s_\infty -r-2a)N + o(N)}.
\end{equation}
Because $s_\infty -r-2a >\gamma >0$, for all sufficiently large $N$, 
\begin{equation}
    p_{\eta_N} \le \exp(-c'2^{\gamma N})
\end{equation}
for some constant $c'>0$ after absorbing the exponentially growing prefactor $\sqrt{K_N}$ and the vanishing $K_N \lambda_N/2$ term. Since $K_Nd_{E_N}$ grows at most exponentially in $N$, it follows that $K_Nd_{E_N}p_{\eta_N} \rightarrow 0$ doubly exponentially. Therefore, this term decays doubly exponentially to zero. Finally, for the fourth term, the $p_{\eta_N}$ contribution again decays doubly exponentially to zero, while 
\begin{align}
    \frac{\ell_N}{1 - \eta_N} \le O(\ell_N) \le O(\epsilon_{E_N} + \epsilon_{B_N}) \rightarrow 0.
\end{align}
Hence the corresponding square root term also vanishes, concluding the proof.
\end{proof}

\section{Saturation of the Entropic Quantum Singleton Bound for all Parameters}
\label{sec:tightness}

In Sec.~\ref{sec:scrooge_achieve}, we derived achievable rates for Scrooge random codes with general entropy-concentrated parent states. We now construct explicit examples of parent states which are entropy-concentrated and saturate the entropic quantum Singleton bound. Then, we specialize to Gibbs states, obtaining the corresponding finite-temperature result. Combined with the converse of Sec.~\ref{sec: entropic_singleton}, these constructions establish saturation of the entropic Singleton bound — unconditionally for the bounded-entropy states, and under Assumption~\ref{assump:gibbs-spectral-aep} for the thermal case.

\subsection{Bounded Entropy Scrooge Code Construction}

We define the states $\rho_N$ which we use to construct the saturating Scrooge code. Fix an entropy density $s \in (0,1]$. Let $q\in(0,1/2]$ be such that $H_2(q) = s$, for $H_2(\cdot)$ the binary entropy function. Let $m_N= qN $ and define the constant-weight subspace  (type class)
\begin{equation}
  \cT_{m_N}
  =\mathrm{span}\{\,|x\rangle:x\in\{0,1\}^N,\ |x|=m_N\,\}.
  \label{eq:type-subspace}
\end{equation}
Let $d_N = \dim(\cT_{m_N})=\binom{N}{m_N}$ and let $\Pi_N$ denote the projector onto $\cT_{m_N}$. We define $\rho_N$ as the maximally mixed state over $\cT_{m_N}$,
\begin{equation}
  \rho_N=\frac{\Pi_N}{d_N}.
  \label{eq:type-state}
\end{equation}
We call these states the \emph{type states}. By Stirling's formula, we have $\log(d_N) = sN + O(\log (N))$, so $\rho_N$ indeed satisfies the desired entropy bound up to subleading factors. 
Because $\rho_N$ is maximally mixed on its support, its Scrooge ensemble is exactly the Haar
measure on that support:
\begin{equation}
  \Scr(\rho_N)=\mathrm{Haar}(\cT_{m_N}).
  \label{eq:flat-scrooge-haar}
\end{equation}
Indeed, in Definition~\ref{def:scrooge}, both the reweighting density and the normalization of
$\sqrt{\rho_N}|\psi\rangle$ are constant.  Consequently, the span of $K\le d_N$
independent Scrooge samples is a Haar-random $K$-dimensional subspace of
$\cT_{m_N}$.

We now prove that the reduced density matrices of the type states have entropy-concentrated spectra:

\begin{lemma}[Type state has entropy-concentrated spectrum]\label{lem:type-local-spectrum}
Fix $0<\alpha<1$ and let $A\subset[N]$ satisfy $|A|=\alpha N+O(1)$.  Uniformly over the
choice of $A$,
\begin{equation}
  S((\rho_N)_A)=s|A|+O(N^{3/4}),
  \label{eq:type-entropy}
\end{equation}
and $(\rho_N)_A$ has a
$(CN^{-1/4},2^{-c\sqrt N})$-entropy-concentrated spectrum for constants $c,C>0$
depending only on $q$ and $\alpha$.
\end{lemma}

\begin{proof}
Write $a_N=|A|$.  The state $(\rho_N)_A$ is diagonal in the computational
basis.  Every string of weight $j$ has eigenvalue
\begin{equation}
  \lambda_j=\frac{\binom{N-a_N}{m_N-j}}{\binom{N}{m_N}},
  \label{eq:type-eigenvalue}
\end{equation}
and there are $\binom{a_N}{j}$ such strings. Therefore, if a string is sampled according to the spectrum of $\rho_N$, the weight of the string $J$ is hypergeometric:
\begin{equation}
  \mathrm{Pr}\{J=j\}
  =\frac{\binom{a_N}{j}\binom{N-a_N}{m_N-j}}{\binom{N}{m_N}}.
  \label{eq:hypergeometric}
\end{equation}
Because sampling without replacement concentrates faster than sampling with replacement, Hoeffding's inequality~\cite{hoeffding_probability_1963} implies
$\mathrm{Pr}\{|J-q a_N|>N^{3/4}\}\le2^{-c\sqrt N}$ for some $c > 0$.

On the event $|J-qa_N|\le N^{3/4}$, we have
$(m_N-J)/(N-a_N)=q+O(N^{-1/4})$.  By Stirling's inequality and the fact that $H_2$ is smooth near any $q \in (0,1/2]$, we have
\begin{align}
  -\log\lambda_J &= N H_2(q) - (N-a_N) H_2(q + O(N^{-1/4}))  + O(\log(N)) 
    \\
    &= a_N H_2(q) + O(N^{3/4})
    \\
    &= s a_N + O(N^{3/4})
  \label{eq:type-surprisal}
\end{align}
Every nonzero eigenvalue in Eq.~\eqref{eq:type-eigenvalue} is at least $d_N^{-1}$, so
$0\le-\log\lambda_J\le\log d_N=O(N)$.  Since
$S((\rho_N)_A)=\mathbb E[-\log\lambda_J]$, the exponentially small tail and Eq.~\eqref{eq:type-surprisal} prove Eq.~\eqref{eq:type-entropy}.  Combining the same two estimates shows
that, except with probability $2^{-c\sqrt N}$,
$|{-\log\lambda_J-S((\rho_N)_A)}|\le C N^{3/4}$.  Because $a_N=\alpha N+O(1)$,
this is precisely entropy concentration with width $CN^{-1/4}$. 
\end{proof}

\subsection{Bounded Entropy Tight Rate}

We can now combine Theorem~\ref{thm:general-scrooge} with Lemma~\ref{lem:type-local-spectrum} to conclude a bound on the achievable rate of the Scrooge random code for the maximally mixed type state. In fact, we go further, proving that the code corrects \emph{all} erasures up to fixed size with all but vanishing probability, thus proving the achievability of the entropic quantum Singleton bound. 

To show that all erasures up to fixed size can be corrected, we use a standard concentration inequality for functions of Haar random unitaries~\cite{meckes_spectral_2013}. Specifically, if $f: \mathrm{U}(d) \to \R$ is $L$-Lipschitz in Frobenius norm, then for Haar random $U$
\begin{equation}
  \mathrm{Pr}\{f(U)>\mathbb E[f(U)]+t\}
  \le \exp\!\left(-c d t^2/L^2\right)
  \label{eq:unitary-concentration}
\end{equation}
for every $t > 0$, for some universal constant $c > 0$.

\begin{theorem}[Saturation of the entropic quantum Singleton bound]
\label{thm:full-tightness}
Fix $s\in(0,1]$ and $p\in[0,1/2)$.  Let $\rho_N$ be the state in
 Eq.~\eqref{eq:type-state}, let $e_N = pN$, and let $K_N= 2^{s (1-2p) N-N/\log N}.$
If $V_N$ is the coding isometry associated to $K_N$ independent samples from
$\Scr(\rho_N)$, then, with all but doubly exponentially small probability,
\begin{align}
    \min_{\substack{A\cap C=\varnothing\\|A|=|C|=e_N}}
  S\!\left((\overline\rho_N)_{[N]\setminus(A\cup C)}\right)
  =s (1-2p) N+o(N),\label{eq:tight-entropy}
\end{align}
where $\overline\rho_N=V_N V_N^\dagger/K_N$, while also $\max_{|E|\le e_N}\Delta(V_N)\to 0$ stretched-exponentially, for $\Delta(V_N)$ the decoupling error defined in Eq.~\eqref{eq:Delta-definition}.
Thus, with high probability, these codes saturate the entropic quantum Singleton bound with rate 
\begin{align}
    r = s (1-2p).
\end{align}
\end{theorem}

\begin{proof}
Assume $p > 0$; the case $p = 0$ is trivial.
Fix an erased subregion $E_N$ of size $e_N = pN$ and let $B_N$ denote its complement. By Lemma~\ref{lem:type-local-spectrum}, both $\rho_{E_N}$ and $\rho_{B_N}$ have $(CN^{-1/4},2^{-c\sqrt N})$-entropy-concentrated spectra, and satisfy
\begin{align}
    \frac{S(\rho_{E_N})}{|E_N|}, \frac{S(\rho_{B_N})}{|B_N|} \to s.
\end{align}
Moreover, we have $S_\infty(\rho_N) = \log \binom{N}{m_N} = sN + O(\log(N))$, so $S_\infty(\rho_N)/N \to s$. 
We now carefully bound the expected decoupling error $\E[\Delta(V_N)]$. We slightly modify the construction in the proof of Theorem~\ref{thm:general-scrooge} (to take advantage of the fact that $s_\infty = s_E = s_B$) by choosing $\eta_N = 2^{-s p N /2}$. The same analysis as in that proof shows
\begin{align}
  \mathbb E[\Delta(V_N)] &\le O\left(2^{-\frac{N}{2\log N}+O(N^{3/4})}+2^{-s p N/4+O(N^{3/4})}+ 2^{\left( \frac{3r}{4} + \frac{sp}{2} \right)N} 2^{\frac{p}{2} 2^{-c\sqrt{N}}N} 2^{-c' 2^{spN + \frac{N}{\log N} - O(\log N)}}+2^{-c\sqrt N}\right)  \label{eq:type-fixed-erasure-mean}
\end{align}
which is at most $O(2^{-c\sqrt N})$, which indeed decays stretched-exponentially to zero.

Now we must show that, with high probability, $\Delta(V_N)$ decays stretched-exponentially to zero simultaneously for all erased subregions $E$ of size $e_N$. 
$\Delta(V_N)$ depends only on the codespace corresponding to $V_N$, which is a Haar random subspace of $\cT_{m_N}$ of dimension $K_N$. Therefore, we can replace $V_N$ by $V(U) = U V_{0}$ for $U$ Haar random on $\cT_{m_N}$ and $V_0: \C^{K_N} \to \cT_{m_N}$ a fixed isometry, without changing the expected value. Let $f(U) = \Delta(V(U))$; we now compute the Lipschitz constant of $f$. Let $\ket{\Omega(U)} = \frac{1}{\sqrt{K_N}} \sum_{i=1}^{K_N} \ket{i}_R \otimes U V_0 \ket{i}$ and $\Omega(U) = \ketbra{\Omega(U)}$, and similarly for $\ket{\Omega(W)} $.
We have
\begin{align}
    |f(U) - f(W)| &\le \| \Omega_{RE}(U) - \Omega_{RE}(W) \|_1 +  \| \Omega_{E}(U) - \Omega_{E}(W) \|_1 
    \\
    &\le 2 \| \Omega(U) - \Omega(W) \|_1
    \\
    &\le 4 \| \ket{\Omega(U)} - \ket{\Omega(W)} \|_2,
\end{align}
where the first line is triangle inequality, the second is contractivity of trace norm under partial trace, and the third is the relationship between trace distance and Euclidean distance for pure states. We then have
\begin{align}
    |f(U) - f(W)| &\le 4 \Big( \frac{1}{K_N} \sum_{i=1}^{K_N} \| (U-W) V_0 \ket{i} \|_2^2 \Big)^{1/2}
    \\
    &= 4 \Big( \frac{1}{K_N} \| (U-W) V_0 \|_2^2 \Big)^{1/2}
    \\
    &\le \frac{4}{\sqrt{K_N}} \| U-W\|_2,
\end{align}
so $f$ is $4/\sqrt{K_N}$-Lipschitz. Applying the concentration inequality Eq.~\eqref{eq:unitary-concentration} and taking a union bound over all erased subregions of size $e_N$ (of which there are, generously, at most $2^N$), we have
\begin{align}
    \mathrm{Pr}\left\{\underset{|E|=e_N}{\max} \Delta(V_N) > \E[\Delta(V_N)] + t\right\} \le 2^N 2^{-c' d_N K_N t^2}
\end{align}
for some constant $c' > 0$. We have $\log(d_N K_N) = 2s(1-p) N - N/\log N + O(\log N) = \Omega(N)$, so this is
\begin{align}
    \mathrm{Pr}\left\{\underset{|E|=e_N}{\max} \Delta(V_N) > \E[\Delta(V_N)] + t\right\} \le 2^{N - c' 2^{c'' N} t^2}
\end{align}
for some constant $c'' > 0$. Choose $t = 2^{-c\sqrt{N}}$. Then, with all but doubly exponentially small probability, the largest decoupling error for any subset of size $e_N$ is at most $O(2^{-c\sqrt{N}})$. The decoupling error for any subset of size less than $e_N$ is automatically smaller, by contractivity of trace norm under partial trace. This proves simultaneous correctability of all possible erasures of at most $p N$ qubits.

Finally, we prove Eq.~\eqref{eq:tight-entropy} to demonstrate that the code satisfies the conditions for the entropic quantum Singleton bound, thus establishing its saturation. On the high-probability event of simultaneous decoupling established above, by the pairwise entropic Singleton bound of Theorem~\ref{thm:entropic_singleton}, together with its asymptotic approximate extension established in Corollary~\ref{corr:entropic_singleton_volume_law}, implies, uniformly over all disjoint $A$, $C$ of size $e_N$, \begin{equation}
    S((\overline{\rho}_N)_{[N] \backslash (A \cup C)}) \geq \log_2 K_N - o(N) = s(1-2p)N - o(N).
\end{equation}
It therefore remains only to prove the matching upper bound. Fix a subsystem $D$ of size $(1-2p)N$, with complement $F$ of size $2pN$. It suffices to show that $S((\overline{\rho}_N)_D) \le s(1-2p)N + o(N)$. 
We begin by rewriting $(\overline{\rho}_N)_D$ in a convenient form. For each $j \in \{0, \dots, m_N\}$, let $\Pi_j^D$ denote the projector onto the weight $j$ subspace of $\cH_D$ and let $\Pi_{m_N-j}^F$ denote the projector onto the weight $m_N-j$ subspace of $\cH_F$. We have, since $\overline{\rho}_N$ is entirely supported in $\cT_{m_N}$,
\begin{align}
  (\overline{\rho}_N)_D = \bigoplus_{j} p_j \sigma_j
\end{align} 
for $p_j = \mathrm{tr}(\overline{\rho}_N (\Pi_j^D \otimes \Pi_{m_N - j}^F)) = \mathrm{tr}((\overline{\rho}_N)_D \Pi_j^D)$ and $\sigma_j$ a density matrix supported entirely on the weight $j$ subspace of $D$, $\mathrm{im}(\Pi_j^D)$. Since $(\overline{\rho}_N)_D$ is block-diagonal, its entropy is
\begin{align}\label{eq: entropy_formula}
  S((\overline{\rho}_N)_D) &= H(\{p_j\}) + \sum_j p_j S(\sigma_j) \le \log(N+1) + \sum_j p_j \log \binom{(1-2p)N}{j},
\end{align}
where $H(\cdot)$ denotes the Shannon entropy~\cite{shannon1948mathematical}, and the inequality follows from the fact that $\sigma_j$ is supported on $\Pi_j^D$. We claim that $\{p_j\}$ is close to a hypergeometric distribution with high probability. We can again show this using the concentration inequality Eq.~\eqref{eq:unitary-concentration}. We first claim that $\E[p_j]$ is hypergeometric. Indeed, $\overline{\rho}_N$ is distributed as $U P_0 U^\dagger / K_N$ for $P_0$ the projector onto some fixed $K_N$-dimensional subspace of $\cT_{m_N}$. Hence,
\begin{align}
    \E[p_j] = \tr(\E\left[\frac{U P_0 U^\dagger}{K_N}\right] (\Pi_j^D \otimes \Pi_{m_N - j}^F)) = \frac{1}{d_N} \tr(\Pi_j^D \otimes \Pi_{m_N - j}^F) = \frac{\binom{(1-2p)N}{j} \binom{2pN}{m_N - j}}{\binom{N}{m_N}},
\end{align}
using unitary invariance. This is indeed the hypergeometric distribution. We now compute the Lipschitz constant of $p_j(U)$. Letting $M_j$ denote $\Pi_j^D \otimes \Pi_{m_N - j}^F$ for simplicity, we have
\begin{align}
  |p_j(U) - p_j(W)| &= \frac{1}{K_N} \Big| \mathrm{tr}(P_0 (U^\dagger M_j U - W^\dagger M_j W)) \Big| 
  \\
  &\le \frac{1}{K_N} \| P_0 \|_2 \cdot \| U^\dagger M_j U - W^\dagger M_j W \|_2
  \\
  &= \frac{1}{\sqrt{K_N}} \| U^\dagger M_j U - W^\dagger M_j W \|_2,
\end{align}
using H\"{o}lder's inequality and the fact that $P_0$ is a rank $K_N$ projector. Using $\| U^\dagger M_j U - W^\dagger M_j W \|_2 \le 2 \| U - W \|_2$, by triangle inequality and H\"{o}lder's inequality, we conclude that $p_j$ is $2/\sqrt{K_N}$-Lipschitz. 
Applying Eq.~\eqref{eq:unitary-concentration} yields $\mathrm{Pr}\{|p_j - \E[p_j]| > t\} \le 2^{-c' d_N K_N t^2}$ for some constant $c' > 0$. 
This is true for all $N+1$ choices of $j$. Choosing $t = 2^{-c \sqrt{N}}$ for some constant $c > 0$, we have, as in the analysis above, that with all but doubly exponentially small probability, $|p_j - \E[p_j]| \le 2^{-c \sqrt{N}}$ for all $j$.

We condition on this event moving forward. Let $\cJ = \{j: |j - q (1-2p) N| \le N^{3/4}\}$ denote the set of weights close to the hypergeometric mean, and let $\cJ^c$ be its complement. We have $\sum_{j \in \cJ^c} \E[p_j] \le 2^{-c'' \sqrt{N}}$ for some $c'' > 0$, by Hoeffding's inequality. Thus,
\begin{align}
  \sum_{j \in \cJ^c} p_j \le \sum_{j \in \cJ^c} \E[p_j] + \sum_{j \in \cJ^c}| p_j - \E[p_j] | \le 2^{-c''' \sqrt{N}}
\end{align}
for some $c''' > 0$. Finally, we bound $S((\overline{\rho}_N)_D)$. Returning to Eq.~\eqref{eq: entropy_formula}, we have
\begin{align}
  S((\overline{\rho}_N)_D) &\le \log(N+1) + \sum_{j \in \cJ^c} p_j \log \binom{(1-2p)N}{j} + \sum_{j \in \cJ} p_j \log \binom{(1-2p)N}{j} 
  \\
  &\le \log(N+1) + (1-2p) N 2^{-c''' \sqrt{N}} + \sum_{j \in \cJ} p_j \log \binom{(1-2p)N}{j} .
\end{align}
If $j \in \cJ$ (i.e., $|j - q (1-2p) N| \le N^{3/4}$), then $\log \binom{(1-2p)N}{j} = s (1-2p) N + O(N^{3/4})$. Therefore,
\begin{align}
  S((\overline{\rho}_N)_D) \le s(1-2p)N + O(N^{3/4}) = s(1-2p)N + o(N)
\end{align}
with all but doubly exponentially small probability, as desired.

\end{proof}

\subsection{Thermal Scrooge Codes}\label{sec:thermal_scrooge}
To provide a comparison to our thermalization codes, we study the properties of the $g_\beta$-Scrooge code. We therefore need the following assumption about reduced Gibbs states. 

\begin{assumption}[Entropy concentration for reduced Gibbs states]
\label{assump:gibbs-spectral-aep}
Fix inverse temperature $\beta$, an erasure fraction $p$, and define $\mathcal E_{p,N} :=\{E\subseteq[N]:|E|=\lfloor pN\rfloor\}$. For each $E\in\mathcal E_{p,N}$, let $B=E^c$ and define
\begin{equation}
  \rho_{E,N}:=\tr_B(g_{\beta,N}),
  \qquad
  \rho_{B,N}:=\tr_E(g_{\beta,N}),
\end{equation}
where $g_{\beta,N}=\frac{e^{-\beta H_N}}{Z_N}$. There exist entropy-density profiles
$s_E(\beta,p),s_B(\beta,p)>0$,
uniformly over $E\in\mathcal E_{p,N}$, such that
\begin{align}
  S(\rho_{E,N})
  =pN\,s_E(\beta,p)+o(N),\qquad
  S(\rho_{B,N})
  =(1-p)N\,s_B(\beta,p)+o(N),
\end{align}
Moreover, for every $E\in\mathcal E_{p,N}$ there exist spectral widths $w_{E,N},w_{B,N}\ge0$ and leakage probabilities $\epsilon_{E,N},\epsilon_{B,N}\ge0$ such that
$\rho_{E,N}$ has a
$(w_{E,N},\epsilon_{E,N})$-entropy-concentrated spectrum and
$\rho_{B,N}$ has a
$(w_{B,N},\epsilon_{B,N})$-entropy-concentrated spectrum in the sense
of Definition~\ref{def:entropy-concentrated}, with 
\begin{align}
  \max_{E\in\mathcal E_{p,N}}
  \max\{w_{E,N},w_{B,N}\} \longrightarrow 0, \qquad \max_{E\in\mathcal E_{p,N}}
  \max\{\epsilon_{E,N},\epsilon_{B,N}\}
  \longrightarrow 0.
\end{align}
\end{assumption}

We now provide intuition for Assumption~\ref{assump:gibbs-spectral-aep}. For the full Gibbs state, entropy concentration is precisely a statement about energy. Ref.~\cite{anshu_concentration_2016} gives concentration bounds for the energy distribution of local Hamiltonians in states with finite correlation length. The physical intuition is that, when correlations decay beyond a finite length scale, the total energy is built from many approximately independent local contributions, so its fluctuations are subextensive compared with the system size. Indeed, for the full Gibbs state, the eigenvalues $\lambda_j = \frac{e^{-\beta E_j}}{Z}$ satisfy $- \log \lambda_j = \beta E_j + \log Z$. Therefore, the fluctuations of $-\log \lambda_j$ are exactly proportional to the thermal energy fluctuations, $\mathrm{Var}_j (-\log \lambda_j) = \beta^2 \mathrm{Var}_j(E_j)$, where $j$ is sampled with probability $\lambda_j$. Moreover, $\mathbb E_j[-\log \lambda_j] = S(g_{\beta,N})$. Thus, concentration of the Gibbs state's eigenvalues around the entropy is equivalent to concentration of the thermal energy around its mean, as given in \cite{anshu_concentration_2016}.

For a reduced Gibbs state $\rho_{A,N}:=\tr_{A^c}(g_{\beta,N})$, the above concentration no longer holds exactly. Nevertheless, the physical principle of the locality of temperature \cite{Kliesch2014} suggests that, at sufficiently high temperature, the thermal reduced state on $A$ behaves approximately like a Gibbs state of the Hamiltonian terms living in or near $A$. Heuristically, $\mathrm{tr}_{A^c}(e^{-\beta H_N})/Z_N
\approx
e^{-\beta H_A^{\text{eff}}}/\mathrm{tr}(e^{-\beta H_A^{\text{eff}}})$ where $H_A$ denotes the part of the Hamiltonian supported on $A$ with support contained entirely in $A$. If this local Gibbs picture is approximately valid, then one expects the reduced Gibbs state to have most of its spectral weight on eigenvalues whose whose negative logarithms are close to the entropy $S(\rho_{A,N})$. 

This expectation is also supported, at a qualitative level, by the
quantum Shannon--McMillan theorem \cite{bjelakovic_shannon-mcmillan_2002}. For translation-invariant ergodic
quantum lattice states, the restriction to a sufficiently large
regular region has an asymptotically typical spectral subspace whose
dimension is governed by the von Neumann entropy density. Equivalently,
with probability tending to one under the spectrum of the reduced
state, the information content per site $-\log\lambda_j/|A|$
approaches the entropy density. Thus, a qualitative concentration of
the spectrum around $S(\rho_{A,N})$ is expected for thermodynamic
Gibbs states in an ergodic phase. Assumption~\ref{assump:gibbs-spectral-aep} is a direct formulation of this expectation. Under Assumption~\ref{assump:gibbs-spectral-aep}, we can then state the following $g_\beta$-Scrooge code corollary. To strengthen Theorem~\ref{thm:general-scrooge} from a fixed-erasure
statement to simultaneous decoupling over all
$E\in\mathcal E_{p,N}$, we use the following concentration lemma.

\begin{lemma}[Uniform concentration of Scrooge decoupling]
\label{lem:uniform-scrooge-concentration}
Let $\rho_N$ be a family of $N$-qubit states satisfying $S_\infty(\rho_N)/N\to s_\infty>0$, and let $V_N$ be a $\rho_N$-Scrooge code of dimension $K_N=2^{rN}$ with
$r<s_\infty$. Then, for any family $\mathcal F_N$ of at most $2^N$
erased regions and every fixed $\epsilon>0$,
\begin{equation}
\Pr_{\mathrm{Scr}}\!\left\{
\max_{E\in\mathcal F_N}\Delta_E(V_N)
>
\max_{E\in\mathcal F_N}
\mathbb E_{\mathrm{Scr}}\Delta_E(V_N)
+\epsilon
\right\}
\longrightarrow0.
\end{equation}
\end{lemma}
\begin{proof}
The proof follows the concentration argument of Theorem~\ref{thm:full-tightness}. There, the
parent state is maximally mixed on its support, so the Scrooge codespace is Haar-random. For general $\rho_N$, we instead use the adjusted-Gaussian representation.

Let $G=(|g_1\rangle,\ldots,|g_{K_N}\rangle)$ have independent standard complex Gaussian columns and set $Y=\sqrt{\rho_N}G$. Weighting the Gaussian law by $\prod_{i=1}^{K_N}\langle g_i|\rho_N|g_i\rangle$ produces independent Scrooge samples after column normalization. Since column normalization does not change the span, and $\Delta_E$ is invariant under a change of logical basis, the encoding isometry may be represented by the polar factor $V(G)=Y(Y^\dagger Y)^{-1/2}$. 
Define $\lambda_N:=\|\rho_N\|_{\mathrm{op}}
=2^{-s_\infty N+o(N)}$. Since $r < s_\infty$, \begin{equation}
    K_N\lambda_N
=
2^{-(s_\infty-r)N+o(N)}
\longrightarrow0.
\label{eq:Klambda-small}
\end{equation}
For any fixed $0<\eta<1/2$, the Gaussian matrix estimate used in
Lemma~\ref{lem:gram} gives
\begin{equation}
    \Pr_0\!\left\{
\|Y^\dagger Y-\mathbbm 1\|_{\mathrm{op}}>\eta
\right\}
\le
C K_N
\exp\!\left(
-\frac{c}{K_N\lambda_N}
\right).
\label{eq:gaussian-good-gram}
\end{equation}
On the complementary good-Gram event, the same Lipschitz estimate used in Theorem~\ref{thm:full-tightness} gives, for any two coding isometries $V,W$,
\begin{equation}
    |\Delta_E(V)-\Delta_E(W)|
\le
\frac{4}{\sqrt{K_N}}\|V-W\|_2.
\end{equation}
The standard perturbation bound for polar factors, together with $\|\sqrt{\rho_N}(G-H)\|_2 \le \sqrt{\lambda_N}\|G-H\|_2$ therefore implies, uniformly over $E$,
\begin{equation}
|\Delta_E(V(G))-\Delta_E(V(H))|
\le
C\sqrt{\frac{\lambda_N}{K_N}}\,
\|G-H\|_2 .
\label{eq:general-scrooge-lipschitz}
\end{equation}
Extending from the good-Gram set to the full Gaussian space and applying Gaussian concentration yields
\begin{equation}
\Pr_0\!\left\{
\Delta_E(V_N)>
\mathbb E_0\Delta_E(V_N)+t
\right\}
\le
C\exp\!\left(
-c\frac{K_Nt^2}{\lambda_N}
\right)
+
C K_N\exp\!\left(
-\frac{c}{K_N\lambda_N}
\right),
\label{eq:general-scrooge-gaussian-tail}
\end{equation}
where $\Pr_0$ denotes the unweighted Gaussian law. It remains only to transfer this estimate to the Scrooge measure. Its likelihood ratio relative to the ordinary Gaussian law is $L(G)=
\prod_{i=1}^{K_N}\langle g_i|\rho_N|g_i\rangle$, and 
\begin{align}
\mathbb E_0[L^2]
&=
\left(1+\operatorname{tr}\rho_N^2\right)^{K_N}\le
e^{K_N\lambda_N}
=
1+o(1).
\label{eq:general-scrooge-likelihood}
\end{align}
Hence, by Cauchy--Schwarz, for any event $A$,
\begin{equation}
\Pr_{\mathrm{Scr}}(A)
\le
e^{K_N\lambda_N/2}\Pr_0(A)^{1/2}.
\label{eq:general-scrooge-measure-transfer}
\end{equation}
Moreover, since $0\le\Delta_E\le2$,
\begin{equation}
a_N:=
\max_{E\in\mathcal F_N}
\left|
\mathbb E_{\rm Scr}\Delta_E(V_N)
-
\mathbb E_0\Delta_E(V_N)
\right|
=o(1),
\end{equation}
where this follows from
$\mathbb E_0[L^2]=1+o(1)$ and hence convergence of the Scrooge and
Gaussian measures in total variation. Taking a union bound in
Eq.~\eqref{eq:general-scrooge-gaussian-tail} over
$|\mathcal F_N|\le2^N$ with $t=N^{-1}$ gives
\begin{align}
\Pr_0\!\left\{
\max_{E\in\mathcal F_N}\Delta_E(V_N)
>
\max_{E\in\mathcal F_N}\mathbb E_0\Delta_E(V_N)
+\frac1N
\right\}
&\le
C2^N
\exp\!\left(
-c\frac{K_N}{N^2\lambda_N}
\right)+
C K_N
\exp\!\left(
-\frac{c}{K_N\lambda_N}
\right)
\longrightarrow0.
\end{align}
Indeed,
$$
\frac{K_N}{\lambda_N}
=
2^{(r+s_\infty)N+o(N)},
\qquad
\frac{1}{K_N\lambda_N}
=
2^{(s_\infty-r)N+o(N)},
$$
so both tails are doubly exponentially small. Eq.~\eqref{eq:general-scrooge-measure-transfer} transfers the same
high-probability statement to the Scrooge measure. Therefore,
with probability tending to one,
\begin{align}
\max_{E\in\mathcal F_N}\Delta_E(V_N)
\le
\max_{E\in\mathcal F_N}
\mathbb E_{\rm Scr}\Delta_E(V_N)
+a_N+\frac1N = \max_{E\in\mathcal F_N}
\mathbb E_{\rm Scr}\Delta_E(V_N)
+o(1),
\end{align}
which proves the claim.

\end{proof}

\begin{corollary}[Thermal Scrooge codes] \label{corr:thermal-scrooge-uniform} Fix $\beta$ and $p\in[0,1/2)$, and suppose Assumption~\ref{assump:gibbs-spectral-aep} holds uniformly over all $E\subseteq[N]$ with $|E|=\lfloor pN\rfloor$, with $s_E(\beta,p)=s_B(\beta,p)=s(\beta)>0$. Let $V_N$ be a $g_{\beta,N}$-Scrooge code of rate $r$, where $S_\infty(g_{\beta,N})/N\to s_\infty(\beta)>0$. If \begin{equation} 0<r< \min\{s_\infty(\beta),s(\beta)(1-2p)\}, \end{equation} then, with probability tending to one, \begin{equation} \max_{|E|\leq\lfloor pN\rfloor} \left\| \Omega_{RE}(V_N) - \frac{\mathbbm 1_R}{K_N}\otimes\Omega_E(V_N) \right\|_1 \longrightarrow0. \label{eq:uniform-thermal-scrooge-decoupling} \end{equation} Moreover, there exists a single subcode $\mathcal C'_N\subset \operatorname{im}(V_N)$ of the same asymptotic rate $r$ such that every erasure of at most $\lfloor pN\rfloor$ qubits admits a recovery channel with vanishing diamond-norm error. \end{corollary}
\begin{proof} By Assumption~\ref{assump:gibbs-spectral-aep}, the proof of Theorem~\ref{thm:general-scrooge} applies uniformly over $E\in\mathcal E_{p,N}$ and gives \begin{equation} \max_{E\in\mathcal E_{p,N}} \mathbb E\,\Delta_E(V_N) \longrightarrow0 \end{equation} whenever $r<\min\{s_\infty(\beta),s(\beta)(1-2p)\}$. Since $|\mathcal E_{p,N}|\leq2^N$, Lemma~\ref{lem:uniform-scrooge-concentration} therefore implies \begin{equation} \max_{E\in\mathcal E_{p,N}} \Delta_E(V_N) \xrightarrow{\Pr}0. \end{equation} If $F\subseteq E$, contractivity of trace norm under partial trace gives $\Delta_F(V_N)\leq\Delta_E(V_N)$. Hence \begin{equation} \delta_N := \max_{|E|\leq\lfloor pN\rfloor} \Delta_E(V_N) \xrightarrow{\Pr}0. \end{equation} For each erased region $E$, the decoupling-to-recovery argument of Sec.~\ref{sec:aqec_decoupling}, together with the Fuchs--van de Graaf inequalities, gives a recovery channel $\mathcal R_{N,E}$ satisfying \begin{equation} 1- F_e\left( \frac{\mathbbm 1_{K_N}}{K_N}, \mathcal R_{N,E}\circ\mathcal N_{N,E}\circ\mathcal E_N \right) \leq \Delta_E(V_N) \leq \delta_N. \end{equation} Thus the $K_N$-dimensional code is uniformly average-case correctable against the family of all erasures of at most $\lfloor pN\rfloor$ qubits, whose cardinality is at most $2^N$. Theorem~\ref{thm: worst_to_avg} therefore yields a single subcode $\mathcal C'_N\subset\operatorname{im}(V_N)$, of the same asymptotic rate $r$, that is uniformly worst-case correctable against every such erasure. In particular, there exist recovery channels $\mathcal R'_{N,E}$ such that \begin{equation} \max_{|E|\leq\lfloor pN\rfloor} \left\| \mathcal R'_{N,E}\circ \mathcal N_{N,E}\circ \mathcal E'_N - \operatorname{id} \right\|_\diamond \xrightarrow{\Pr}0. \end{equation} 
\end{proof}

We now show that this threshold is tight. Suppose additionally that there exist disjoint regions $A_N,C_N\in\mathcal E_{p,N}$ such that, for $D_N:=[N]\setminus(A_N\cup C_N)$, the reduced Gibbs state $(g_{\beta,N})_{D_N}$ has an entropy-concentrated spectrum satisfying \begin{equation} S\left((g_{\beta,N})_{D_N}\right) = s(\beta)|D_N|+o(N). \end{equation} Let $\overline\rho_N := \frac{V_NV_N^\dagger}{K_N}$  be the maximally mixed state on the Scrooge code, and let $\Pi_{D_N}$ be the corresponding entropy-typical projector. The Scrooge first-moment identity, Markov's inequality, and Gram-matrix concentration imply \begin{equation} \operatorname{tr}\left( (\mathbbm 1_{D_N}-\Pi_{D_N}) (\overline\rho_N)_{D_N} \right) \xrightarrow{\Pr}0. \end{equation} Since $\log_2\operatorname{rank}\Pi_{D_N} \leq s(\beta)|D_N|+o(N)$, the typical-subspace entropy bound gives \begin{equation} S\left((\overline\rho_N)_{D_N}\right) \leq s(\beta)(1-2p)N+o(N) \end{equation} with probability tending to one. Consequently, \begin{equation} \min_{\substack{A\cap C=\emptyset\\ |A|=|C|=\lfloor pN\rfloor}} S\left( (\overline\rho_N)_{[N]\setminus(A\cup C)} \right) \leq s(\beta)(1-2p)N+o(N). \end{equation} Combining this bound with the simultaneous decoupling established in Corollary~\ref{corr:thermal-scrooge-uniform}, Corollary~\ref{corr:entropic_singleton_volume_law} implies \begin{equation} r\leq s(\beta)(1-2p), \end{equation} or equivalently $p\leq p_\star(r)$. Thus, thermal Scrooge codes asymptotically saturate the entropic Singleton bound in the uniform-volume-law regime.

\section{Thermalization Codes}
\label{sec:thermalization_codes}
In this section, we describe the numerical framework used to study information-theoretic optimality in thermalizing quantum dynamics. We first introduce the thermal code construction and the chaotic Hamiltonian models used in our simulations. We then describe the finite-size scaling analysis used to identify the information phase transition. Finally, we compare these results with Haar-random and Scrooge-random codes to identify the entropic mechanism underlying the transition.

\subsection{Chaotic Hamiltonians and Codeword Construction} \label{sec:numerical_therm_hamiltonians}
For $H$ a many-body Hamiltonian on N qubits (to be specified below), we let $U(t) = e^{-iHt}$ denote the time evolution unitary under $H$, which will act as our encoder. Starting from a set of orthonormal product states $\{\ket{\psi_k}\}_{k=1}^K$ we define the encoded codewords $\ket{\psi_k(t)} = U(t)\ket{\psi_k}$ which span the codespace $\mathcal{C}_t :=
    \operatorname{span}
    \left\{
        \ket{\psi_k(t)}
\right\}_{k=1}^{K}$. The encoded states are evolved to late times of the scale $t \sim N^2/J$, where $J$ is the characteristic energy scale of each model (Table~\ref{tab:hamiltonians}), well beyond microscopic relaxation timescales of the systems considered. We choose the initial states to lie in a narrow energy window, defined below, and to
carry indistinguishable values of the physically relevant extensive
conserved charges. This prevents the codeword label from being directly recorded in conserved quantities. More precisely, for any conserved charge $Q^{(a)}$ satisfying $[Q^{(a)},H]=0$ we require $P_{\mathcal C_t} Q^{(a)} P_{\mathcal C_t} \approx q_a P_{\mathcal C_t}$, where $P_{\mathcal C_t}$ is the projector onto the codespace. This is equivalent to the same condition on the initial states and can therefore be imposed when selecting the initial product states. The approximation is controlled by the width of the corresponding charge window.

We study thermalization codes generated by three nonintegrable
Hamiltonians: a mixed-field Ising model (MFIM), a Rydberg chain, and a random-field $J_1J_2$ Heisenberg chain. For the mixed-field Ising and Rydberg models, energy is the only
nontrivial extensive conserved quantity. For the
random-field $J_1J_2$ Heisenberg model, we additionally restrict to a fixed sector of the total magnetization $S^z_{\mathrm{tot}} = \sum_{i=1}^N S_i^z$. The Hamiltonians are 

\begin{align}
    H_{\mathrm{MFIM}} &= J\sum_{i=1}^{N-1} \sigma^x_i \sigma^x_{i+1} + h\sum_{i=1}^{N} \sigma^x_i + g\sum_{i=1}^{N} \sigma^y_i + h_1\sigma^x_1 - h_L\sigma^x_N, \\
    H_{\mathrm{Ryd}} &= \frac{\Omega}{2}\sum_{i=1}^{N}\sigma^x_i - \Delta\sum_{i=1}^{N} n_i + \sum_{1\le i<j\le N}\frac{C_6}{\left(|i-j|\,a\right)^6}\,n_i n_j + h_L n_1 + h_R n_N, \\
    H_{\mathrm{RF-J_1J_2}} &= \sum_{i=1}^{N-1}\!\left[S^x_i S^x_{i+1} + S^y_i S^y_{i+1} + S^z_i S^z_{i+1}\right] + \frac{1}{2}\sum_{i=1}^{N-2}\!\left[S^x_i S^x_{i+2} + S^y_i S^y_{i+2} + S^z_i S^z_{i+2}\right] + \sum_{i=1}^{N} h_i S^z_i,
\end{align}
where $\sigma_i^x, \sigma_i^y, \sigma_i^z$ denote the Pauli $X,Y,Z$ operators on site $i$, $S_i^\alpha = \sigma_i^\alpha/2$ are the corresponding spin operators, and $n_i = (1-\sigma^z_i)/2$ is the Rydberg occupation number. The model parameters are listed in Table~\ref{tab:hamiltonians}, and in each case are chosen in nonintegrable parameter regimes \cite{rodriguez-nieva_quantifying_2024}. The asymmetric boundary fields and, in the
random-field model, the spatially varying longitudinal fields remove
residual spatial symmetries that would otherwise constrain the
dynamics. For the random field $J_1J_2$ Heisenberg chain, the interior disorder fields $h_i$ are drawn from $\mathrm{Uniform}[-0.05, +0.05]$, and we fix a single disorder realization for each system-size family.

\begin{table}[h]
\centering
\setlength{\tabcolsep}{12pt}
\begin{tabular}{lll}
\hline
\textbf{MFIM} & \textbf{Rydberg} & \textbf{Random Field $J_1J_2$ Heisenberg} \\
\hline
$J = 1.0$ & $\Omega = 0.70$ & $h_1 = +0.25$ \\
$h = 0.8090$ & $\Delta = 0.50$ & $h_N = -0.137$ \\
$g = 0.9045$ & $C_6 = 1.0$ & $h_{2\le i\le N-1} \sim \mathrm{Uniform}[-0.05,+0.05]$ \\
$h_1 = 0.25$ & $a = 1.0$ & \\
$h_L = 0.25$ & $h_L = 0.137$ & \\
 & $h_R = 0.241$ & \\
\hline
\end{tabular}
\caption{Parameters used in the numerics for the three chaotic Hamiltonians.}
\label{tab:hamiltonians}
\end{table}

The code construction follows the same general procedure for all three
models. We choose a product basis adapted to the Hamiltonian and
define the diagonal energy $E(s)
    :=
\bra{s}H\ket{s}$. For a target inverse temperature $\beta$, we choose $E_\star(\beta) := \tr(Hg_\beta)$, $g_\beta
:= \frac{e^{-\beta H}}{\tr(e^{-\beta H})}$, with the trace restricted to the relevant conserved-charge sector
when necessary. We then retain product states satisfying $\left|E(s)-E_\star\right|\le \delta E_{\rm sh}$, where $\delta E_{\rm sh}$ is the chosen shell half-width. Within this shell we build a \emph{move graph} whose vertices are the shell states and whose edges connect any pair $\ket{s}, \ket{s'}$ with nonzero off-diagonal matrix element $\bra{s}H\ket{s'} \neq 0$. An independent set of this graph yields product states $\{\ket{\psi_k}\}$ of (approximately) equal energy with vanishing mutual coherence, so that $\|P_{\mathcal C_t} H P_{\mathcal C_t}  - E_\star P_{\mathcal C_t}\|_{\rm op} \le \delta E_{\rm sh}$. Thus the move-graph construction eliminates off-diagonal matrix
elements of $H$ within the codespace, while the shell restriction
controls the variation of the diagonal energies. The encoded
codewords are $\ket{\psi_k(t)} = e^{-iHt}\ket{\psi_k}$, and the code dimension is chosen according to $r = \log_2(K)/N$.

The two ingredients that differ across models are (i) the product basis used and (ii) the resulting set of \emph{forbidden moves}, i.e. the pairs of basis states connected by an edge of the move graph that must therefore be excluded from the same codespace.

\begin{itemize}
    \item \textbf{MFIM.} We work in the $\sigma^x$ product basis
$\{\ket{s_x}\}$, with $s_x\in\{+,-\}^N$. The Ising interaction, longitudinal field, and boundary fields are diagonal in this basis. The only off-diagonal term is $g\sum_i\sigma_i^y$, which flips a single $\sigma^x$ spin. The move graph therefore
connects states related by a single-site flip. At infinite temperature, the target energy is $E_\star(0)=\frac{\tr H_{\mathrm{MFIM}}}{2^N}=0$, so we use a shell centered at zero energy density.
    \item \textbf{Rydberg.} We work in the $\sigma^x$ product basis
$\{\ket{s_x}\}$. The drive
$
    \frac{\Omega}{2}\sum_i\sigma_i^x
$
is diagonal in this basis. Let
$
    m
    :=
    \#\{i:s_i^x=-\}.
$
Since
$
    \bra{s_x}n_i\ket{s_x}
    =
    \frac12
$ and $
    \bra{s_x}n_in_j\ket{s_x}
    =
    \frac14,
$
the diagonal matrix element depends only on $m$:
$
    E_m
    =
    \frac{\Omega}{2}(N-2m)
    +
    C_0,
$
where
$
    C_0
    =
    -\frac{\Delta N}{2}
    +
    \frac14\sum_{i<j}V_{ij}
    +
    \frac{h_L+h_R}{2},
$ with $ V_{ij}
    :=
    \frac{C_6}{(|i-j|a)^6}
$. Thus all $\sigma^x$ product states with fixed $m$ have the same
diagonal energy. Within a fixed-$m$ shell, the $\sigma_i^z\sigma_j^z$ component of $V_{ij}n_in_j$ exchanges a
$+$ and a $-$ spin at sites $i$ and $j$, with matrix element
$V_{ij}/4$. Since $V_{ij}\neq0$ for every pair $i<j$, the
within-shell move graph is the Johnson graph $J(N,m)$. We select
the codewords from an independent set of this graph. All remaining
off-diagonal terms change $m$ and therefore connect different
shells.
        \item \textbf{Random Field $J_1J_2$ Heisenberg.} We work in the $S^z$ product basis and restrict to a fixed
$S^z_{\mathrm{tot}}$ sector, choosing
$S^z_{\mathrm{tot}}=0$ for even $N$. The off-diagonal exchange
operator
$S_i^xS_j^x+S_i^yS_j^y
    =
    \frac12
    \left(
        S_i^+S_j^-
        +
        S_i^-S_j^+
    \right)$
swaps antiparallel spins at sites $i$ and $j$. The move graph
therefore connects states related by nearest-neighbor swaps
$(i,i+1)$ or next-nearest-neighbor swaps $(i,i+2)$.
\end{itemize}

For the energy-separated codewords of Fig.~2 in the main text, we use the same construction with two shells rather than one. We select orthogonal product states $\ket{\psi_1}$ and $\ket{\psi_2}$ from shells centered at $\pm \Delta E_N/2$, so that $\bra{\psi_1}H\ket{\psi_1} - \bra{\psi_2}H\ket{\psi_2} = \Delta E_N$.
\subsection{Finite-Size Scaling Analysis}

As $p$ increases, the code undergoes an information  transition: an \textit{error-correcting} phase for $p < p_\star$, where $I(E:R)$ vanishes and the logical information is recoverable from the complement, and a \textit{revealing} phase for $p > p_\star$, where $I(E:R)$ grows and the environment gains information about the logical state. 

Here we describe how we extract the threshold noise rate $p_\star(r)$ at fixed asymptotic code rate
$r$. For each system size $N$, we choose an integer number of logical qubits $k_N$ and code dimension $K_N=2^{k_N}$, and finite rate $r_N=\frac{k_N}{N}$ such that $r_N\to r$ as $N\to\infty$. For notational simplicity, we label each finite-size family by its target rate $r$. We evolve the system $S$ under the Hamiltonian $H$ to obtain the late-time codewords $\ket{\psi_k(t)}$, which span the encoded subspace $\mathcal{C}_t=\operatorname{span}\{{\ket{\psi_k(t)}}_{k=1}^{K_N}\}$. The encoded logical state is purified by a $K_N$-dimensional reference system $R$. We then apply a flagged erasure channel to $S$, in which a fraction $p$ of the physical qubits is transferred to an environment $E$, with the erasure locations recorded by classical flags. Using the decoupling
protocol, as described in Section~\ref{sec:aqec_decoupling}, we evaluate the approximate error-correction diagnostic
$I_N(p;r) := I(E:R)$, the quantum mutual information between the channel environment $E$ and the reference
$R$, where the subscript $N$ denotes the finite-size dependence.

We repeat this procedure over various rates $r$, noise strengths $p$, and system sizes $N$. Near the transition, we assume the leading finite-size scaling form Eq.~\eqref{eq:fss_ansatz}, with $\nu$ controlling the width and $\mu$ the scaling of the mutual information within it:
\begin{equation}
I_N(p;r)=N^{-\mu}\,f\!\left((p-p_\star(r))\,N^{1/\nu}\right),
\label{eq:fss_ansatz}
\end{equation}
so that the rescaled curves $N^{\mu}I_N(p;r)$ collapse onto a
single function $f((p-p_\star(r))N^{1/\nu})$ in the scaling regime. 

We fit in two stages because a joint fit of $p_\star(r)$ and $\nu$ is poorly conditioned -- errors in one can be absorbed by the other. Thus, first, we simultaneously determine $p_\star(r)$ and the scaling dimension $\mu$ from the crossing condition. At $p = p_\star (r)$, Eq.~\eqref{eq:fss_ansatz} gives \begin{equation} N^\mu I_N(p_\star(r);r)=f(0),\end{equation} which is independent of $N$: the rescaled curves $N^\mu I_N(p_\star(r)$ for different system sizes intersect at the transition point. We therefore minimize \begin{equation} Q_1(p,\mu) = \sum_{i<j} \left[ \frac{ \left| N_i^\mu I_{N_i}(p;r) - N_j^\mu I_{N_j}(p;r) \right| }{ \max\!\left( \left|N_i^\mu I_{N_i}(p;r)\right|, \left|N_j^\mu I_{N_j}(p;r)\right|, \epsilon_{\rm reg} \right)} \right]^2 \label{eq:stage1_cost} \end{equation} where $\epsilon_{\rm reg}>0$ is a small regularizer preventing division by zero. The fitted threshold and scaling dimension are \begin{equation} \bigl(p_\star(r),\mu\bigr) = \arg\min_{p,\mu} Q_1(p,\mu). \end{equation} With $p_\star(r)$ and $\mu$ fixed, $\nu$ is obtained by minimizing 
\begin{equation} Q_2(\nu) = \sum_{i<j} \left[ \frac{ \sum_{x\in\Omega_{ij}(\nu)} \left| N_i^\mu I_{N_i}\!\left(p_\star(r)+xN_i^{-1/\nu};r\right) - N_j^\mu I_{N_j}\!\left(p_\star(r)+xN_j^{-1/\nu};r\right) \right| }{ \max\!\left( \sum_{x\in\Omega_{ij}(\nu)} \left|N_i^\mu I_{N_i}\!\left(p_\star(r)+xN_i^{-1/\nu};r\right)\right|, \sum_{x\in\Omega_{ij}(\nu)} \left|N_j^\mu I_{N_j}\!\left(p_\star(r)+xN_j^{-1/\nu};r\right)\right|, \epsilon_{\rm reg} \right)} \right]^2 . \label{eq:stage2_cost} \end{equation} The collapse exponent is $\nu := \arg \min_\nu Q_2(\nu)$. Here $\Omega_{ij}(\nu)$ is a discrete grid in the common overlap of the rescaled coordinates for the pair $(N_i,N_j)$ after fit-window cutoffs are applied. The values of $I_N$ entering Eqs.~\eqref{eq:stage1_cost} and~\eqref{eq:stage2_cost} are obtained by interpolation of the finite-size data. We require a minimum number of interpolation points in each overlap window, since $Q_2$ would otherwise favor values of $\nu$ that simply reduce to the number of points being compared.

Before performing the fits, we restrict the data to the scaling regime. Two cutoffs are applied. First, for each system size $N$ let
$p_{\min}(N;r)$ be the largest sampled value of $p$ satisfying $I_N(p;r)\le0.05$. We discard all data with $p<p_{\min}(N;r)$. This removes the flat low-noise tail where $I(E:R)\approx 0$, which would otherwise artificially improve the collapse by matching zero to zero. Second, we impose an upper cutoff $p_{\max}=0.55$ which restricts the fit to the transition region of interest. To characterize the sensitivity of the fitted threshold, we fix $\mu$ at the fitted value and scan $Q_1(p,\mu)$ on a fine grid of $p$ within the allowed interval. We report the range satisfying  $Q_1(p,\mu)\leq 2 Q_1^{\min}$, and take half of its width as the quoted uncertainty in $p_\star(r)$. This error bar reflects the width of the residual landscape near the optimum.

\subsection{Entropic Origin of the Information Transition}

The preceding sections establish that $g_\beta$-Scrooge random codes achieve the entropic Singleton threshold, but the arguments run through decoupling inequalities rather than entropies themselves. A direct calculation of the subsystem entropies gives a more transparent account of the transition, and reproduces the observed finite-size behavior. We therefore analyze the subsystem entropies constituting the mutual information $I(E:R) = S(E) + S(R) - S(ER)$ where $E$ is the erased subsystem, $B$ is the complement, and $R$ is the logical reference. We begin at infinite-temperature with Haar-random codes. 

Let  $\mathcal H=\mathcal H_A\otimes \mathcal H_{\bar A}$ be a bipartition into a generic subsystem $A$ and its complement, with $d_A=\dim \mathcal H_A$, $d_{\bar A}=\dim \mathcal H_{\bar A}$ and $d=d_A d_{\bar A}=2^N$. Suppose that the codewords $\{|\bar\psi_k\rangle\}_{k=1}^K$ are the first $K$ columns of a Haar-random unitary on $\mathcal H$. We consider the mixed reduced state on $A$, $\rho_A^{\rm mix}
=
\tr_{\bar A}\left[
\frac1K\sum_{k=1}^K
|\bar\psi_k\rangle\langle \bar\psi_k|
\right]$. We claim, in the regime $K/d \rightarrow 0$, the mixed reduced code state is well approximated at leading order by the reduced density matrix of a single Haar-random state on the enlarged Hilbert space $\mathcal H_A\otimes \mathcal H_{\bar A}\otimes \mathbb C^K $. To see this, let $Y\in\mathbb C^{d\times K}$ have independent
standard complex Gaussian entries. The polar factor $V:= Y(Y^\dagger Y)^{-1/2}$ is a Haar-random isometry from   $\mathbb C^K$ into $\mathcal H$. The mixed code state before tracing out $\bar A$ is therefore
\begin{equation}
    \omega_{\rm code}
=
\frac1K VV^\dagger
=
\frac1K Y(Y^\dagger Y)^{-1}Y^\dagger .
\end{equation}
On the other hand, interpreting the column index of $Y$ as an additional reference index $r=1,\dots,K$, then the normalized Gaussian vector\begin{equation}
    |\Gamma_K\rangle
=
\frac{1}{\sqrt{\tr(Y^\dagger Y)}}
\sum_{a=1}^{d_A}
\sum_{\bar a=1}^{d_{\bar A}}
\sum_{r=1}^{K}
Y_{(a,\bar a),r}
|a\rangle_A |\bar a\rangle_{\bar A} |r\rangle_R
\end{equation}
is a Haar-random state on $\mathcal H_A\otimes \mathcal H_{\bar A}\otimes \mathbb C^K$. 
Its reduced state on $A$ is \begin{equation}
    \sigma_A
=
\tr_{\bar A R}\dyad{\Gamma_K}
=
\tr_{\bar A}\left(
\frac{YY^\dagger}{\tr(Y^\dagger Y)}
\right).
\end{equation}
For a $d \times K$ Gaussian matrix with $K/d \to 0$, the Wishart matrix $Y^\dagger Y$ concentrates around $d I_K$, $\frac{1}{d}Y^\dagger Y \approx I_K$. Hence, \begin{equation}
    Y(Y^\dagger Y)^{-1}Y^\dagger
\approx
\frac{1}{d}YY^\dagger. 
\end{equation}
After normalization and tracing out $\bar A$, this gives
\begin{equation}
    \rho_A^{\rm mix}
\approx
\sigma_A
=
\tr_{\bar A R}(\dyad{\Gamma_K})
\end{equation}

Thus, in the regime $K\ll 2^N$, the entropy of the mixed reduced state is approximated by the Page entropy \cite{page_average_1993} of a Haar-random state on $\mathcal H_A\otimes (\mathcal H_{\bar A}\otimes \mathbb C^K)$,
\begin{equation}
    S(\rho_A^{\rm mix}) \approx \begin{cases} \log_2(K d_{\bar{A}}) - \dfrac{K d_{\bar{A}}}{2 d_A \ln 2}, & K d_{\bar{A}} \leq d_A, \\[8pt] \log_2 d_A - \dfrac{d_A}{2K d_{\bar{A}} \ln 2}, & K d_{\bar{A}} \geq d_A \end{cases}.
\end{equation}

The shifted Page curve is controlled by the smaller of the subsystem
dimension $d_A$ and the effective complementary dimension $K d_{\bar A}$,
with a transition when $d_A = K d_{\bar A}.$ For $d_A \leq K d_{\bar A}$, the entropy is close to its maximal value
$\log_2 d_A$, while for $d_A \geq K d_{\bar A}$ it is close to
$\log_2(K d_{\bar A})$. The subleading terms are the usual Page
corrections with the replacement $d_{\bar A}\rightarrow Kd_{\bar A}$. They are exponentially small in $N$ away from the crossover $Kd_{\bar A} = d_A$, and $O(1)$ there.

Since the global state on $EBR$ is pure, $S(ER)=S(B)$, while
$S(R)=\log_2K=rN$. Applying the shifted Page formula with $A = B$, for which
$d_B=2^{(1-p)N}$ and the effective complementary dimension is
$K d_E=K2^{pN}$, its Page transition occurs when $ 2^{(1-p)N}=K2^{pN}$. This gives $p_\star=\frac{1-r}{2}$. Near this transition, $S(E)\simeq pN$, whereas
\begin{equation}
    S(B)\approx
    \begin{cases}
        (p+r)N, & p<p_\star,\\
        (1-p)N, & p>p_\star
    \end{cases}.
\end{equation}
Consequently,
\begin{equation}
    I(E:R)
    =
    S(E)+S(R)-S(ER)
    =
    S(E)+\log_2K-S(B)
\end{equation}
vanishes at leading order below $p_\star$ and grows as $I(E:R)\simeq 2N(p-p_\star)$ immediately above it. The Page correction is $O(1)$ in absolute entropy
and therefore rounds the transition over a window
$p-p_\star=O(N^{-1})$, corresponding to $\nu=1$. Since the raw mutual
information remains $O(1)$ within this critical window, its scaling
dimension is $\mu=0$.

Just as a Haar-random encoding provided a useful analytical tool to study the information phase transition for our infinite-temperature thermalization codes, we use Scrooge-random codes in the same way for finite temperature thermalization codes. Let $V=\Phi G^{-1/2}$ be the encoding isometry for the $\rho$-Scrooge code as in Section~\ref{sec:scrooge_codes_def}. The mixed code state is
\begin{equation}
    \omega_{\rm code}
    =
    \frac1K VV^\dagger
    =
    \frac1K\Phi G^{-1}\Phi^\dagger.
\end{equation}
Lemma~\ref{lem:gram} shows the Gram matrix concentrates near $I_K$ with high probability in the low-purity regime. Hence, on the event $\|G-I_K\|_{\rm op}\le \eta<1$, we have $G^{-1}=I_K+O(\eta)$, so the symmetric orthogonalization gives only a subleading correction. The mixed state of the orthogonalized codewords is
\begin{equation}
    \omega_{\rm code}
\approx
\frac{1}{K}\Phi\Phi^\dagger
=
\frac{1}{K}\sum_{i=1}^K
\dyad{\phi_i}.
\end{equation}
Let $\Pr_0$ denote the law of $K$ independent standard complex
Gaussian vectors $\ket{g_1},\ldots,\ket{g_K}$, and define
\begin{equation}
    q_i=\bra{g_i}\rho\ket{g_i},
    \qquad
    X=\sqrt{\rho}\,[g_1\;\cdots\;g_K],
    \qquad
    D_X=\operatorname{diag}(q_1,\ldots,q_K).
\end{equation}
The joint Gaussian measure corresponding to $K$ independent samples
from ${\rm Scr}(\rho)$ is
\begin{equation}
    \frac{\mathrm d\Pr_{\rm prod}}{\mathrm d\Pr_0}
    =
    \prod_{i=1}^K q_i,
\end{equation}
under which
\begin{equation}
    \ket{\phi_i}
    =
    \frac{\sqrt{\rho}\ket{g_i}}{\sqrt{q_i}},
    \qquad
    \omega_{\rm code}
    \approx
    \frac{1}{K}XD_X^{-1}X^\dagger.
\end{equation} By contrast, define
\begin{equation}
    \ket{\Lambda_K}
    =
    \frac{
        \sum_{i=1}^K\sqrt{\rho}\ket{g_i}\otimes\ket{i}_R
    }{
        \sqrt{\sum_{i=1}^K q_i}
    }.
\end{equation}
This state is distributed according to
${\rm Scr}(\rho\otimes I_K/K)$ under the globally tilted measure
\begin{equation}
    \frac{\mathrm d\Pr_{\rm glob}}{\mathrm d\Pr_0}
    =
    \frac{1}{K}\sum_{i=1}^K q_i,
\end{equation}
and its reduced state is
\begin{equation}
    \omega_{\rm enl}
    =
    \tr_R\dyad{\Lambda_K}
    =
    \frac{XX^\dagger}{\tr(X^\dagger X)}.
\end{equation}
The measures $\Pr_{\rm prod}$ and $\Pr_{\rm glob}$ are not
identical. However, writing $\tau=\tr(\rho^2)$,
\begin{equation}
    \left\|\Pr_{\rm prod}-\Pr_{\rm glob}\right\|_{\rm TV}
    \leq
    \frac{1}{2}\sqrt{(1+\tau)^K-1}
    +
    \frac{1}{2}\sqrt{\frac{\tau}{K}},
\end{equation}
so they become asymptotically equivalent whenever
$K\tr(\rho^2)\to0$. If, in addition, the Gram-matrix concentration
conditions of Lemma~\ref{lem:gram} hold, then $X^\dagger X\approx I_K$ under the relevant measures, so that $D_X\approx I_K$ and
$\tr(X^\dagger X)\approx K$. Under these combined conditions, $\omega_{\rm code} \approx \omega_{\rm enl}$. Thus, in the regime where both the measure-comparison and
Gram-concentration errors vanish, mixing over $K$ approximately
orthogonal Scrooge codewords is asymptotically equivalent to appending
a $K$-dimensional reference register to the complement of any subsystem
$A$. In particular, $\rho_A^{\rm mix}
    \approx
    \tr_{\bar A R}(
    \dyad{\Lambda_K})$, where $\ket{\Lambda_K}$ is Scrooge-random with parent density matrix
$\rho\otimes I_K/K$. At infinite-temperature,
$\rho=I_d/d$, the Scrooge ensemble reduces to the Haar measure and
$\rho\otimes I_K/K=I_{dK}/(dK)$, recovering the previous result.
 
However, away from $I_d/d$, the subsystem entropy of a
Scrooge-random state is not determined only by the dimensions $d_A, d_{\bar A}$. It also depends on the structure of $\rho$, including its eigenbasis relative to the bipartition $A |\bar A$. Thus a simple Page-like closed-form expression for general $\rho$ does not exist. Let \begin{equation}
    \chi_n^{\rm Scr}(\rho)
\coloneqq
\mathbb E_{\phi\sim{\rm Scr}(\rho)}
\left[\dyad{\phi}\right]^{\otimes n}
\end{equation} 
denote the $n$-th moment operator of the Scrooge ensemble. Formulae for $\chi_n^{\rm Scr}(\rho)$ were derived in \cite{mark_maximum_2024}.  The subsystem moments are then obtained by the swap trick, 
\begin{equation}
    M_n^{\rm Scr}(\rho;A)
=
\tr\left(
\left(F_A^{(n)}\otimes I_{\bar A}^{\otimes n}\right)
\chi_n^{\rm Scr}(\rho)
\right),
\end{equation}
where $F_A^{(n)}$ cyclically permutes the $n$ replicas of subsystem $A$. The averaged von Neumann entropy is formally obtained by the replica limit
\begin{equation}
    \overline S_A^{\rm Scr}(\rho)
=
-\frac{1}{\ln 2}\left.\partial_n M_n^{\rm Scr}(\rho;A)\right|_{n=1}.
\end{equation}

Therefore, for the mixed Scrooge code, the analogue of
the Haar Page benchmark is $\overline S_A^{\rm Scr}\!\left(\rho\otimes \frac{I_K}{K}\right)$ where the complement of $A$ is enlarged from $\bar A$ to $\bar A R$, with $R\simeq \mathbb C^K$.

When $\rho$ is specialized to a Gibbs state,
$\rho=g_\beta\coloneqq e^{-\beta H}/\tr(e^{-\beta H})$, the Scrooge construction
is closely related to, but distinct from, canonical thermal pure quantum
(cTPQ) states \cite{sugiura_canonical_2013}. Both constructions use the
thermal filtering map
\begin{equation}
    \ket{\psi}
    \longmapsto
    \frac{\sqrt{g_\beta}\ket{\psi}}
    {\sqrt{\bra{\psi}g_\beta\ket{\psi}}},
\end{equation}
but cTPQ states take $\ket{\psi}$ to be Haar-random, whereas the Scrooge
ensemble additionally reweights the Haar measure by
$d\bra{\psi}g_\beta\ket{\psi}$. Thus the two ensembles are not identical
at finite system size. However, this reweighting becomes asymptotically
negligible when $\mathrm{tr}(g_\beta^2)\to0$, so finite-temperature Page curves
derived for cTPQ states provide an asymptotic benchmark for the
$g_\beta$-Scrooge ensemble in the high-temperature regimes we consider here. Previous works used cTPQ states to derive
analytic R\'enyi Page-curve formulas and to obtain the von Neumann Page
curve via replica methods in the high-temperature regime, supported by numerical comparisons
\cite{nakagawa_universality_2018,fujita_page_2018}. Motivated by the finite-temperature Page curves of cTPQ
states~\cite{nakagawa_universality_2018,fujita_page_2018}, we model the
enlarged $g_\beta$-Scrooge state by a uniform thermal entropy density
$s(\beta)$, measured in bits per site. At leading order, the entropy of $B$ is
\begin{equation}
    S(B)
    \simeq
    \min
    \left\{
        s(\beta)(1-p)N,\,
        s(\beta)pN+rN
    \right\}.
    \label{eq:finite-temperature-Page-branches}
\end{equation}
The two terms exchange dominance when $s(\beta)(1-p)N
    =
    s(\beta)pN+rN$, giving
\begin{equation}
    p_\star
    =
    \frac{s(\beta)-r}{2s(\beta)}.
\end{equation}
This threshold is meaningful for $0\le r<s(\beta)$. Assuming that the finite-temperature Page correction remains $O(1)$
in absolute entropy near the branch crossing, the transition is rounded
over a window $p-p_\star
    =
    O\left(
        1/(s(\beta)N)
    \right)$. For fixed $\beta$, $s(\beta)$ is independent of $N$, so this
corresponds to the horizontal scaling exponent $\nu=1$. The mutual
information remains $O(1)$ within this window, giving
$\mu=0$.

\section{Error Correction Subject to Conservation Constraints}
\label{sec:conservation_constraints}
A natural obstruction to thermalization acting as an error correcting code arises when the codewords differ in conserved quantities, as illuminated by Observation 1. Codewords with different energies generally equilibrate to thermal states with different effective temperatures, allowing the environment $E$ with access to an erased subsystem to retain information about the logical state through measurements of local energy density. 

In this section, we characterize the limits of error correction in thermalizing dynamics by determining when such information leakage occurs. We first compare the full quantum mutual information $I(E:R)$ with the information accessible through measurements of the conserved quantity on $E$. For energy-separated codewords, we show that this \textit{measurement-restricted mutual information} reduces to a Holevo information~\cite{holevo_bounds_1973}, while its fully classical component is the Jensen-Shannon divergence between the codeword-dependent subsystem energy distributions. We then use a high-temperature Gaussian model to determine how this classical leakage scales with the codeword energy separation in the thermodynamic limit. Finally we extend this analysis to several conserved quantities and examine how the leakage and thus error correcting regime depends on the specific noise channel as well.

\subsection{Single Conserved Quantity}\label{sec:single-charge}

We first isolate the information leakage associated with a single conserved quantity. Consider a two-dimensional thermalization code  $\mathcal{C}_t = \text{span}\{\ket{\psi_1(t)}, \ket{\psi_2(t)}\}$ where the codewords have different conserved energies. For concreteness, we focus on energy, though the discussion extends straightforwardly to other conserved quantities (e.g., charge). After shifting the zero of energy, we write $\langle H\rangle_{\psi_{1,2}(t)} = \pm \Delta E_N/2$, with respect to the encoding Hamiltonian $H$ at system size $N$. 

As in the decoupling setup of Section~\ref{sec:aqec_decoupling}, we decompose the physical system as $S=EB$, where $E$ is the subsystem accessible to the environment and $B$ is its complement. For the two-dimensional code, the encoded purification is
\begin{equation} \label{eq:decoup_enc_pur_init}
    \ket{\Psi(t)}_{RS}
=
\frac{1}{\sqrt{2}}
\sum_{i=1}^{2}
\ket{i}_R\ket{\psi_i(t)}_S,
\qquad
\sigma_{RS}(t)
:=
\dyad{\Psi(t)}_{RS}.
\end{equation}
The corresponding state on the reference and environment, $RE$, is 
\begin{equation}
\sigma_{RE}(t)
=
\tr_B \sigma_{RS}(t)
=
\frac{1}{2}
\sum_{i,j=1}^{2}\dyad{i}{j}\otimes
\sigma_E^{(ij)}(t),
\end{equation}
where $\sigma_E^{(ij)}(t)
:=
\tr_B
\left(
\dyad{\psi_i(t)}{\psi_j(t)}
\right)$. Let $H_E$ denote the Hamiltonian restricted to $E$, and assume for simplicity that its spectrum is nondegenerate, $H_E = \sum_e e \Pi_e$, $\Pi_e :=\dyad{e}$. Resolving the energy of $E$ is equivalent to applying the dephasing channel $\mathcal D_E(\cdot) = \sum_e \Pi_e(\cdot)\Pi_e$. We define the \textit{measurement-restricted mutual information}, $I(E:R|\{\Pi_e\}):= I(E:R)_{
(\operatorname{id}_R \otimes \mathcal D_E)(\sigma_{RE})
}$. Because $\mathcal D_E$ is a local channel on $E$, data processing gives, $I(E:R) \ge I(E:R|\{\Pi_e\})$. 
Applying the local energy dephasing channel to $E$ gives
\begin{align}
\left(
\operatorname{id}_R\otimes\mathcal D_E
\right)
\bigl(\sigma_{RE}(t)\bigr)
&=
\frac{1}{2}
\sum_{i,j=1}^{2}
\dyad{i}{j}_R
\otimes
\sum_e
\Pi_e\sigma_E^{(ij)}(t)\Pi_e
\\
&=
\sum_e
p(e)
\sigma_R^{(e)}(t)\otimes\dyad{e}_E,
\end{align}
where,
\begin{align}
p(e)
:=
\frac{1}{2}
\sum_{i=1}^{2}
\bra{e}\sigma_E^{(ii)}(t)\ket{e},\qquad
\sigma_R^{(e)}(t)
:=
\frac{1}{2p(e)}
\sum_{i,j=1}^{2}
\bra{e}\sigma_E^{(ij)}(t)\ket{e}
\dyad{i}{j}_R.
\end{align}
The dephased state is therefore classical on $E$ and quantum on $R$. Consequently, $I(E:R|\{\Pi_e\})$ is the Holevo information \cite{holevo_bounds_1973} of the ensemble
$\{p(e),\sigma_R^{(e)}\}_e$. This quantity captures all correlations with the reference that remain after coherences between distinct subsystem-energy sectors have been removed.

To isolate the purely classical information carried by the codeword-dependent energy distribution, we additionally dephase $R$ in the codeword-label basis $\mathcal D_R(\cdot)
:=
\sum_{i=1}^{2}
\dyad{i}(\cdot)\dyad{i}$. The resulting classical state is
\begin{equation}
    \left(
\mathcal D_R\otimes\mathcal D_E
\right)
\bigl(\sigma_{RE}(t)\bigr)
=
\sum_{i,e}
p(i,e)
\dyad{i}_R\otimes\dyad{e}_E, \qquad p(i,e)
=
\frac{1}{2}
\bra{e}\sigma_E^{(ii)}(t)\ket{e}.
\end{equation}
Since the codewords occur with equal probability $p(i) = 1/2$, their conditional subsystem energy distributions are $p^{(i)}(e)
:=
p(e\mid i)
=
\bra{e}\sigma_E^{(ii)}(t)\ket{e}$. Each $p^{(i)}$ is a probability distribution over the subsystem-energy outcomes $e$, giving the distribution of local energies measured on $E$ when the codeword is $\ket{\psi_i(t)}$. The corresponding classical mutual information is the equal-prior Jensen--Shannon divergence (JSD), 
\begin{align} \label{eq:MI-is-JS}
    I(E:R)_{
(\mathcal D_R\otimes\mathcal D_E)(\sigma_{RE})
}
=
H\left(
\frac{p^{(1)}+p^{(2)}}{2}
\right)
-
\frac{1}{2}H(p^{(1)})
-
\frac{1}{2}H(p^{(2)}) =  \mathrm{JSD}(p^{(1)},p^{(2)}).
\end{align}
Data processing gives us $I(E:R) \ge I(E:R|\{\Pi_e\}) \ge \mathrm{JSD}(p^{(1)},p^{(2)})$. The full mutual information measures all the information about the logical qubit available to $E$, the measurement-restricted mutual information retains only the information accessible after resolving the subsystem energy, and the JSD retains only the classical correlation between the energy outcome and the codeword label. The numerical agreement of these three quantities (when $p< 1/2$) as shown in Fig.~2(b) and  Fig.~2(c) in the main text, therefore indicates that the information available to $E$ is entirely classical, captured by the classical distinguishability of the subsystem-energy distributions.

We now determine how the classical JSD in Eq.~\eqref{eq:MI-is-JS} depends on the
codeword energy separation, writing $\Delta E_N = c N^\kappa$ with $c>0$
independent of $N$.

Let $e$ be the outcome of measuring $H_E$, the part of $H$ supported entirely
within $E$. At high temperature, with exponentially decaying connected
energy-density correlations, the subsystem energy is a sum of many weakly
correlated local terms, so its distribution is asymptotically Gaussian with
variance $\Theta(N)$. Taking $E$ to be $pN$ qubits in a union of $O(1)$
intervals, boundary terms contribute $O(1)$ and the two codewords inherit a
fraction $p$ of the total separation. We therefore model the
codeword-dependent distributions as
\begin{equation}
    p^{(\pm)}(e)
    \approx
    \mathcal N\!\left(\pm\frac{\delta\mu_N}{2},\, \alpha_p N\right),
    \qquad
    \delta\mu_N = p\,\Delta E_N + O(1),
    \qquad
    \alpha_p > 0,
    \label{eq:gaussian-model}
\end{equation}
where we have centered $e$ about the midpoint of the two means, using
invariance of the JSD under a common translation. The common leading variance
is natural for $\Delta E_N = o(N)$, where the codeword energy densities
coincide in the thermodynamic limit.

The JSD between two equal-variance Gaussians depends only on the
signal-to-noise ratio
\begin{equation}
    \lambda_N
    :=
    \frac{\delta\mu_N}{2\sqrt{\alpha_p N}}
    =
    \frac{pc}{2\sqrt{\alpha_p}}\,N^{\kappa-\frac12}\left(1+o(1)\right).
    \label{eq:snr}
\end{equation}
Within the Gaussian approximation,
\begin{equation} \label{eq:Gaussian-JSD}
    \operatorname{JSD}(p^{(+)},p^{(-)})
\approx
\mathcal J(\lambda_N), \qquad \mathcal J(\lambda)
:=
1-
\frac{1}{\sqrt{2\pi}}
\int_{-\infty}^{\infty}
dz
e^{-z^2/2}
\log_2
\left(
1+e^{-2\lambda z-2\lambda^2}
\right).
\end{equation}

For $\kappa < 1/2$, $\lambda_N \rightarrow 0$. Expanding around $\lambda = 0$, $\mathcal J(\lambda)
=
\frac{\lambda^2}{2\ln2}
+
O(\lambda^4)$ and therefore, 
\begin{equation}
    \operatorname{JSD}(p^{(+)},p^{(-)})
=
\frac{p^2c^2}
{8\alpha_p\ln2}
N^{2\kappa-1}(1+o(1)).
\end{equation} The classical information available to $E$ therefore vanishes in the thermodynamic limit whenever $\Delta E_N=o(\sqrt N)$. For $\kappa = 1/2$, the signal-to-noise ratio approaches the finite constant $\lambda_\star
=
\frac{pc}{2\sqrt{\alpha_p}}$ so that $\operatorname{JSD}(p^{(+)},p^{(-)})
\longrightarrow
\mathcal J(\lambda_\star)
\in(0,1)$. This is the crossover regime in which the $O(\sqrt N)$ separation between the codeword energies is comparable to the $O(\sqrt N)$ thermal fluctuations of the subsystem.
For $\kappa>1/2$, the separation between the two subsystem-energy means grows faster than their $O(\sqrt{N})$ widths. Within the Gaussian model of Eq.~\eqref{eq:gaussian-model}, the overlap between the two distributions therefore vanishes, so the subsystem-energy outcome asymptotically identifies the codeword label and $\operatorname{JSD}(p^{(+)},p^{(-)})
\longrightarrow 1$.

Therefore, within the high-temperature Gaussian model, an energy measurement on a fixed-fraction subsystem reveals vanishing information about the codeword label when
$\Delta E_N=o(\sqrt N)$, finite information when
$\Delta E_N=\Theta(\sqrt N)$, and asymptotically one bit when
$\Delta E_N\gg\sqrt N$. More general subsystem geometries modify the nonuniversal
signal-to-noise prefactor but preserve this crossover whenever the
subsystem mean separation remains $\Theta(\Delta E_N)$ and its variance remains $\Theta(N)$. Combined with the numerical agreement $I(E:R) \simeq I(E:R|\{\Pi_e\}) \simeq \mathrm{JSD}(p_1,p_2)$ below the information transition, this supports the conclusion that the observed information gained by the environment is governed by classical distinguishability of the subsystem-energy distributions.

\subsection{Multiple Conserved Quantities}

The single-charge analysis extends to several simultaneously measurable
conserved quantities, for example a $U(1)$-symmetric spin chain with conserved
energy and total magnetization $S^z$. We consider a two-dimensional codespace
and a fixed number, $m=O(1)$, of additive conserved quantities
$\mathbf Q := (Q^{(1)},\ldots,Q^{(m)})$ with $Q^{(a)}=\sum_{j=1}^N q_j^{(a)}$, satisfying
$[Q^{(a)},H]=0$ and $[Q^{(a)},Q^{(b)}]=0$. Assume the restrictions $Q_E^{(a)}$ commute and,
for simplicity, have a nondegenerate joint spectrum
$Q_E^{(a)}=\sum_{\mathbf q}q^{(a)}\Pi_{\mathbf q}$, with $\mathbf q=(q^{(1)},\ldots,q^{(m)})$
the joint subsystem-charge outcome.

Replacing the energy dephasing channel $\mathcal D_E$ by its joint-charge
analogue $\mathcal D_E^{\mathbf Q}(\cdot):=\sum_{\mathbf q}
\dyad{\mathbf q}(\cdot)\dyad{\mathbf q}$, the derivation of
Sec.~\ref{sec:single-charge} goes through verbatim: the dephased state is
classical on $E$ and quantum on $R$, $I(E:R\mid\{\Pi_{\mathbf q}\})$ is the
corresponding Holevo information, and dephasing $R$ in the codeword-label
basis gives
\begin{equation}
    I(E:R)
    \ge
    I(E:R\mid\{\Pi_{\mathbf q}\})
    \ge
    \operatorname{JSD}(p^{(1)},p^{(2)}),
    \qquad
    p^{(i)}(\mathbf q)
    :=
    \bra{\mathbf q}\sigma_E^{(ii)}\ket{\mathbf q}.
\end{equation}
A degenerate joint spectrum changes nothing: replacing
$\mathcal D_E^{\mathbf Q}$ by the measurement channel
$\mathcal M_E^{\mathbf Q}(\cdot):=\sum_{\mathbf q}
\tr(\Pi_{\mathbf q}(\cdot))\dyad{\mathbf q}$ leaves the distributions
$p^{(i)}(\mathbf q)$, and hence the analysis below, unchanged.

The Gaussian model generalizes in the obvious way. With
$\Delta\mathbf Q_N = N^\kappa\mathbf d + o(N^\kappa)$, $\mathbf d\neq\mathbf 0$,
the scalar Gaussians of Eq.~\eqref{eq:gaussian-model} become multivariate,
\begin{equation}
    p^{(\pm)}(\mathbf q)
    \approx
    \mathcal N\!\left(\pm\frac{\delta\boldsymbol\mu_N}{2},\, N\Sigma_p\right),
    \qquad
    \delta\boldsymbol\mu_N = p\,\Delta\mathbf Q_N + o(N^\kappa),
    \qquad
    \Sigma_p\succ0,
\end{equation}
with a common leading covariance $\Sigma_p$, and the signal-to-noise ratio is
replaced by the Mahalanobis distance between the mean vectors \cite{de_maesschalck_mahalanobis_2000},
\begin{equation}
    \lambda_N^2
    :=
    \frac{1}{4N}\,
    \delta\boldsymbol\mu_N^{\top}\Sigma_p^{-1}\delta\boldsymbol\mu_N
    =
    \frac{p^2}{4}\,
    N^{2\kappa-1}\,
    \mathbf d^{\top}\Sigma_p^{-1}\mathbf d
    \left(1+o(1)\right).
\end{equation}

The multivariate problem nonetheless reduces to a scalar one. The log-likelihood ratio,
$\log\!\left[p^{(+)}(\widetilde{\mathbf q})/p^{(-)}(\widetilde{\mathbf q})\right]
= \tfrac1N\delta\boldsymbol\mu_N^{\top}\Sigma_p^{-1}\widetilde{\mathbf q}$,
depends on the centered outcome $\widetilde{\mathbf q}$ only through
\begin{equation}
    z
    :=
    \frac{\delta\boldsymbol\mu_N^{\top}\Sigma_p^{-1}\widetilde{\mathbf q}}
    {\sqrt{N\,\delta\boldsymbol\mu_N^{\top}\Sigma_p^{-1}\delta\boldsymbol\mu_N}}
    \sim
    \mathcal N(\pm\lambda_N, 1),
\end{equation}
so the $m$-dimensional problem reduces to the scalar one with the same
$\mathcal J(\lambda_N)$ of Eq.~\eqref{eq:Gaussian-JSD}. The three regimes of
Sec.~\ref{sec:single-charge} therefore carry over unchanged, with the
crossover still at $\kappa=1/2$: a fixed number of commuting charges replaces
the scalar signal-to-noise ratio by the Mahalanobis distance, but does not
move the fluctuation scale. More general geometries alter the coefficient
relating $\delta\boldsymbol\mu_N$ to $\Delta\mathbf Q_N$, and the crossover
persists whenever $\lambda_N^2 = \Theta(N^{2\kappa-1})$.

When the conserved quantities do not commute, as for the three components of total spin in an $SU(2)$-symmetric system, they cannot be assigned simultaneous values, so there is no joint eigenbasis and no unique associated JSD. One must specify what the environment measures: a commuting subset reduces to the analysis above, while a general subsystem POVM gives measurement-dependent outcome distributions, leading to a minimum-error discrimination or accessible-information problem. In this case, both the accessible signal and its fluctuations depend on the charge algebra and the chosen measurement, and no universal crossover follows from the conserved-charge expectation values alone.

\subsection{Basis-Dependent Dephasing}

\begin{figure}    \includegraphics[width=\textwidth]{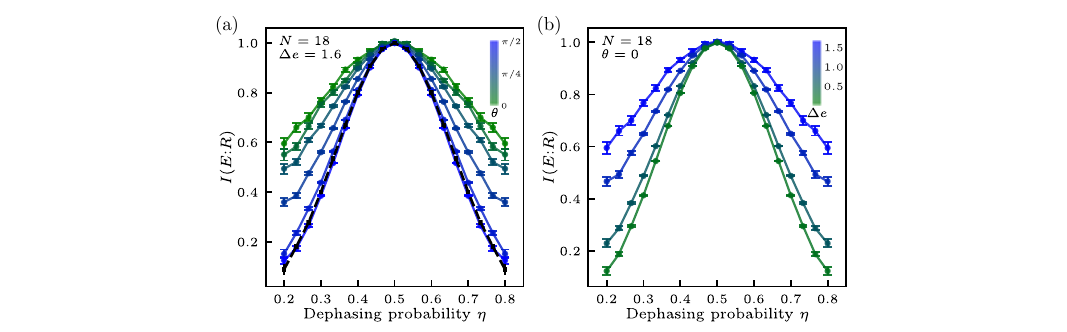}[h!]
  \caption{Basis dependence of information leakage. (a) $I(E:R)$ versus Pauli-error probability $\eta$ for mixed-field Ising codewords at $N=18$ and fixed energy-density separation $\Delta e=1.6$, with the dephasing axis varied from $\sigma^y$ $(\theta=0)$ to $\sigma^z$ $(\theta=\pi/2)$. The black dashed curve shows the equal-energy baseline, $\Delta e=0$. (b) $I(E:R)$ versus $\Delta e$ for $\sigma^y$-aligned dephasing $(\theta=0)$.}
  \label{fig:supp_basis_dep_deph}
    
\end{figure}

The preceding subsections highlight that differences in conserved quantities, in certain regimes, can be a mechanism for information leakage about the initial logical state to the environment $E$. However, here we show that the leakage must also be dependent on the noise channel, as well. Intuitively, if the channel implemented by the environment -- the complementary channel to the noise channel -- is insensitive to the conserved quantity, then the environment learns nothing about it, and the code remains error correcting irrespective of the magnitude of codeword charge differences.

To probe this dependence, we use the same decoupling setup, as described in Section~\ref{sec:aqec_decoupling}, for the two-dimensional codespace and apply an independent single-site dephasing channel whose axis is rotated in the $y$-$z$ plane,

\begin{equation} \label{eq:dephasing-y-z-rot}
    \mathcal{N}_{\theta, \eta}(\rho)
    = (1-\eta)\rho + \eta\,\sigma_\theta\,\rho\,\sigma_\theta,
    \qquad
    \sigma_\theta = \cos(\theta)\,\sigma_y + \sin(\theta)\,\sigma_z.
\end{equation}
Here, $\eta \in [0,1]$ is the probability of applying a Pauli operator $\sigma_\theta$, while $\theta$ determines its axis. The channel suppresses coherences in the $\sigma_\theta$ eigenbasis by a factor of $1-2\eta$, and in particular $\eta = 1/2$ corresponds to full dephasing. The full $N$-qubit noise channel is $\mathcal{N}^{(N)}_{\theta,\eta}
=
\bigotimes_{i=1}^N
\mathcal{N}^{(i)}_{\theta,\eta}$.A Stinespring dilation for the single-site channel is

\begin{equation}
    U_{\theta,\eta} ^{(i)} :=
\sqrt{1-\eta}\;
I_i\otimes\ket{0}_{E_i}
+
\sqrt{\eta}
\sigma_\theta^{(i)}\otimes\ket{1}_{E_i},
\end{equation}

and the product-channel dilation is $U_{\theta,\eta}^{(N)}
:=
\bigotimes_{i=1}^{N}
U_{\theta,\eta}^{(i)}$. Tracing out the environment reproduces the noise channel $\mathcal N_{\theta,\eta}^{(N)}(\rho)
=
\tr_E
\big(
U_{\theta,\eta}^{(N)}
\rho
U_{\theta,\eta}^{(N)\dagger}
\big)$. The environment coherently records the pattern of applied $\sigma_\theta$ operators. We quantify information leakage by the mutual information $I(E:R)$ obtained by applying $\operatorname{id}_R\otimes U_{\theta,\eta}^{(N)}$ to the encoded state $\sigma_{RS}(t)$ defined in Eq.~\eqref{eq:decoup_enc_pur_init}.

We evaluate this protocol for the mixed-field Ising model introduced above in Section~\ref{sec:numerical_therm_hamiltonians}. The Hamiltonian
contains the uniform field contribution $g\sum_{i=1}^{N}\sigma_i^y$, so a $\sigma^y$-aligned channel is directly sensitive to this
local component of the energy density. Varying the dephasing axis $\sigma_\theta$ in Eq.~\eqref{eq:dephasing-y-z-rot} therefore allows us to tune the sensitivity of the complementary channel to the local energetic structure distinguishing the two codewords. 

As shown in Fig.~\ref{fig:supp_basis_dep_deph}(a), $I(E:R)$ is largest near $\theta = 0$, where the dephasing axis is aligned with $\sigma^y$. By contrast, for $\theta = \pi/2$, corresponding to $\sigma^z$-aligned dephasing, $I(E:R)$ is indistinguishable from the equal-energy baseline, $\Delta e = 0$. Thus, even with large energy separation, a noise channel that is insensitive to the relevant local energetic component does not reveal this distinction to the environment. 

We further test this by varying the energy-density separation between the codewords. For $\theta=0$, where the complementary channel is sensitive to the dominant energy contrast, $I(E:R)$ increases with $\Delta e$, as shown in Fig.~\ref{fig:supp_basis_dep_deph}(b). Larger energy separation therefore produces more distinguishable environmental states and makes more information about the logical label accessible to the environment. These results demonstrate that conserved-quantity differences generate leakage only when the complementary channel resolves the degrees of freedom exhibiting those differences.

\bibliography{references}